%% file: main_short.tex
\def\debug{1}

\documentclass[aps,amsmath,amssymb,reprint,superscriptaddress]{revtex4-2}
\usepackage{blindtext}
\usepackage{lipsum}
\usepackage{amsmath}
\usepackage{amsfonts}
\usepackage{amssymb}
\usepackage{comment}
\usepackage{graphicx}
\usepackage{dcolumn}
\usepackage{bm}
\usepackage{physics}
\usepackage[normalem]{ulem}
\usepackage{xcolor}
\usepackage{xr-hyper} 
\usepackage{xr} 
\usepackage[pdfborder={0 0 0}]{hyperref}
\usepackage{csquotes}
\usepackage{enumitem}
\usepackage{algorithm}
\usepackage{algpseudocode}

\usepackage{framed} 

\newcounter{algo}[section] 
\renewcommand{\thealgo}{\arabic{algo}}

\input{commands}

\usepackage[english]{babel}

\usepackage{amsthm}
\usepackage{bbm}
\usepackage{tocloft}
\usepackage{float}

\newtheorem{definition}{Definition}
\newtheorem{proposition}{Proposition}
\newtheorem{theorem}{Theorem}

\newtheorem{lemma}{Lemma}

\renewcommand{\thesection}{\Roman{section}} 
\begin{document}
\title{Linear Readout of Neural Manifolds with Continuous Variables}
\author{Will Slatton}
\thanks{These authors contributed equally to this work.}
\affiliation{Department of Physics and Kempner Institute, Harvard University, Cambridge, MA 02138, USA}
\author{Chi-Ning Chou}
\thanks{These authors contributed equally to this work.}
\affiliation{Center for Computational Neuroscience, Flatiron Institute, New York, NY 10010, USA}
\author{SueYeon Chung}
\affiliation{Department of Physics and Kempner Institute, Harvard University, Cambridge, MA 02138, USA}
\affiliation{Center for Computational Neuroscience, Flatiron Institute, New York, NY 10010, USA}

\date{\today}

\begin{abstract}
Brains and artificial neural networks compute with continuous variables such as object position or stimulus orientation.
However, the complex variability in neural responses makes it difficult to link internal representational structure to task performance.
We develop a statistical-mechanical theory of regression capacity that relates linear decoding efficiency of continuous variables to geometric properties of neural manifolds.
Our theory handles complex neural variability and applies to real data, revealing increasing capacity for decoding object position and size along the monkey visual stream.

\end{abstract}

\maketitle

\section{Introduction}
Navigating through space requires estimating continuous variables like spatial position from highly variable sensory inputs. For a neural population to usefully represent position, it must enable a downstream readout to decode this information while filtering out irrelevant stimulus features. A similar requirement extends to neural representations of any continuous variables.

Classic neuroscience experiments have revealed neurons tuned to continuous variables, including stimulus orientation in visual cortex ~\cite{hubel1965receptive}, spatial location in the hippocampus ~\cite{hafting2005microstructure,o1971hippocampus}, head direction~\cite{taube1990head}, and movement direction~\cite{georgopoulos1982relations}. Moving beyond single neuron tuning statistics \cite{seung1993simple}, recent advances in multi-neuron recordings have uncovered population-level structures shaped by mixed selectivity ~\cite{rigotti2013importance} and correlated variability ~\cite{averbeck2006neural}. These observations call for a theoretical framework that links the geometry of high-dimensional population codes to decoding efficiency of continuous variables.

Recently, neural manifolds—collections of neural population responses grouped by task-relevant variables—have emerged as a powerful abstraction for capturing such complex, structured variability~\cite{chung2021neural}.  Manifold capacity theory~\cite{chung2018classification,wakhloo2023linear,GLUE}, a recent theoretical framework using methods from statistical physics, quantifies how well a downstream neuron can distinguish discrete categories (e.g., pictures of cats versus pictures of dogs) based on the geometry of manifold-like neural activity patterns—also known as neural manifolds. This framework has been applied to both biological ~\citep{froudarakis2020object,yao2023transformation,paraouty2023sensory,hu2024representational} and artificial neural networks ~\citep{cohen2020separability,mamou2020emergence,stephenson2021geometry,dapello2021neural,kuoch2024probing,kirsanov2025geometrypromptingunveilingdistinct,chou2025featurelearninglazyrichdichotomy,chou2026diagnosing}, revealing how the shape and arrangement of neural manifolds determine the ability of downstream neurons to perform classification.

However, many tasks involve continuous variables (regression) rather than discrete classes (classification). Unlike for classification, no general theory connects manifold geometry to regression performance. Existing machine learning approaches often use strong parametric assumptions, offering little geometric insight into how population variability affects linear decodability~\cite{thrampoulidis2015regularized,barbier2019optimal,canatar2021spectral,canatar2024statistical,Bolle1993}.

\begin{figure}
    \centering
    \includegraphics[width=0.9\linewidth]{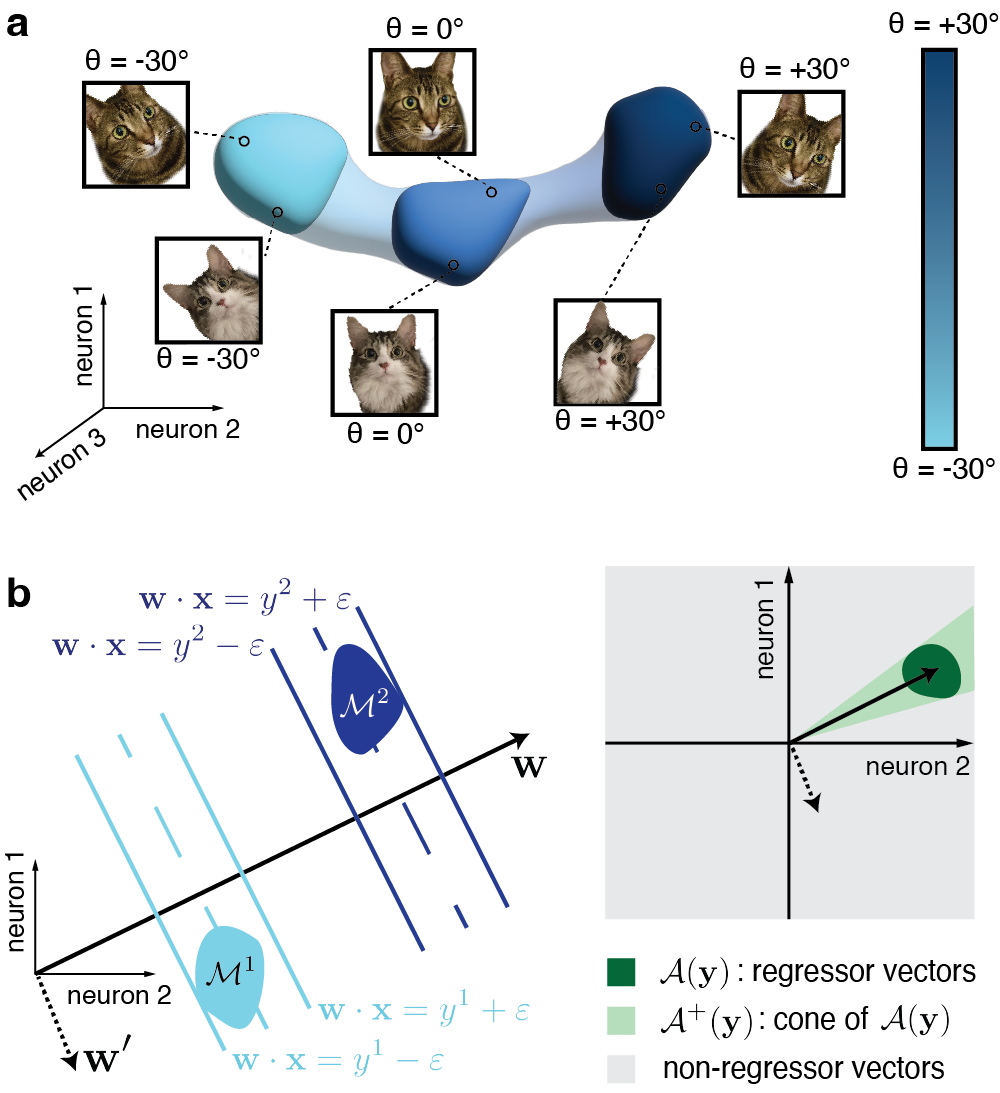}
    \caption{
    \textbf{a}, Neural manifolds arising from neural responses to images with orientation $\theta$ in $P=3$ disjoint stimulus intervals, recorded from $N=3$ neurons.
    \textbf{b}, The efficiency of manifold organization is assessed by testing if labels can be linearly read out. \textit{Left}: manifolds are $\varepsilon$-regressible if they lie within $-\varepsilon \leq \bw \cdot \bx - y^\mu \leq \varepsilon$, where $\bw$ is the regression vector. $\bw'$ is an example of non-regressor vector and $\cdot$ stands for inner product. \textit{Right}: Visualizing the solution space of regressor vectors.
    }
    \label{fig1}
\end{figure}

In this letter, we extend manifold capacity theory to regression. We develop a statistical mechanical theory for \emph{regression capacity}—a measure of linear readout performance for continuous variables (see \autoref{fig1})—and derive explicit formulas applicable to both synthetic models and data. For synthetic models such as spherical manifolds, we find closed-form solutions that link manifold geometry to readout performance. For example, regression capacity increases as the dimensionality and radius of the manifolds decrease.
We then apply our framework to a visual neuroscience dataset~\cite{Majaj2015}, demonstrating that representations of continuous object features become more efficient along the macaque ventral stream (\autoref{sec:hvm}). Our framework lays the foundation for identifying the geometric principles that govern the linear decodability of continuous variables in complex, high-dimensional datasets from neuroscience and machine learning.

\section{Theoretical framework}
Consider decoding a continuous variable (e.g., stimulus orientation;~\autoref{fig1}a) based on the activity of $N$ neurons. We consider $P$ discrete levels of the continuous variable, with labels $\{y^\mu \in \Real\}_{\mu=1}^P$. The collection of neural population responses mapped to a given bin, each a vector in the $N$-dimensional activity space, forms a manifold $\cM^\mu \subseteq \Real^N$.

By analogy to the classification theory~\cite{chung2018classification,GLUE}, we define the system's \textit{load} as the ratio $P/N$. As the load increases, the system becomes \enquote{less linearly regressible}—that is, the volume of admissible linear readout weights $\bm{w} \in \Real^N$ decreases. This motivates defining the regression capacity $\alpha$ of the system as \textbf{the maximal load under which a linear regressor still exists}. 
In the following, we study Gaussian mean-field models~\cite{chung2018classification,wakhloo2023linear,GLUE} for analytical characterization and instance-based models~\cite{GLUE} for data analysis.

\subsection{Mean-field theory for analytical study}\label{sec:mean-field capacity}
In the \textit{Gaussian mean-field models}~\cite{chung2018classification,wakhloo2023linear,GLUE}, we are interested in the thermodynamic limit ($P,N\to\infty$) to enable analytical characterization. We parameterize manifolds as $\cM^\mu = \{\bx^\mu=\bu^\mu_0+\sum_{i=1}^{D} s^\mu_i\bu^\mu_i\,:\,\bs^\mu\in\cS\}$ where $\bu^\mu_0\in\Real^N$ is the center of the $\mu$-th manifold, $\bu^\mu_1,\dots,\bu^\mu_{D}$ is a basis for the intrinsic part of the manifold, and $\cS\subset\Real^{D}$ is a $D$-dimensional convex set defining the manifold shape. For example, when the manifold is a $D$-dimensional L2 sphere with radius $R$, then $\cS = \{\bs\in\Real^D\, :\, \|\bs\|_2=R \}$.

The manifolds in the mean-field model are randomly embedded in an ambient space $\mathbb{R}^N$, with the embedding directions playing the role of quenched disorder in spin glass theory. The manifold bases $\bU=\{\bu_i^\mu\}_{\mu,i}$ are sampled from a correlated Gaussian distribution
\begin{equation}\label{eq:mean field model}
p(\bU) \propto \exp\left(-\frac{N}{2}\sum_{\mu,\nu,i,j,\ell}(\Sigma^{-1})^{\mu,i}_{\nu,j}u^\mu_{i,\ell}u^\nu_{j,\ell}\right)
\end{equation}
where $\Sigma\in\Real^{P(D+1)\times P(D+1)}$ is a covariance tensor, $\mu,\nu$ index the $P$ manifolds, $i,j$ index the $D$ intrinsic coordinates within the $\mu$-th and $\nu$-th manifold, and $\ell$ indexes the $N$ neurons. 
Finally, the manifold labels $\by=\{y^\mu\}$ are sampled from a distribution. While our capacity formula applies to arbitrary label distributions, in~\autoref{sec:synthetic} we focus on the case where the labels are drawn from a Gaussian distribution, which further simplifies the formula for geometric analysis.

To analytically study the regression capacity in the mean-field model, we consider the \textit{Gardner volume}~\cite{gardner1988space}, which quantifies the volume of linear readout vectors $\bm{w}$ that achieve regression error within a tolerance $\varepsilon\geq0$:
\begin{equation}
Z = \int_{\mathbb{R}^N} d\bw \prod_{\mu,\bx^\mu\in\cM^\mu}\Theta(\varepsilon-|\bw\cdot\bx^\mu-y^\mu|).
\end{equation}
The Heaviside function $\Theta(\cdot)$ enforces $|\bw \cdot \bx^\mu - y^\mu| \leq \varepsilon$, where $\bw\cdot\bx^\mu$ denotes the inner product between the two vectors and is the decoded label for $\bx^\mu$ and $y^\mu$ the target label. This approach follows earlier statistical mechanical theories of classification manifold capacity~\cite{chung2018classification,wakhloo2023linear,mignacco2025nonlinear,GLUE} and has been used in related classification settings~\cite{Bolle1993,Schnsberg2021,dahmen2020capacity,Clopath2012}. 

\begin{figure}
    \centering
    \includegraphics[width=\linewidth]{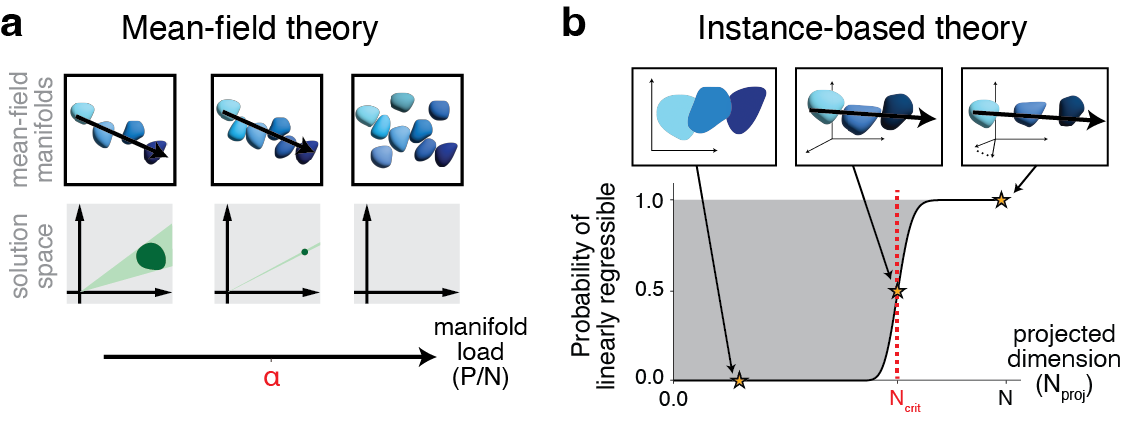}
    \caption{
    \textbf{a}, In mean-field theory, we randomly generate $P$ manifolds in $\Real^N$ (with $P,N\to\infty$), and define capacity as the maximum load $P/N$ while the Gardner volume (dark green area) is not zero.
    \textbf{b}, In instance-based theory, $P, N$ are finite and fixed (e.g., given by the data), and we consider randomly projecting the manifolds from $\Real^N$ to $\Real^{N_\proj}$, define critical dimension $N_\crit$ as the smallest $N_\proj$ with at least 0.5 probability of being linearly regressible (yellow star), and define capacity as $\alpha=P/N_\crit$.
    }
    \label{fig2}
\end{figure}

In the proportional limit where the load $P/N$ is fixed and $P,N\to\infty$, the typical Gardner volume exhibits a phase transition as the load increases, becoming zero above a certain critical load. The capacity $\alpha$ of the mean-field model is defined as this largest load with non-zero Gardner volume. By analyzing the disorder-averaged (over $\bU$ and $\by$) partition function $\overline{\log Z}$, we derive an analytical formula for the capacity:
\begin{equation}\label{eq:mean-field formula main}
\alpha_\mf(\varepsilon) = \lim_{P\to\infty}P\left(\Exp_{\bm{t},\bm{y}}[F_\mf(\bt,\by)]\right)^{-1},
\end{equation}
where $F_\mf(\bt,\by)=\min_{\bm{w}^+ \in \cA_\mf(\bm y)^+} \|\bm{t} - \bm{w}^+\|^2$ is a field related to the projection of $\bt$ to a convex cone $\cA_\mf(\bm y)^+ = \{\tau \bm w : \bm w \in \cA_\mf(\bm y), \tau \geq 0\}$ with $\cA_\mf(\bm y) = \bigcap_{\mu=1}^P \bigcap_{\bm{s} \in \mathcal{S}} \left\{\bm{w} \in \mathbb{R}^{P(D+1)} : \left\vert\bw\cdot((\Sigma^{1/2})^\mu\bs) - y^\mu\right\vert \leq \varepsilon\right\}$ being the set of all admissible linear regressor vectors. Here $(\Sigma^{1/2})^\mu$ denotes the $\mu$-th column block of the square root of the covariance matrix $\Sigma^{1/2}$. See the Supplementary Materials (SM) Section SII~\cite{SM} for details. The mean-field formula Eq.~\eqref{eq:mean-field formula main} enables analytical calculation of regression capacity for the synthetic models studied in~\autoref{sec:examples}.

\subsection{Instance-based theory for data analysis}\label{sec:instance-based capacity}
In the \textit{instance-based model}~\cite{GLUE} for data analysis, the number of experimental conditions $P$ and  recorded units $N$ are finite. Let $\bx_1^\mu,\dots,\bx_M^\mu\in\Real^N$ be the neural activity vectors for the $\mu$-th experimental condition (e.g., $M$ trials with the same stimulus angle $\theta^\mu$ but with other stimulus features potentially varying). The neural manifold for this condition is the set $\cM^\mu:=\{\bx^\mu_i\}_{i=1}^M$ and it has some label $y^\mu \in \Real$ (e.g., $y^\mu=\theta^\mu$ for decoding stimulus angle). See~\autoref{sec:hvm} for an example on a monkey vision dataset.

We define instance-based regression capacity as $P/N_\crit$, where $N_\crit$ is the \textit{critical dimension}—the smallest dimension $N_\proj$ such that, after random projection to an $N_\proj$-dimensional subspace, at least one admissible linear readout exists with probability $\geq 1/2$ (see \autoref{fig2}b). We derive a closed-form estimator for this instance-based capacity:
\begin{equation}\label{eq:alphaib}
\alpha_\ib(\varepsilon) = P\left(\Exp_{\bm{t}}[F_\ib(\bt)]\right)^{-1},
\end{equation}
where $F_\ib(\bt)=\min_{\bm{w}^+ \in \cA_\ib^+} \|\bm{t} - \bm{w}^+\|^2$ is a field related to the projection of $\bt$ to a convex cone $\cA_\ib^+$ generated by set of admissible readout weights $\cA_\ib = \bigcap_{\mu=1}^P \bigcap_{\bm{x}^\mu \in \mathcal{M}^\mu} \left\{\bm{w} \in \mathbb{R}^{N} : \left\vert \bm{w}^\mu\cdot\bx^\mu - y^\mu\right\vert \leq \varepsilon\right\}$.
That is, $\cA_\ib^+ = \{\tau \bm w : \bm w \in \cA_\ib, \tau \geq 0\}$. 
We defer the details and other extensions (e.g., introducing a bias term in the linear regression) to SM Section SI~\cite{SM}. An example of applying the instance-based capacity to real dataset is given in~\autoref{sec:hvm}. We remark that the capacity formula Eq.~\eqref{eq:alphaib} can be empirically estimated in quadratic time (in $N,P$) via a standard QP solver. 

\section{Examples}\label{sec:examples}
We now apply our regression capacity framework to obtain closed-form capacity expressions for a handful of synthetic manifold models and interpret the results in~\autoref{sec:synthetic}. Next, we use instanced-base regression capacity to study a neuroscience dataset in~\autoref{sec:hvm}.

\subsection{Applications to synthetic manifolds}\label{sec:synthetic}
We start by using the mean-field formula Eq.~\eqref{eq:mean-field formula main} to compute capacity for synthetic data models and study how their geometric properties (e.g., correlations, radius, and dimensionality) influence capacity.

\paragraph{Point-like manifolds}\label{sec:point}
First we consider $P$ manifolds, each consisting of a single point. The $P$ points are sampled i.i.d.~from $\bx^\mu \sim \mathcal{N}(0,r^2I_N/N)$ with i.i.d.~labels $y^\mu\sim \mathcal{N}(0, \sigma^2)$ (see~\autoref{fig:points}a). The parameters $r,\sigma$ quantify the overall scale of the points and labels respectively. Starting with the mean-field formula Eq.~\eqref{eq:mean-field formula main}, we calculate the regression capacity of these uncorrelated points:
\begin{equation}\label{eq:uncorr-points-cap}
    \alpha^{-1}(\varepsilon) = 2 \min_{\tau \geq 0} \int_{b/a}^\infty \mathcal{D} z (az-b)^2,
\end{equation}
where $\mathcal{D} z$ is the standard Gaussian measure, $a=\sqrt{1 + \sigma^2\tau^2/r^2}$, $b=\varepsilon\tau/r$ (derived in SM~\cite{SM} Sec.~III), and $\tau$ is a scale factor naturally appears from the derivation.

We note that: (i) $\alpha(0) = 1$, consistent with the fact that $P \leq N$ linearly independent points can be mapped to arbitrary labels; (ii) $\alpha(\varepsilon)$ increases monotonically with tolerance; (iii) varying $r$ or simultaneously scaling $\varepsilon$ and $\sigma$ leaves capacity unchanged. Thus, the regression capacity of uncorrelated points depends only on the rescaled parameter $\varepsilon_{\textsf{equiv}} := \varepsilon/\sigma$, which we call the \textit{equivalent tolerance} (the tolerance of an equivalent model with $\sigma = 1$).

\begin{figure}
    \centering
    \includegraphics[width=1.0\linewidth]{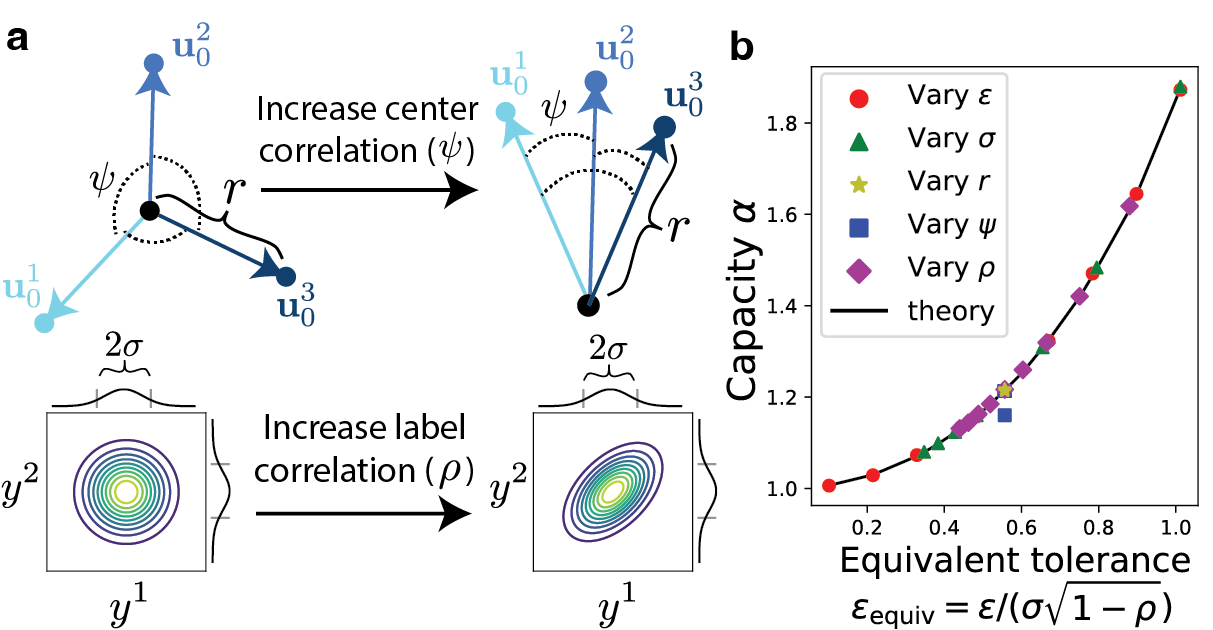}
    \caption{Mean-field point-like manifolds.  
    \textbf{a}, $r$ controls the average norm of the points $\{\mathbf{u}_0^\mu\}$, $\sigma$ controls the scale of the target labels $\{y^\mu\}$, and $\psi,\rho \in [0,1)$ set the correlation strengths of the data points and labels respectively.
    \textbf{b}, Varying one of $\varepsilon, \sigma, r, \psi, \rho$ (others fixed), we numerically estimating capacity with $P=500$ manifolds (see SM~\cite{SM}). The black line shows the theoretical capacity from Equation Eq.~\eqref{eq:uncorr-points-cap} in the $P \to \infty$ limit, where capacity depends monotonically only on the \emph{asymptotically equivalent tolerance} $\varepsilon_\mathrm{equiv} = \varepsilon/(\sigma\sqrt{1-\rho})$.
    }
    \label{fig:points}
\end{figure}

We next introduce correlations among points and labels. The data and label covariances are set to $\Sigma_{\mu,0}^{\nu,0} = r^2[(1-\psi)\delta_{\mu\nu}+\psi]$ and $\Lambda_{\mu}^{\nu} = \sigma^2[(1-\rho)\delta_{\mu\nu}+\rho]$, where $\psi \in [0, 1)$ and $\rho \in [0, 1)$ control correlation strength among data and labels respectively (see~\autoref{fig:points}a). As shown in SM~\cite{SM} Sec.~IV, the resulting capacity is asymptotically equivalent to the uncorrelated case with rescaled center norm $r' = r\sqrt{1-\psi}$ and label scale $\sigma' = \sigma\sqrt{1-\rho}$. Thus, regression capacity again depends only on the equivalent tolerance $\varepsilon_{\textsf{equiv}} = \varepsilon/(\sigma\sqrt{1-\rho})$ in the $P\to\infty$ limit, as verified in~\autoref{fig:points}b.

\begin{figure}
    \centering
    \includegraphics[width=1.0\linewidth]{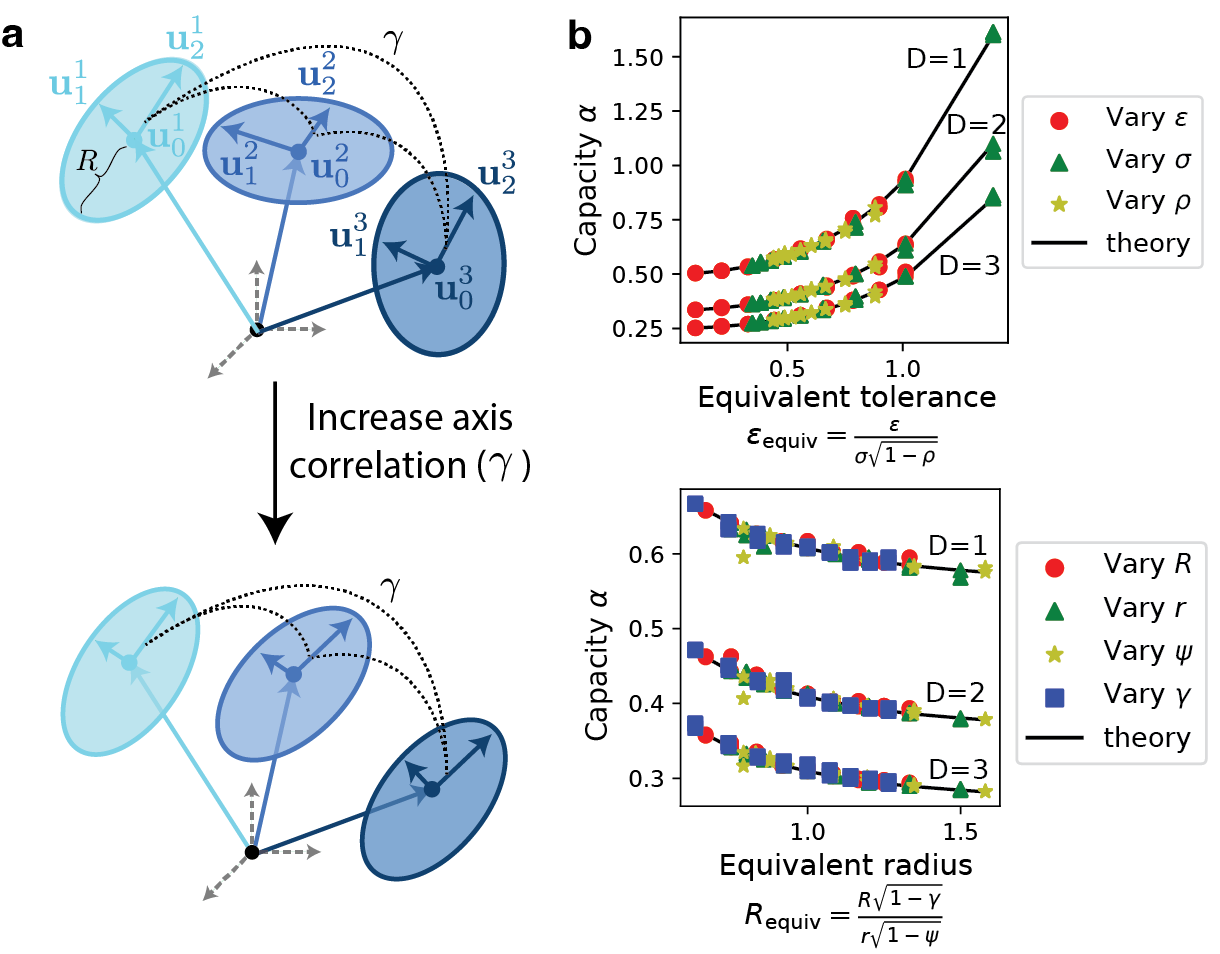}
    \caption{Mean-field sphere-like manifolds.  
    \textbf{a}, Manifold centers and labels are drawn from the correlated points model in~\autoref{fig:points}. $R$ sets the mean radius; $\gamma\in [0,1)$ controls axis correlations. 
    \textbf{b}, Theory–numeric check. Solid lines show the analytic formula Eq.~\eqref{eq:corrspherecap}, and dots indicate numerical estimates with $P=250$ manifolds (see Section VII). Each curve corresponds to a different $D$, with capacity decreasing in $D$.
    \textit{Top}: capacity depends on $\varepsilon, \sigma, \rho$ only through the \emph{asymptotically equivalent tolerance} $\varepsilon_\mathrm{equiv} = \frac{\varepsilon}{\sigma\sqrt{1-\rho}}$ and increases with $\varepsilon_\mathrm{equiv}$. 
    \textit{Bottom}: capacity depends on $R, r, \psi, \gamma$ only through the \emph{asymptotically equivalent radius} $R_\mathrm{equiv} = \frac{R\sqrt{1-\gamma}}{r\sqrt{1-\psi}}$ and decreases with $R_\mathrm{equiv}$.
    }
    \label{fig:corr-spheres}
\end{figure}

\paragraph{Spherical manifolds}\label{sec:sphere}
We next introduce structured variability in the form of low-rank spherical manifolds. We consider $D$-dimensional spherical shapes $\mathcal{S} = \{\bs \in \mathbb{R}^D : \|\bs\|_2 = 1\}$, so that the $\mu$-th manifold is $\mathcal{M}^\mu = \left\{\bu_0^\mu + \sum_{i=1}^D s_i^\mu \bu_i^\mu : s^\mu \in \mathcal{S}\right\}$.
The correlations among manifold centers $\bu_0^\mu$ and axes $\bu_i^\mu$ ($i>0$) are modeled by a correlation tensor $\Sigma^{\nu,j}_{\mu,i}$. We sample $\bu_i^\mu$ as $N$-dimensional Gaussians with $\Exp[\langle \bu_i^\mu, \bu_j^\nu \rangle]=\Sigma_{\nu,j}^{\mu,i}$ where
$$\Sigma_{\mu,i}^{\nu,j} = \begin{cases}
    r^2[(1-\psi)\delta_{\mu\nu}+\psi] & \textrm{if } i = j = 0, \\
    R^2[(1-\gamma)\delta_{\mu\nu}+\gamma] & \textrm{if } i = j > 0, \\
    0 & \textrm{else,}
\end{cases}$$
and sample correlated Gaussian labels $\mathbf{y} \sim \mathcal{N}(0, \Lambda)$ with covariance matrix $\Lambda_\mu^{\nu} = \sigma^2[(1-\rho)\delta_{\mu\nu}+\rho]$. The parameters $r,R,\sigma$ scale the norm of the manifold centers, the size of the spherical variability around each manifold center, and the labels respectively. The parameters $\psi,\rho,\gamma\in[0,1)$ control the respective strengths of the center, label, and axis correlations (\autoref{fig:corr-spheres}a).

We analytically solve the regression capacity of this model of spherical manifolds:
\begin{equation}\label{eq:corrspherecap}
\alpha^{-1}(\varepsilon) = 2\min_{\tau \geq 0} \int_{0}^\infty dx \; \chi_D(x) g(x, \tau, \varepsilon),
\end{equation}
where $g$ can be written in closed-form (see SM~\cite{SM}) and $\chi_D(\cdot)$ is the probability density function of the chi distribution with $D$ degrees of freedom (see Sec.~IV of SM~\cite{SM} for a derivation). The inner function $g(x, \tau, \varepsilon)$ is complex, but it is invariant under several parameter changes. For example, $(\varepsilon,\sigma,\tau)\leftrightarrow(a\varepsilon, a\sigma, \tau/a)$: Invariance to multiplying tolerance, label scale, and readout scale (equivalent to \emph{dividing} $\tau$) by some constant. See SM~\cite{SM} for a complete list of symmetries in this model. From the symmetries of $g(x, \tau, \varepsilon)$, it follows that the capacity depends only on the \textit{asymptotically equivalent tolerance} $\varepsilon_\mathrm{equiv} := \varepsilon/(\sigma\sqrt{1-\rho})$, the \textit{asymptotically equivalent radius} $R_\mathrm{equiv} := R\sqrt{1-\gamma}/(r\sqrt{1-\psi})$, and the sphere dimension $D$. We numerically verify this in~\autoref{fig:corr-spheres}.

\begin{figure}
    \centering
    \includegraphics[width=1.0\linewidth]{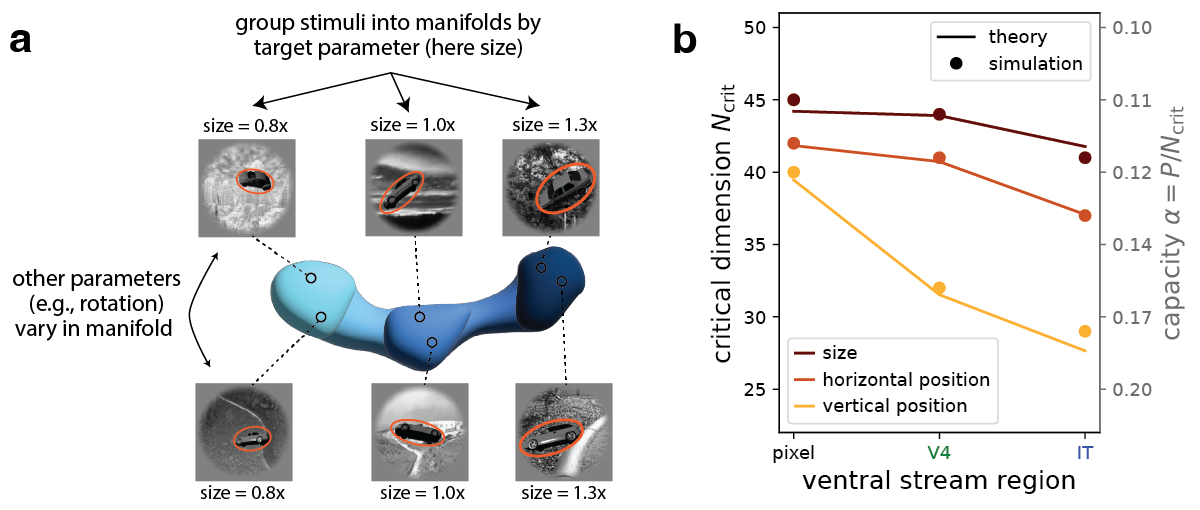}
    \caption{Application to neuroscience data.
    \textbf{a}, We analyze electrophysiological recordings from macaque ventral stream as monkeys viewed objects with variable pose parameters (e.g., size, position) superimposed on complex backgrounds.
    \textbf{b}, For each target parameter (e.g., size), we evenly bin its range into 5 intervals, yielding $P = 5$ manifolds. Each manifold consists of neural responses to 15 individual stimuli whose non-target pose parameters can vary freely, contributing to neural response variability. The background also varies within each manifold and provides another source of nuisance variability. We observe that capacity for each variable increases (i.e., $N_\mathrm{crit}$ decreases) from pixels to V4 to IT, indicating more efficient representations at higher processing stages. Our analytical estimates (solid lines) match numerical results (dots).
    }
    \label{fig:hvm}
\end{figure}

\subsection{Application to neuroscience data}\label{sec:hvm}
Finally, we demonstrate that our regression capacity theory can be applied to quantify the linear decodability of continuous latent variables from real data with complex variability. We apply our method to electrophysiological recordings from the macaque ventral stream~\cite{Majaj2015}, and study representations of object pose parameters (e.g., the relative size, horizontal position, and vertical position) of a foreground object on a variable background (\autoref{fig:hvm}a). Conceptually, a successful decoder for one latent variable must filter out the neural variability induced by intrinsic noise and the encoding of all the other stimulus features. Previous work quantified decodability of object pose parameters from the same dataset by the conventional method of training a linear decoder on the full-dimensional data and evaluating its cross-validated generalization error~\cite{Hong2016}. Our results in~\autoref{fig:hvm}b qualitatively match the standard decoding analysis in finding monotonic improvement in decodability from raw image pixels to area V4 to area IT for all three pose parameters. Furthermore, unlike generalization error-based metrics (e.g., support vector regression~\cite{drucker1996support,canatar2024statistical}) used in previous work~\cite{Hong2016}, the numerical value of our regression capacity measure can be directly interpreted as the number of neurons a downstream reader needs to decode the target variable to a desired accuracy $\varepsilon$.

One aspect of manifold construction worth noting is that we estimate capacity separately for each top-level object class in the dataset (e.g., ``animal'' and ``fruit''), since objects belonging to these very different categories are unlikely to share a meaningful notion of absolute rotation angle. The values in \autoref{fig:hvm}b are averaged across top-level categories. 

\section{Discussion}
In this Letter, we introduced a theoretical framework for quantifying linear decodability of continuous variables encoded in neural population manifolds, extending manifold capacity theory from classification tasks to regression problems. Previous studies of estimation and fine discrimination from neural population activity focused on statistical properties of noise~\cite{seung1993simple, dayan2005theoretical, averbeck2006neural}, often assuming Gaussian or Poisson distributions. However, neural activity often exhibits more complex structured variability from coding nuisance variables (e.g., background features), which significantly impacts readout performance~\cite{seung1993simple, mato1996neural, DICARLO2007333, chung2021neural}. Characterizing such complex structures has been challenging due to limited methods capable of capturing their geometry~\cite{chung2021neural}. 

By generalizing the previous manifold capacity theory from discrete to continuous labels, we systematically relate structured variability to decoding efficiency for continuous variables. We explicitly link geometric properties of neural  manifolds—such as dimensionality, radius, and correlation structures—to their regression capacities, showing that uniform correlations among manifolds and their labels effectively just rescale the data (\autoref{sec:synthetic}).

We also show how to non-parametrically define and estimate regression capacity, so that it can be directly quantified in real neuroscience datasets. We apply our framework to recordings from primate visual cortex to validate its utility beyond synthetic scenarios, finding increases in linear decodability of object pose parameters along the ventral stream consistent with past analysis~\cite{Hong2016}. These findings support the view that neural networks, whether biological or artificial, hierarchically refine representations to optimize downstream decoding. 

Our framework opens several directions for technical and applied extensions. 
On the applied side, our methods offer a principled approach for analyzing neural data across sensory modalities, extending prior work in the classification setting~\cite{GLUE} to a wide range of continuous-variable tasks, including navigation and estimation problems in perception and motor control. In neuroscience, this enables quantitative comparisons of representational efficiency across brain areas or behavioral conditions, even in the presence of structured noise and nuisance variability. In machine learning, it provides a geometric lens to characterize how deep networks organize task-relevant continuous variables—across layers or throughout learning.

\section*{Acknowledgments}
This work was partially supported by the Center for Computational Neuroscience at the Flatiron Institute of the Simons Foundation, and the Kempner Institute for the Study of Natural and Artificial Intelligence at Harvard University. S.C. is partially supported by the Klingenstein-Simons Award, a Sloan Research Fellowship, NIH award R01DA059220, and the Samsung Advanced Institute of Technology (under the project ``Next Generation Deep Learning: From Pattern Recognition to AI''). All experiments were performed on the Flatiron Institute high-performance computing cluster.

\nocite{Rockafellar1996-av}


\bibliographystyle{apsrev4-1} 
\bibliography{biblio} 

\onecolumngrid

\renewcommand{\theequation}{S\arabic{equation}}
\renewcommand{\thetable}{S\arabic{table}}
\renewcommand{\thefigure}{S\arabic{figure}}
\renewcommand{\thesection}{S\Roman{section}}
\setcounter{equation}{0}
\setcounter{table}{0}
\setcounter{figure}{0}
\setcounter{section}{0}

\newpage

\input{supmat}

\end{document}

%% file: commands.tex
\usepackage{mathtools}

\newcommand{\cM}{\mathcal{M}}
\newcommand{\cS}{\mathcal{S}}
\newcommand{\cA}{\mathcal{A}}

\newcommand{\cN}{\mathcal{N}}
\newcommand{\Real}{\mathbb{R}}
\newcommand{\bs}{\mathbf{s}}
\newcommand{\bt}{\mathbf{t}}
\newcommand{\bv}{\mathbf{v}}
\newcommand{\bw}{\mathbf{w}}
\newcommand{\bx}{\mathbf{x}}
\newcommand{\by}{\mathbf{y}}
\newcommand{\bz}{\mathbf{z}}
\newcommand{\bu}{\mathbf{u}}
\newcommand{\bU}{\mathbf{U}}

\newcommand{\Exp}{\mathop{\mathbb{E}}}
\newcommand{\proj}{\mathrm{proj}}
\newcommand{\simcap}{\mathrm{simulation}}
\newcommand{\mf}{\textsf{mf}}
\newcommand{\ib}{\textsf{ib}}
\newcommand{\crit}{\textsf{crit}}

\newcommand{\Cnotes}[1]{\ifnum\debug=1{\color{red} [CNC: #1]}\fi}
\newcommand{\Snotes}[1]{\ifnum\debug=1{\color{magenta} [SYC: #1]}\fi}
\newcommand{\Wnotes}[1]{\ifnum\debug=1{\color{olive} [WS: #1]}\fi}

%% file: supmat.tex
\begin{center}
\bf\large Supplementary Materials
\end{center}
\begin{center}
{\it Linear Readout of Neural Manifolds with Continuous Variables}
\end{center}

\setcounter{page}{1}




The Supplementary Materials (SM) are organized in five parts: in~\autoref{sec:instance-based-cap} and~\autoref{sec:si mean-field} we present the details for our instance-based regression capacity theory and our mean-field regression capacity theory respectively. In~\autoref{sec:si point-like manifolds} and~\autoref{sec:si sphere-like manifolds} we present the details for analyzing point-like manifolds and sphere-like manifolds respectively. 

\def\arraystretch{1.2}
\begin{table}[ht!]
\centering
\begin{tabular}{ll}
\hline
\underline{Background:} & \\
$\Real, \Real_{\geq0}$ & The set of real numbers and the set of non-negative real-numbers\\
$\Exp;\ \mathbb{P}$ & Expectation operator; Probability of an event\\
$\cN(\cdot,\cdot)$ & Multivariate normal distribution; first argument: mean vector;\\
&second argument: covariance matrix \\
a, $\mathbf{a}$ & non-boldface: scalar variable; boldface: vector/function variable \\ 
$\|\cdot\|_2$ & L2 norm of a vector \\
$\varepsilon$ & Tolerance parameter \\ 
$\varnothing$ & Empty set \\
$N,P$ & The number of neurons and the number of manifolds \\
$\bw\in\Real^N$ & Readout weight vector \\ 
$\cM^\mu\subset\Real^N$ & The $\mu$-th manifold \\ 
$\bx^\mu\in\cM^\mu$ & A point in the $\mu$-th manifold \\ 
$y^\mu\in\Real$ & The label for the $\mu$-th manifold \\ 
$\cA,\cA^+$ & A set of admissible regression weights; the convex cone of $\cA$ \\ [3pt]
\underline{Mean-field theory:} & \\
$0\leq N_\proj\leq N$ & The dimensionality of neural activity space after a projection \\
$N_\mf(\varepsilon)$ & Mean-field formula for critical dimension \\
$\alpha_\mf(\varepsilon)$ & Mean-field capacity formula \\ [3pt]
\underline{Instance-based theory:} & \\
$\Pi^{(N_\proj)}$ & Random projection operator from $\Real^N$ to $\Real^{N_\proj}$ \\
$p(N_\proj)$ & Probability of regressibility after random projection to $\Real^{N_\proj}$ \\
$N_\crit(\varepsilon)$ & Empirical definition for critical dimension \\
$\alpha_\simcap(\varepsilon)$ & Empirical definition for regression capacity \\
$N_\ib(\varepsilon)$ & Instance-based formula for critical dimension \\
$\alpha_{\ib}(\varepsilon)$ & Instance-based capacity formula \\ [3pt]
\underline{Synthetic models:} & \\
$\psi,\rho,\gamma$ & Correlations between manifold center, label, and axes \\ [3pt]
\hline
\end{tabular}
\caption{Notations.}
\label{table:notations}
\end{table}

\section{Instance-Based Regression Capacity Theory}\label{sec:instance-based-cap}
In this section, we provide details for the instance-based regression capacity as introduced in~\autoref{sec:instance-based capacity} in the main text. For a given high-dimensional neural data with $N$ neurons and $P$ manifolds (where both $N,P$ being finite), we directly define its manifold capacity based on some random projection procedures. This version of capacity has the distinct advantage of \emph{not requiring an explicit generative model} for the data to be regressed and can therefore be directly evaluated on real data for a which a generative model is unknown. In Section II, we will use this \enquote{instance-based} theory to recover a more standard mean-field notion of regression capacity for a generative model with manifolds of an given shape embedded with some specified second-order correlation structure.

\subsection{Definitions of simulation capacity and useful mathematical concepts}
Suppose we have $P$ fixed manifolds $\{\mathcal{M}^\mu \subseteq \mathbb{R}^N\}_{\mu=1}^P$ that are embedded in a shared $N$-dimensional state space and that have some target labels $\by=\{y^\mu \in \mathbb{R}\}_{\mu=1}^P$ for regression. As we focus on linear regression on top of these manifolds, without loss of generality we can treat these manifolds as convex sets. These manifolds may be correlated in any arbitrary way with each other and with the labels and may all be uniquely shaped. For an error tolerance $\varepsilon \geq 0$ for which an admissible readout $\mathbf{w} \in \mathbb{R}^N$ exists. By \emph{admissible}, we mean that
$$|\mathbf{w} \cdot \mathbf{x}^\mu - y^\mu| \leq \varepsilon \textrm{ for all } \mu \in \{1, \dots, P\} \textrm{ and } \mathbf{x}^\mu \in \mathcal{M}^\mu.$$
For any dimension $N_\mathrm{proj}$ with $0 \leq N_\mathrm{proj} \leq N$, we can randomly project the data manifolds down to $\mathbb{R}^{N_\mathrm{proj}}$ and check whether it remains regressible (i.e., an admissible weight exists). Let $p(N_\mathrm{proj})$ be the probability of the data still being regressible when projected to $\mathbb{R}^{N_\mathrm{proj}}$. Since $p(0) = 0$ and $p(N) = 1$, even for finite-dimensional data there will be some value $N_\crit$ such that $p(N_\crit) \leq \frac{1}{2}$ but $p(N_\crit + 1) > 1/2$. 
We can then define the \enquote{instance-based} regression capacity $\alpha_\mathrm{simulation} := P/N_\crit$ for our dataset. This is illustrated in SM Figure \ref{fig:ddcap} and formalized in the following definition:

\begin{figure}
    \centering
    \includegraphics[width=0.8\linewidth]{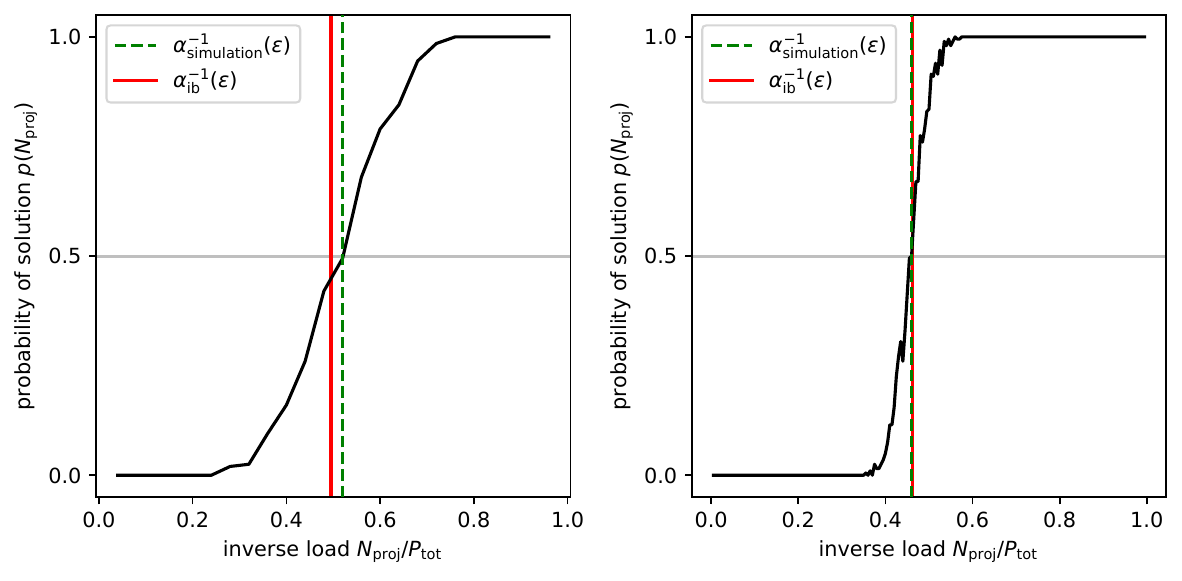}
    \caption{Numerical illustration of instance-based capacity for finite-dimensional data. Both (a) and (b) use manifolds formed from the raw pixel-level stimuli from the macaque vision dataset used in Figure \ref{fig:hvm} and use vertical position as the target labels as an arbitrary example of realistic data. (a) In the first panel, we study $N = 25$ dimensional representations (randomly projected from the full $256^2$-dimensional grayscale image pixels) with $P = 5$ manifolds consisting of $5$ points each, yielding $P_\mathrm{tot} = 25$ total points. We can numerically compute the instance-based capacity $\alpha_\mathrm{simulation}(\varepsilon)$ for an example tolerance $\varepsilon = 0.3$ by bisecting the range $0 \leq N_\mathrm{proj} \leq N$ and finding the unique $N_\mathrm{proj}$ such that $p(N_\mathrm{proj}) \leq \frac{1}{2}$ and $p(N_\mathrm{proj} + 1) > \frac{1}{2}$, even though there is not a sharp phase transition from $p(N_\mathrm{proj}) \approx 0$ to $p(N_\mathrm{proj}) \approx 1$ for this relatively low-dimensional data. Even in this case, the capacity estimator $\alpha_\mathrm{ib}(\varepsilon)$ from Theorem \ref{ddcap} is a good approximation of $\alpha_\mathrm{simulation}(\varepsilon)$. (b) In the second panel, we study higher-dimensional data with $N = 200$ and 40 points per manifold, yielding $P_\mathrm{tot} = 200$ total points. We start to approach a sharp phase transition and $\alpha(\varepsilon)$ almost perfectly approximates $\alpha_\mathrm{simulation}(\varepsilon)$, which gives the location of the phase transition.}
    \label{fig:ddcap}
\end{figure}

\begin{definition}[Simulation capacity]\label{def:ddcap}
    Let $P$ manifolds $\{\mathcal{M}^\mu \subseteq \mathbb{R}^N\}_{\mu=1}^P$ and labels $\{y^\mu \in \mathbb{R}\}_{\mu=1}^P$ be given as above. For any tolerance $\varepsilon \geq 0$ and any matrix $M \in \mathbb{R}^{N_\mathrm{proj} \times N}$, we define the following concepts:
    \begin{itemize}
    \item (Admissible regression weights) Define the set $\mathcal{A}(M)$ of admissible regression weights for the data transformed by $M$:
    $$\mathcal{A}(M) = \left\{\bm{w} \in \mathbb{R}^{N_\mathrm{proj}} : |\bm{w} \cdot (M\bm{x}^\mu) - y^\mu| \leq \varepsilon \textrm{ for all } \mu \in \{1, \dots, P\}, \bm{x}^\mu \in \mathcal{M}^\mu\right\}.$$
    \item (Regressible probability) Assume that the original data is regressible within the specified tolerance (i.e., that $\mathcal{A}(I_N) \neq \varnothing$) and that no constant readout is admissible (i.e., that $0 \not\in \mathcal{A}(I_N))$. Then given any integer $N_\mathrm{proj}$ with $0 \leq N_\mathrm{proj} \leq N$, define
    \[
    p(N_\mathrm{proj}) = \mathbb{P}_{\Pi^{(N_\mathrm{proj})}}\left( \mathcal{A}(\Pi^{(N_\mathrm{proj})}) \neq \varnothing \right) \, ,
    \]
    where $\Pi^{(N_\mathrm{proj})} \in \mathbb{R}^{N_\mathrm{proj} \times N}$ is a uniformly random projection matrix.\footnote{By \enquote{uniformly random projection matrix}, we mean that $\Pi^{(N_\mathrm{proj})}$ is the first $N_\mathrm{proj}$ columns of an orthogonal matrix sampled from the normalized Haar measure on $O(N)$.}
    Note that $p(N_\mathrm{proj})$ is a non-decreasing function with $p(0) = 0$ and $p(N) = 1$. 
    \item (Critical dimension) Define the instance-based critical dimension of the dataset $\{\mathcal{M}^\mu\},\{y^\mu\}$ to be
    \[
    N_\crit(\varepsilon) := \min_{p(N_\proj)\geq0.5}\{N_\proj\}
    \]
    \item (Simulation capacity) Finally, we define the instance-based regression capacity of the dataset $\{\mathcal{M}^\mu\},\{y^\mu\}$ to be 
    \[
    \alpha_\mathrm{simulation}(\varepsilon) = \frac{P}{N_\crit(\varepsilon)} \, .
    \]
    \end{itemize}
\end{definition}

The above notion of instance-based capacity is weaker than the usual asymptotic definitions, since it does not require a sharp phase transition from the data almost never being regressible ($p(N_\mathrm{proj}) \approx 0$) to almost always being regressible ($p(N_\mathrm{proj}) \approx 1$) at the critical $N_\crit(\varepsilon) = P/\alpha_\simcap(\varepsilon)$. However, the following theorem shows that as the data dimensionality $N$ goes to infinity, $p(N_\mathrm{proj})$ approaches such a step function. The theorem also gives a more direct estimator for capacity that agrees with the instance-based capacity $\alpha_\simcap(\varepsilon)$ in the thermodynamic limit $N \to \infty$ (and numerically agrees quite well with it even for finite data, as shown in SM Figure \ref{fig:ddcap}).

\subsection{Formulas for critical dimension and instance-based capacity}
Now, we are ready to present our mathematical formulas for critical dimension and simulation capacity (\autoref{def:capformula}).

\begin{definition}[Capacity formula]\label{def:capformula}
    Let $P$ manifolds $\{\mathcal{M}^\mu \subseteq \mathbb{R}^N\}_{\mu=1}^P$ and labels $\{y^\mu \in \mathbb{R}\}_{\mu=1}^P$ be given as above. For any tolerance $\varepsilon \geq 0$, define an estimator $N_\ib(\varepsilon)$ for the critical dimension $N_\crit(\varepsilon)$ as
    \[
    N_\ib(\varepsilon) := \Exp_\bt\left[\min_{\bw^+\in\cA^+}\|\bt-\bw^+\|_2^2\right] \, ,
    \]
    where $\mathcal{A} = \mathcal{A}(I_N)$ is the set of admissible regression weights for the original data and $S^+ := \{\lambda \mathbf{s} : \mathbf{s} \in S, \lambda \geq 0\}$ denotes the cone generated by a set $S \subseteq \mathbb{R}^N$. Define an estimator $\alpha(\varepsilon)$ for simulation capacity $\alpha_\simcap(\varepsilon)$ as
    \[
    \alpha_\ib(\varepsilon) := \frac{P}{N_\ib(\varepsilon)} \, .
    \]
\end{definition}

In the following, we use mathematical tools from large deviation theory for convex geometry to show that $N_\ib(\varepsilon)$ and $\alpha_\ib(\varepsilon)$ converge to $N_\crit(\varepsilon)$ and $\alpha_\simcap(\varepsilon)$ respectively in the thermodynamic limit (i.e., $N\to\infty)$.

\begin{theorem}[Formulas for critical dimension and instance-based capacity]\label{ddcap}
    Given $P$ convex sets $\{\mathcal{M}^\mu \subseteq \mathbb{R}^N\}_{\mu=1}^P$ as data manifolds and (deterministic) target labels $\{y^\mu\}_{\mu=1}^P$. Let $\cA(\cdot),p(\cdot),N_\crit(\cdot),\alpha_\simcap(\cdot)$ as defined in~\autoref{def:ddcap} and $N_\ib(\cdot),\alpha(\cdot)$ as defined in~\autoref{def:capformula}.
    Assume that $\mathcal{A}(I_N) \neq\varnothing$ and $0 \not\in \mathcal{A}(I_N)$\footnote{If $0 \in \mathcal{A}(I_N)$, then we consider instance-based capacity to be $\infty$, since $\mathcal{A}(\Pi^{(N_\mathrm{proj})})$ will always contain $0$ and be nonempty.}. For every $\varepsilon\geq0$,  $\eta > 0$, and $0\leq N_\proj\leq N$, the following hold:
    \begin{enumerate}[label=(\roman*)]
        \item If $\frac{N_\mathrm{proj}}{N} \leq \frac{N_\ib(\varepsilon)}{N} - \beta(\eta)$, then $p(N_\mathrm{proj}) \leq \eta$;
        \item If $\frac{N_\mathrm{proj}}{N} \geq \frac{N_\ib(\varepsilon)}{N} + \beta(\eta)$, then $p(N_\mathrm{proj}) \geq 1 - \eta$,
    \end{enumerate}
    where $\beta(\eta) = \sqrt{8\log(4/\eta)/N}$.
    Note that for any fixed $\eta > 0$, the error term $\beta(\eta)$ goes to zero as $N \to \infty$. Loosely speaking, this means there is a sharp phase transition from $p(N_\mathrm{proj}) \approx 0$ for $N_\mathrm{proj} < N_\ib(\varepsilon)$ to $p(N_\mathrm{proj}) \approx 1$ for $N_\mathrm{proj} > N_\ib(\varepsilon)$ as $N \to \infty$. In particular, this implies
    \[
    |\alpha_\ib^{-1}(\varepsilon) - \alpha^{-1}_\mathrm{simulation}(\varepsilon)| = \frac{1}{P}|N_\ib(\varepsilon)-N_\crit(\varepsilon)| \leq \frac{N}{P} \beta\left(\frac{1}{3}\right),
    \]
    so in the proportional limit with $N \to \infty$ and $N/P = O(1)$, the two capacity estimators $\alpha_\ib(\varepsilon)$ and $\alpha_\mathrm{simulation}(\varepsilon)$ agree. SM Figure \ref{fig:ddcap} gives some numerical intuition for this.
\end{theorem}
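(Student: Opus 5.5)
The approach is to reduce regressibility after random projection to the event that a uniformly random subspace meets a fixed convex cone nontrivially. That event is governed by the approximate conic kinematic formula of Amelunxen, Lotz, McCoy and Tropp (ALMT), whose constant $a_\eta=\sqrt{8\log(4/\eta)}$ is exactly the one appearing in $\beta(\eta)=a_\eta/\sqrt N$.

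\emph{Step 1 (duality of the projection).} For $M=\Pi^{(N_\proj)}$ and $\bw\in\Real^{N_\proj}$ we have $\bw\cdot(M\bx^\mu)=(M^\top\bw)\cdot\bx^\mu$. Hence $\bw\in\cA(M)$ if and only if $M^\top\bw\in\cA:=\cA(I_N)$. Since $M$ has orthonormal rows, $L:=\mathrm{range}(M^\top)$ is a uniformly distributed $N_\proj$-dimensional subspace of $\Real^N$, and
\[
\cA(M)\neq\varnothing \iff L\cap\cA\neq\varnothing .
\]

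\emph{Step 2 (from sets to cones).} The set $\cA$ is convex, being an intersection of slabs, and it does not contain $0$. If $\bw\in\cA\cap L$, then $\bw$ is a nonzero element of $L\cap\cA^+$. Conversely, a nonzero $\tau\bw\in L$ with $\bw\in\cA$ and $\tau>0$ gives $\bw\in L$, because $L$ is a subspace. So
\[
p(N_\proj)=\mathbb{P}\bigl(L\cap\cA^+\neq\{0\}\bigr).
\]

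\emph{Step 3 (apply the ALMT kinematic bound).} For a closed convex cone $C\subseteq\Real^N$ with statistical dimension $\delta(C)=\Exp\|\Pi_C(\bt)\|_2^2$, and $L$ a uniformly random subspace of dimension $m$, ALMT give:
\begin{itemize}
\item if $\delta(C)+m\le N-a_\eta\sqrt N$, then $\mathbb{P}(L\cap C\neq\{0\})\le\eta$;
\item if $\delta(C)+m\ge N+a_\eta\sqrt N$, then $\mathbb{P}(L\cap C\neq\{0\})\ge1-\eta$.
\end{itemize}
Apply this to $C=\overline{\cA^+}$. Moreau's decomposition gives
\[
N-\delta(C)=\delta(C^\circ)=\Exp\,\mathrm{dist}^2(\bt,C)=N_\ib(\varepsilon),
\]
and the distance to $\cA^+$ equals the distance to its closure. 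The two ALMT conditions therefore become $N_\proj\le N_\ib-a_\eta\sqrt N$ and $N_\proj\ge N_\ib+a_\eta\sqrt N$. Dividing by $N$ yields (i) and (ii).

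\emph{Step 4 (the capacity corollary).} Take $\eta=1/3$.
\begin{itemize}
\item By (i), every $N_\proj\le N_\ib-N\beta(1/3)$ has $p\le 1/3<1/2$. Hence $N_\crit\ge N_\ib-N\beta(1/3)$.
\item By (ii), every $N_\proj\ge N_\ib+N\beta(1/3)$ has $p\ge 2/3$. Hence $N_\crit\le N_\ib+N\beta(1/3)$, up to rounding to an integer.
\end{itemize}
Dividing by $P$ gives the stated bound on $|\alpha_\ib^{-1}-\alpha_\simcap^{-1}|$.

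\emph{Main obstacle: the closure of the cone.} The cone $\cA^+$ need not be closed. For example, when $\cA$ is unbounded, recession directions of $\cA$ lie in $\overline{\cA^+}$ but possibly not in $\cA^+$. ALMT, however, is stated for closed cones, so the difference between the events $L\cap\cA^+\neq\{0\}$ and $L\cap\overline{\cA^+}\neq\{0\}$ must be controlled.
\begin{itemize}
\item For bound (i) nothing is needed, since $\cA^+\subseteq\overline{\cA^+}$ makes the event for $\cA^+$ smaller.
\item For bound (ii) I would first try a perturbation argument: use $\cA_{\varepsilon'}$ with $\varepsilon'\downarrow\varepsilon$, whose cones increase and whose $N_\ib$ is continuous in $\varepsilon'$, together with the fact that $\cA_{\varepsilon'}^+\supseteq\overline{\cA_\varepsilon^+}\setminus(\text{a null boundary set})$.
\item As a fallback, I would argue directly that a random subspace meets $\overline{\cA^+}$ only on its relative boundary with probability zero, unless it also meets the relative interior, which lies in $\cA^+$.
\end{itemize}
The remaining steps are routine.
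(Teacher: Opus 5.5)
Your argument is the paper's proof: the paper rewrites $p(N_\proj)=\mathbb{P}(\cA^+\cap QD\neq\{0\})$ through the same duality and cone-generation steps, applies the ALMT kinematic bound with $\delta(\cA^+)=N-N_\ib(\varepsilon)$, and takes $\eta=1/3$ for the capacity comparison (it silently ignores the integer rounding you note). The closedness issue you flag is genuine, since ALMT requires closed cones and the paper applies it to $\cA^+$ without comment; perturbing $\varepsilon$ does not fix it by itself, because $\overline{\cA^+}\setminus\cA^+$ is exactly $V^\perp\setminus\{0\}$ with $V$ the span of all data points, and these recession directions lie in no $\cA_{\varepsilon'}^+$ with $0\notin\cA_{\varepsilon'}$, but the direct version of your fallback works: $L\cap V^\perp\neq\{0\}$ forces $N_\proj>\dim V$ almost surely, hence $P_V L=V$, so $L$ meets $\cA=(\cA\cap V)+V^\perp$, and the events for $\cA^+$ and $\overline{\cA^+}$ therefore coincide almost surely.
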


\noindent
We prove this result in ~\autoref{sec:ddcap proof details}.
Our proof is based on an application of the \emph{approximate conic kinematic formula} first presented as Theorem I in \cite{Amelunxen2014}. This Theorem bounds the intersection probability of two randomly-oriented convex cones in terms of their \emph{statistical dimensions}, which will almost immediately give rise to the above capacity formula. For our purposes, the two convex cones are, first, the set of admissible regression weights to the original problem in $\mathbb{R}^N$ and, second, a fixed subspace of dimension $\mathbb{R}^{N_\mathrm{proj}}$. This intersection being non-empty will correspond to when the data projected to $\mathbb{R}^{N_\mathrm{proj}}$ remain regressible.

\subsection{Details for the convergence of capacity formula}\label{sec:ddcap proof details}
In this subsection, we prove the concentration of critical dimension (and hence capacity) by a large deviation theory in convex geometry. The definition of statistical dimension and the key theorem are paraphrased below (we specialize Theorem I of \cite{Amelunxen2014} to the case where one of the convex cones is a linear subspace):
\begin{definition}
    We call a subset $C \subseteq \mathbb{R}^N$ a \emph{convex cone} if $C$ is convex and positively homogeneous (i.e., $\lambda \mathbf{x} \in C$ for all $\mathbf{x} \in C$, $\lambda > 0$). Given any closed convex cone $C$, we define it's \emph{statistical dimension} $\delta(C)$ to be
    $$\delta(C) = \mathbb{E}_{\mathbf{t}} \|\Pi_C(\mathbf{t})\|^2,$$
    where $\mathbf{t} \sim \mathcal{N}(0, I_N)$ is a standard Gaussian vector and $\Pi_C(\mathbf{t}) = \arg\min_{\mathbf{x} \in C} \|\mathbf{t} - \mathbf{x}\|$ is the projection of $\mathbf{t}$ onto the convex cone $C$.
\end{definition}
As a consequence of some basic facts about the statistical dimension given in Proposition 3.1 of \cite{Amelunxen2014}, we can equivalently write $\delta(C) = N - \mathbb{E}_{\mathbf{t}} \min_{\mathbf{x} \in C}\|\mathbf{t} - \mathbf{x}\|^2$, which we will take as the definition of the statistical dimension in our proof of Theorem 1. We will also use the fact from the same Proposition 3.1 that statistical dimension agrees with linear-algebraic dimension on subspaces.

Next, we restate the approximate conic kinematic formula—which establishes a large deviation theory for the intersection of two randomly-oriented convex cones—from~\cite{Amelunxen2014}.
\begin{lemma}[Approximate conic kinematic formula, {\cite[Theorem~I]{Amelunxen2014}}]\label{kinematic}
    Let $C$ be a convex cone in $\mathbb{R}^N$ and let $D$ be an $N_\mathrm{proj}$-dimensional linear subspace of $\mathbb{R}^N$. Also let $Q \in \mathbb{R}^{N \times N}$ be a rotation matrix sampled from the orthogonal group $O(N)$ equipped with the normalized Haar measure. For every tolerance parameter $\eta \in (0, 1)$, The following then hold:
    \begin{enumerate}[label=(\roman*)]
        \item If $\delta(C) + N_\mathrm{proj} \leq N - \sqrt{8\log(4/\eta)N}$, then $\mathbb{P}(C \cap QD \neq \{0\}) \leq \eta$;
        \item If $\delta(C) + N_\mathrm{proj} \geq N + \sqrt{8\log(4/\eta)N}$, then $\mathbb{P}(C \cap QD \neq \{0\}) \geq 1 - \eta$.
    \end{enumerate}
\end{lemma}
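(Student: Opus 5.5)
The plan is to obtain the lemma as a direct specialization of Theorem I of \cite{Amelunxen2014}. In general form, that theorem concerns two closed convex cones $C,D\subseteq\mathbb{R}^N$ and a Haar-random rotation $Q$. It asserts that $\delta(C)+\delta(D)\le N-a_\eta\sqrt{N}$ implies $\mathbb{P}(C\cap QD\neq\{0\})\le\eta$, and that $\delta(C)+\delta(D)\ge N+a_\eta\sqrt{N}$ implies $\mathbb{P}(C\cap QD\neq\{0\})\ge 1-\eta$, where $a_\eta=\sqrt{8\log(4/\eta)}$. The steps, in order, are:
\begin{enumerate}[label=(\arabic*)]
\item Take $D$ to be an $N_\proj$-dimensional subspace and use Proposition 3.1 of \cite{Amelunxen2014} to get $\delta(D)=N_\proj$.
\item Observe that $a_\eta\sqrt{N}=\sqrt{8\log(4/\eta)N}$, so the two hypotheses of the lemma coincide with those of Theorem I.
\item Read off conclusions (i) and (ii) directly.
\end{enumerate}

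The one point needing care is that the lemma is stated for a convex cone $C$ that need not be closed, whereas Theorem I and the definition of $\delta$ refer to closed cones. I would handle this by replacing $C$ with its closure $\bar C$, which leaves $\delta$ unchanged because projections are defined through $\bar C$ anyway. I would then argue that the events $\{C\cap QD\neq\{0\}\}$ and $\{\bar C\cap QD\neq\{0\}\}$ differ only on a $Q$-null set. Specifically, the discrepancy requires $QD$ to touch $\bar C$ only along its relative boundary without meeting its interior, which is a non-generic position of a random subspace relative to a fixed cone. This is where I would expect any real subtlety; in the application to $\cA^+$, one may also simply work with closed cones throughout.

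For completeness, one could also recall why Theorem I holds in the subspace case. The conic Crofton formula gives
\[
\mathbb{P}(C\cap QD\neq\{0\})=2\sum_{j\ \mathrm{odd}} v_{N-N_\proj+j}(C),
\]
where $v_k(C)$ are the conic intrinsic volumes. These form a probability distribution of a random variable $V_C$ with mean $\delta(C)$. The right-hand side is therefore sandwiched between tail probabilities $\mathbb{P}(V_C\ge N-N_\proj\pm 1)$, up to the parity-interlacing identity $\sum_{k\ \mathrm{even}}v_k=\sum_{k\ \mathrm{odd}}v_k=1/2$.

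The remaining ingredient is the concentration inequality $\mathbb{P}(|V_C-\delta(C)|\ge\lambda)\le 4\exp(-\lambda^2/(8N))$ \cite{Amelunxen2014}. It is proved by realizing $V_C$ through the Gaussian projection $\|\Pi_C(\mathbf{t})\|^2$ together with a Poincar\'e or log-Sobolev argument. Substituting $\lambda=\sqrt{8\log(4/\eta)N}$ yields exactly the stated thresholds. This concentration step is the analytic heart of the result, but since the paper takes it from \cite{Amelunxen2014}, the proof here reduces to the specialization above.
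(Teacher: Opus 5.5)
Your proposal is correct and follows the same route as the paper, which simply quotes Theorem~I of \cite{Amelunxen2014} with the second cone taken to be the subspace $D$ and uses Proposition~3.1 there to set $\delta(D)=N_\proj$, so that $a_\eta\sqrt{N}=\sqrt{8\log(4/\eta)N}$ gives exactly the stated thresholds. Your closure argument is care the paper omits (Theorem~I is stated for closed cones, one of which is not a subspace), and the case where $C$ is itself a subspace follows immediately by dimension counting.
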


\noindent
Now let us now return to the proof of our main theorem (\autoref{ddcap}):

\begin{proof}[Proof of Theorem \ref{ddcap}.]
    Recall that in~\autoref{def:ddcap} we define $\cA=\cA(I_N)$ to be the set of admissible regression weights, and the $p(N_\proj)=\mathbb{P}(\cA(\Pi^{(N_\proj)})\neq \varnothing)$ is the probability of being linearly regressible after random projection to $\Real^{N_\proj}$.

    First, let us rewrite $p(N_\proj)$ in terms of the the probability of two randomly-oriented cones intersecting with each other.
    \begin{lemma}\label{lem:prob cone}
    For every $0\leq N_\proj\leq N$, let $D$ be a $N_\proj$-dimensional linear subspace in $\Real^N$. We have
    $$p(N_\proj)=\mathbb{P}(\mathcal{A}(\Pi^{(N_\mathrm{proj})}) \neq \varnothing) = \mathbb{P}(\mathcal{A}^+ \cap QD \neq \{0\}).$$
    \end{lemma}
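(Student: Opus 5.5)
The plan is to relate admissible weights for the projected data to vectors in $\mathcal{A}^+$ that lie in the row space of the projection, and then to use rotation invariance of the Haar measure to trade the random projection for a randomly rotated fixed subspace.

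\textbf{Step 1: the deterministic correspondence.} Fix a projection $\Pi=\Pi^{(N_\proj)}$, whose rows are orthonormal. Its row space $\Pi^\top\Real^{N_\proj}$ is an $N_\proj$-dimensional subspace of $\Real^N$. A vector $\bv\in\Real^{N_\proj}$ lies in $\mathcal{A}(\Pi)$ if and only if $\Pi^\top\bv\in\mathcal{A}=\mathcal{A}(I_N)$, because
\[
\bv\cdot(\Pi\bx^\mu)=(\Pi^\top\bv)\cdot\bx^\mu .
\]
It follows that $\mathcal{A}(\Pi)\neq\varnothing$ exactly when $\mathcal{A}\cap \Pi^\top\Real^{N_\proj}\neq\varnothing$.

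\textbf{Step 2: pass from $\mathcal{A}$ to the cone $\mathcal{A}^+$.} Suppose $\bw\in\mathcal{A}\cap\mathrm{row}(\Pi)$. The hypothesis $0\notin\mathcal{A}$ gives $\bw\neq0$, so $\bw\in\mathcal{A}^+\cap\mathrm{row}(\Pi)\setminus\{0\}$.

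For the converse, take a nonzero $\lambda\bw$ with $\bw\in\mathcal{A}$, $\lambda>0$, and $\lambda\bw\in\mathrm{row}(\Pi)$. Since the row space is a linear subspace, $\bw=\lambda^{-1}(\lambda\bw)$ also lies in it, and $\bw\in\mathcal{A}$ by assumption. A nonzero element of $\mathcal{A}^+$ is always of this form, because $\mathcal{A}^+=\{\tau\bw:\bw\in\mathcal{A},\tau\ge 0\}$ and the zero multiple is excluded. This shows
\[
\mathcal{A}(\Pi)\neq\varnothing \iff \mathcal{A}^+\cap\mathrm{row}(\Pi)\neq\{0\}.
\]

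One subtlety needs care: $\mathcal{A}^+$ may fail to be closed. If $\mathcal{A}$ is unbounded or has $0$ on its boundary in some limiting sense, a direction in $\overline{\mathcal{A}^+}$ can be missing from $\mathcal{A}^+$. The lemma as stated concerns $\mathcal{A}^+$ itself, so the equivalence above is exact and closure is not needed here. Closure only enters later, when Lemma~\ref{kinematic} is applied. I would remark that $\mathcal{A}$ is a closed convex set, being an intersection of closed slabs, and that $0\notin\mathcal{A}$. These facts are what will later let one argue that $\mathcal{A}^+\cup\{0\}$ is closed, or that the boundary difference has probability zero.

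\textbf{Step 3: the distributional step.} I expect this to be the only nontrivial point. Since $\Pi$ consists of the first $N_\proj$ rows (or columns) of a Haar-distributed orthogonal matrix $Q'$, the row space $\mathrm{row}(\Pi)$ equals $Q'^\top E$. Here $E$ is the span of the first $N_\proj$ coordinate vectors. By invariance of the Haar measure under inversion and under multiplication, this subspace has the same law as $QD$ for any fixed $N_\proj$-dimensional subspace $D=Q_0E$: write $QD=(QQ_0)E$ and note that $QQ_0$ is again Haar distributed.

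Since the event $\{\mathcal{A}^+\cap V\neq\{0\}\}$ depends only on the subspace $V$, the two probabilities agree. This gives
\[
p(N_\proj)=\mathbb{P}(\mathcal{A}^+\cap QD\neq\{0\}).
\]
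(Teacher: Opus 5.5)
Your proof is correct and is essentially the paper's argument. The identity $\bv\cdot(\Pi\bx^\mu)=(\Pi^\top\bv)\cdot\bx^\mu$ reduces regressibility to $\mathcal{A}$ meeting the row space of $\Pi$, Haar invariance replaces that row space by $QD$, and $0\notin\mathcal{A}$ turns a nonempty intersection with $\mathcal{A}$ into a nontrivial intersection with $\mathcal{A}^+$. The only difference is that you do this cone step deterministically before the distributional step, whereas the paper does it afterwards.

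Your closure aside is not needed for this lemma, but its first option fails in general. $\mathcal{A}^+$ already contains $0$, and it is not closed whenever the data do not span $\Real^N$. In that case $\mathcal{A}$ has a nontrivial lineality direction $\mathbf{d}$ orthogonal to all data. Both $\pm\mathbf{d}$ lie in the closure of $\mathcal{A}^+$, but they cannot both lie in $\mathcal{A}^+$, since convexity would then put $0$ in $\mathcal{A}$.
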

    \begin{proof}[Proof of~\autoref{lem:prob cone}]
    For fixed $N_\mathrm{proj}$ and $\Pi^{(N_\mathrm{proj})}$, recall that the condition
    \[
    |((\Pi^{(N_\mathrm{proj})})^\top \bm w) \cdot \mathbf{x}^\mu - y^\mu| \leq \varepsilon \textrm{ for all } \mu \in \{1, \dots, P\} \textrm{ and } \mathbf{x}^\mu \in \mathcal{M}^\mu \, ,
    \]
    is used to define membership in $\mathcal{A}(\Pi^{(N_\mathrm{proj})})$. This is in turn equivalent to $(\Pi^{(N_\mathrm{proj})})^\top \bm w \in \mathcal{A}$. Thus, $\mathcal{A}(\Pi^{(N_\mathrm{proj})}) \neq \varnothing$ is equivalent to the condition $\mathcal{A} \cap \operatorname{im} (\Pi^{(N_\mathrm{proj})})^\top \neq \varnothing$. The subspace $\operatorname{im}(\Pi^{(N_\mathrm{proj})})^\top$ is the span of the $N_\mathrm{proj}$ rows of $\Pi^{(N_\mathrm{proj})}$, which by definition is distributed as the span of the first $N_\mathrm{proj}$ rows of an orthogonal matrix $Q$ sampled from $O(N)$ equipped with the normalized Haar measure. This is equivalent to the span of the first $N_\mathrm{proj}$ \emph{columns} of $Q$, which we can write as $QD$ where $D \subseteq \mathbb{R}^N$ is the subspace spanned by the first $N_\mathrm{proj}$ standard basis vectors of $\mathbb{R}^N$ (and in fact be replaced with any arbitrary fixed $N_\mathrm{proj}$-dimensional subspace). Thus, $\mathbb{P}(\mathcal{A}(\Pi^{(N_\mathrm{proj})}) \neq \varnothing) = \mathbb{P}(\mathcal{A} \cap QD \neq \varnothing)$. Since $0 \not\in \mathcal{A}$ and $QD$ is a subspace, the condition $\mathcal{A} \cap QD \neq \varnothing$ is equivalent to $\mathcal{A}^+ \cap QD \neq \{0\}$,
    where $\mathcal{A}^+$ is the cone generated by $\mathcal{A}$. It follows that $\mathbb{P}(\mathcal{A}(\Pi^{(N_\mathrm{proj})}) \neq \varnothing) = \mathbb{P}(\mathcal{A}^+ \cap QD \neq \{0\})$ as desired.
    \end{proof}

    Now back to the proof for~\autoref{ddcap}. 
    Since the set $\mathcal{A}$ is convex, $\mathcal{A}^+$ is a convex cone, so we can apply Theorem \ref{kinematic} with $C = \mathcal{A}^+$ to get:
    \begin{enumerate}[label=(\roman*)]
        \item If $\delta(\mathcal{A}^+) + N_\mathrm{proj} \leq N - \sqrt{8\log(4/\eta)N}$, then $\mathbb{P}(\mathcal{A}^+ \cap QD \neq \{0\}) \leq \eta$;
        \item If $\delta(\mathcal{A}^+) + N_\mathrm{proj} \geq N + \sqrt{8\log(4/\eta)N}$, then $\mathbb{P}(\mathcal{A}^+ \cap QD \neq \{0\}) \geq 1 - \eta$.
    \end{enumerate}
    By definition, we have
    $$\delta(\mathcal{A}^+) = N - \mathbb{E}_\mathbf{t} \min_{\mathbf{w}^+ \in \mathcal{A}^+} \|\mathbf{t} - \mathbf{w}^+\|^2 = N - N_\ib(\varepsilon)$$
    where $N_\ib(\varepsilon)$ is our formula for critical projection dimensionality as defined in~\autoref{def:capformula}. If we plug this expression for $\delta(\mathcal{A}^+)$ and~\autoref{lem:prob cone} into the conditions we got from the approximate conic kinematic formula and rearrange it slightly, we obtain the desired conditions:
    \begin{enumerate}[label=(\roman*)]
        \item If $\frac{N_\mathrm{proj}}{N} \leq \frac{N_\ib(\varepsilon)}{N} - \sqrt{\frac{8\log(4/\eta)}{N}}$, then $\mathbb{P}_{\Pi^{(N_\mathrm{proj})}}(\mathcal{A}(\Pi^{(N_\mathrm{proj})}) \neq \varnothing) \leq \eta$;
        \item If $\frac{N_\mathrm{proj}}{N} \geq \frac{N_\ib(\varepsilon)}{N} + \sqrt{\frac{8\log(4/\eta)}{N}}$, then $\mathbb{P}_{\Pi^{(N_\mathrm{proj})}}(\mathcal{A}(\Pi^{(N_\mathrm{proj})}) \neq \varnothing) \geq 1 - \eta$.
    \end{enumerate}
    
    Finally, we can use this result to compare our direct estimator $\alpha(\varepsilon)$ to $\alpha_\mathrm{simulation}(\varepsilon)$ from~\autoref{def:ddcap}. If $N_\mathrm{proj} \leq N_\mathrm{proj}^* - N\beta(1/3)$, then $p(N_\mathrm{proj}) \leq 1/3$, so $N_\mathrm{proj} \leq N_\crit(\varepsilon)$. If $N_\mathrm{proj} \geq N_\mathrm{proj}^* + N\beta(1/3)$, then $p(N_\mathrm{proj}) \geq 2/3$, so $N_\mathrm{proj} > N_\crit(\varepsilon)$. Combining these, $|N_\crit(\varepsilon) - N_\ib(\varepsilon)| \leq N\beta(1/3)$. Noting that $N_\ib(\varepsilon) = P/\alpha(\varepsilon)$ and dividing the previous inequality by $N$, we get
    \[
    |\alpha_\ib^{-1}(\varepsilon) - \alpha^{-1}_\mathrm{simulation}(\varepsilon)|=|N_\ib(\varepsilon)-N_\crit(\varepsilon)| \leq \frac{N}{P}\beta\left(\frac{1}{3}\right),
    \]
    as desired.
\end{proof}

\noindent
Note that the above instance-based capacity assumes \emph{fixed} target labels. However, in the next section we will demonstrate how Theorem \ref{ddcap} can be used to compute capacity for a model with random labels as well.

\section{Mean-Field Regression Capacity Theory for General Manifolds with Second-Order Correlations}\label{sec:si mean-field}
In this section, we provide details for the mean-field regression capacity using a generative model of data manifolds with second-order correlations and random labels as introduced in~\autoref{sec:mean-field capacity} in the main text. Our mean-field model is an extension from Gardner's formulation of perceptron capacity for point-like neural representations~\cite{gardner1988space} and the recent manifold capacity theory for classification task~\cite{chung2018classification,wakhloo2023linear,GLUE}. The main purpose of mean-field theory in this paper is to provide a framework to derive closed-form formulas for the capacity of synthetic manifold models, such as point-like (see~\autoref{sec:si point-like manifolds}) and sphere-like (see~\autoref{sec:si sphere-like manifolds}) manifolds.

\subsection{Definitions for a mean-field manifold model}\label{sec:si mean-field model}
Fix some $N,P\in\mathbb{N}$, which will be sent to infinity with constant ratio (i.e., $P/N=O(1)$).
We consider $D$-dimensional manifolds of a given (WLOG, convex) shape $\mathcal{S} \subseteq \mathbb{R}^{D+1}$, where we denote the components of $\mathbf{s} \in \mathcal{S}$ by $(s_0, s_1, \dots, s_D)$ and assume that $s_0 = 1$ for all $\mathbf{s} \in \mathcal{S}$. To sample $P$ manifolds embedded in $\mathbb{R}^N$, we will sample an embedding matrix $Q^\mu \in \mathbb{R}^{N \times (D+1)}$ for each manifold index $\mu \in \{1, \dots, P\}$ and then define the $\mu$-th manifold to be $\mathcal{M}^\mu = Q^\mu \mathcal{S} \subseteq \mathbb{R}^N$. Denote the columns of $Q^\mu$ by 
\[
\begin{pmatrix}\mathbf{q}_0 & \mathbf{q}_1 & \cdots & \mathbf{q}_D\end{pmatrix} \, .
\]
Our assumption about the form of $\mathcal{S}$ leads to a natural center-axis decomposition where we think of $r \mathbf{q}_0$ as the embedded manifold's centroid and
\[
\left\{\sum_{i = 1}^D s_i \mathbf{q}_i : \mathbf{s} \in \mathcal{S}\right\}
\]
as the variation around that centroid, with $\{\mathbf{q}_i\, :\, i > 0\}$ being the $D$ axes on which that variation lives. To model correlations between the manifolds, we define the matrices $Q^\mu$ to be zero-mean Gaussian matrices with some covariance structure
\[
\mathbb{E} Q^\mu_{ki}Q^\nu_{lj} = \frac{\delta_{kl}}{N} \Sigma_{\mu i}^{\nu j} \, ,
\]
determined by some positive-definite covariance tensor $\Sigma \in \mathbb{R}^{P(D+1) \times P(D+1)}$. This leads to
\[
\mathbf{q}^\mu_i \cdot \mathbf{q}^\nu_j \overset{\mathcal{P}}\to \Sigma_{\mu i}^{\nu j} \textrm{ as } N \to \infty \, ,
\]
where $\overset{\mathcal{P}}\to$ denotes convergence in probability. The entries $\Sigma_{\mu 0}^{\nu 0}$ of this tensor determine the \enquote{center-center} correlations, the entries $\Sigma_{\mu i}^{\nu j}$ with $i, j > 0$ determine the \enquote{axis-axis} correlations, and finally the entries $\Sigma_{\mu 0}^{\nu i}$ with $i > 0$ determine the \enquote{center-axis} correlations. All of this follows \cite{wakhloo2023linear}, which studied the capacity of essentially the same data model for classification. 

Finally, to define the regression capacity for our mean-field manifold model, we need to analyze the probability of $(\{\cM^\mu\}, \by)$ being linearly regressible. Concretely, for every $\varepsilon>0$, consider
\[
p_{N,P}(\varepsilon,\Sigma,\by) = \mathbb{P}_{Q}(\exists\bw\in\Real^N,\forall \mu,\forall \bx^\mu \in \cM^\mu,\ |\bw\cdot\bx^\mu-y^\mu|\leq\varepsilon) \, ,
\]
where the manifolds $\cM^\mu$ are randomly sampled by the above procedure. We will show that $p_{N,P}$ displays a sharp phase transition indexed by the load $\alpha = P/N$ in the proportional limit ($P,N\to\infty$ with $\alpha$ constant).

To start, we will study the capacity of this model given fixed labels $\{y^\mu\}_{\mu=1}^P$, and afterwards we will generalize to random labels. 

\subsection{Formula for fixed-label mean-field manifolds}\label{sec:SI formula mean-field}
The fixed-label theorem contains the essential ideas of how to pass from the instance-based capacity estimator in \autoref{ddcap} to this more standard mean-field estimator:

\begin{theorem}[Formula for fixed-label mean-field manifolds]\label{corrcap}
    Let a manifold shape $\mathcal{S}$ be given as in the previous section and fix an error tolerance $\varepsilon > 0$. Suppose that for each manifold count $P$, we have a covariance tensor $\Sigma(P) \in \mathbb{R}^{P(D+1) \times P(D+1)}$ and a set of manifold labels $\by(P) \in \mathbb{R}^P$. Assuming that the below limit exists, the capacity $\alpha_\mf(\varepsilon)$ of this model is then given by
    \begin{equation}\label{eq:capacity formula mean-field}
    \alpha_\mf(\varepsilon)^{-1} = \lim_{P \to \infty} \frac{1}{P} \mathbb{E}_{\mathbf{t}} \left[\min_{\bw^+ \in \cA_\mf^+(\epsilon,\by)} \|\mathbf{t} - \bw^+\|^2\right] \, ,
    \end{equation}
    where $\mathbf{t} \sim \mathcal{N}(0, I_{P(D+1)})$ is a standard Gaussian vector, and
    $$
    \cA_\mf(\epsilon,\by) = \left\{\bw \in \mathbb{R}^{P(D+1)} : \left|\bw\cdot((\Sigma(P)^{1/2})^\mu\bs) - y(P)^\mu\right| \leq \varepsilon \textrm{ for all } \mu \in \{1, \dots, P\} \textrm{ and } \mathbf{s} \in \mathcal{S}\right\} \, ,
    $$
    where $(\Sigma(P)^{1/2})^\mu$ denotes the $\mu$-th column block of $\Sigma(P)^{1/2}$, and as before $\cA_\mf^+(\epsilon,\by)$ denotes the cone generated by $\cA_\mf(\epsilon,\by)$.
    
    To be more concrete, let $p_{N,P}(\varepsilon,\Sigma(P), \by(P))$ be the probability of an admissible readout $\bw$ existing when we sample $P$ manifolds in $\mathbb{R}^N$ from our mean-field model, as in the previous section. If we then send $P, N \to \infty$ with constant load $\alpha = P/N < \alpha_\mf(\varepsilon)$, we get $p_{N,P}(\varepsilon) \to 1$. If instead $\alpha > \alpha_\mf(\varepsilon)$, we get $p_{N,P}(\varepsilon) \to 0$.
\end{theorem}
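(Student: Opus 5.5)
The plan is to reduce the mean-field model to the instance-based setting of Theorem~\ref{ddcap}, by writing the random Gaussian embedding as a random projection of a fixed, whitened high-dimensional instance.

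\textbf{Step 1 (whitening).} Write the embedding matrices jointly as $Q = G\,\Sigma(P)^{1/2}$. Here $G \in \mathbb{R}^{N \times P(D+1)}$ has i.i.d.\ $\mathcal{N}(0,1/N)$ entries, and the $\mu$-th manifold is $\mathcal{M}^\mu = G\,(\Sigma^{1/2})^\mu \mathcal{S}$. A readout $\bw \in \mathbb{R}^N$ is admissible exactly when $\bv := G^\top \bw$ lies in $\cA_\mf(\varepsilon,\by)$, since $\bw \cdot G(\Sigma^{1/2})^\mu \bs = (G^\top\bw)\cdot (\Sigma^{1/2})^\mu \bs$. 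Hence regressibility is equivalent to
\[
\cA_\mf \cap \operatorname{im}(G^\top) \neq \varnothing .
\]

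\textbf{Step 2 (reduction to Lemma~\ref{kinematic}).} Set $M := P(D+1)$. Suppose $N \le M$, which is the relevant regime near capacity; if $N$ exceeds $M$, regressibility holds with probability one as soon as $\cA_\mf \neq \varnothing$. Then $G$ almost surely has full rank $N$, and by rotational invariance of the Gaussian, $\operatorname{im}(G^\top)$ is a Haar-uniform $N$-dimensional subspace $QD$ of $\mathbb{R}^{M}$. Arguing exactly as in Lemma~\ref{lem:prob cone}, and assuming $0 \notin \cA_\mf$ (which holds when $\by \neq 0$), we get
\[
p_{N,P} = \mathbb{P}\bigl(\cA_\mf^+ \cap QD \neq \{0\}\bigr).
\]
Now apply Lemma~\ref{kinematic} in ambient dimension $M$, with $\delta(\cA_\mf^+) = M - F_P$ and $F_P := \mathbb{E}_\bt \min_{\bw^+ \in \cA_\mf^+}\|\bt - \bw^+\|^2$. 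Intersection fails with probability at least $1-\eta$ if $N \le F_P - \sqrt{8\log(4/\eta)M}$. It succeeds with probability at least $1-\eta$ if $N \ge F_P + \sqrt{8\log(4/\eta)M}$.

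\textbf{Step 3 (proportional limit).} Divide by $P$, so that $N/P = 1/\alpha$ and $\sqrt{8\log(4/\eta)M}/P = O(\sqrt{(D+1)/P}) \to 0$ for fixed $D$ and $\eta$. By hypothesis $F_P/P \to \alpha_\mf(\varepsilon)^{-1}$.
\begin{itemize}
\item If $\alpha < \alpha_\mf$, then $1/\alpha > \lim F_P/P$, so for large $P$ the success condition holds with any fixed $\eta$, giving $p_{N,P} \to 1$.
\item If $\alpha > \alpha_\mf$, the failure condition holds, giving $p_{N,P} \to 0$.
\end{itemize}
Take $\eta \to 0$ along a diagonal sequence to conclude.

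\textbf{Main obstacle.} Step 1 is the delicate part. The paper normalizes $\mathbb{E}\,Q_{ki}Q_{lj} = \delta_{kl}\Sigma/N$, and the factorization $Q = G\Sigma^{1/2}$ reproduces exactly this covariance. What must be checked is that the image of $G^\top$ is exactly Haar distributed on the Grassmannian. This follows because $GO$ has the same law as $G$ for any orthogonal $O$, so the law of $\operatorname{im}(G^\top)$ is $O(M)$-invariant. The $1/N$ scaling of $G$ does not affect its image.

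A secondary issue is the edge case $N > M$. There $\operatorname{im}(G^\top)$ is all of $\mathbb{R}^M$, and regressibility reduces to $\cA_\mf \neq \varnothing$, which is consistent with the formula since $F_P \le M$. One must also make sure the convex-set and closedness assumptions needed by Lemma~\ref{kinematic} hold. Here $\cA_\mf$ is closed and convex, but its generated cone may fail to be closed; I would handle this by passing to the closure, which does not change the expected squared distance $F_P$.
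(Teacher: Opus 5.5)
Your proposal is correct and is essentially the paper's proof: you whiten via $Q=G\Sigma^{1/2}$ so the mean-field manifolds become a Gaussian random projection of the fixed instance $\{(\Sigma^{1/2})^\mu\mathcal{S}\}\subset\mathbb{R}^{P(D+1)}$, apply the conic kinematic formula (Theorem~\ref{ddcap}) with ambient dimension $P(D+1)$ and projection dimension $N$, and pass to the proportional limit; your observation that $\operatorname{im}(G^\top)$ is exactly Haar-distributed makes rigorous the paper's looser claim that $G$ is ``equivalent to a random projection as $N\to\infty$.'' One small correction: $0\notin\cA_\mf$ holds iff $\max_\mu|y^\mu|>\varepsilon$, not merely when $\by\neq 0$.
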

\begin{proof}[Proof of~\autoref{corrcap}]
    The overall idea of the proof is to rewrite the mean-field manifold model to a random projection problem so that we can invoke~\autoref{ddcap}.

    \vspace{3mm}
    \noindent\textbf{Step 1: Rewrite the mean-field manifold model.}
    Given a mean-field manifold model (as described in~\autoref{sec:si mean-field model}) $\{\cM^\mu\}$ where $\cM^\mu=Q^\mu\cS\subset\Real^N$ with shape $\cS$, correlation tensor $\Sigma$, disorder $\{Q^\mu\}$, and parameters $N,P$, and a fixed label vector $\by\in\Real^P$. Let $\{\overline{\cM}^\mu\subset\Real^{P(D+1)}\}$ be the decorrelated version of manifolds specified by $\overline{\cM}^\mu = (\Sigma^{1/2})^\mu\cS$ where $(\Sigma^{1/2})^\mu$ denotes the $\mu$-th column block of $\Sigma^{1/2}$, and $G\in\Real^{N\times P(D+1)}$ be the decorrelated version of disorder specified by $G=(\Sigma^{-1/2}Q)^\top\in\Real^{N\times P(D+1)}$. Note that now $\{\overline{\cM}^\mu\}$ is deterministic and $G$ has entries drawn i.i.d.~from $\mathcal{N}(0, 1/N)$.
    
    Next, observe that $\cM^\mu = G^\mu \overline{\cM}^\mu$. We can therefore rewrite the condition
    \[ \forall \mu, \forall\bx^\mu\in\cM^\mu,
    |\bw\cdot\bx^\mu-y^\mu|\leq\varepsilon
    \]
    from the definition of $p_{N,P}(\varepsilon,\by)$ as
     \[ \forall \mu, \forall \overline{\bx}^\mu \in \overline{\cM}^\mu, 
    |\bw\cdot G\overline{\bx}^\mu-y^\mu|\leq\varepsilon.
    \]
    To connect this to our notion of instance-based capacity, we use the standard fact that $G \in \mathbb{R}^{N \times P(D+1)}$, with entries drawn i.i.d. from $\mathcal{N}(0, 1/N)$, is equivalent to a random projection from $\mathbb{R}^{P(D+1)}$ to $\mathbb{R}^N$ as $N \to \infty$. Thus, $G$ corresponds to $\Pi^{(N_\proj)}$ from~\autoref{ddcap} and we can think of the mean-field manifolds $\{\cM^\mu\}$ as random projections of the deterministic manifolds $\{\overline{\cM}^\mu\}$ (i.e., $\cM^\mu=G^\mu\overline{\cM}^\mu$ where $G^\mu$ is the $\mu$-th column block of $G$).
    
    \vspace{3mm}
    \noindent\textbf{Step 2: Invoke~\autoref{ddcap} for random projection model.}
    Now that we have established an equivalent random projection model for the mean-field manifold model, we are able to invoke~\autoref{ddcap} and show the concentration of linear regressibility of the latter. Concretely, for every $\eta>0$, we have
    \begin{enumerate}[label=(\roman*)]
        \item If $\frac{N}{P(D+1)} \leq \frac{N_\mf(\varepsilon)}{P(D+1)} - \sqrt{\frac{8\log(4/\eta)}{P(D+1)}}$, then $\mathbb{P}_{G}(\mathcal{A}(G) \neq \varnothing) \leq \eta$;
        \item If $\frac{N}{P(D+1)} \geq \frac{N_\mf(\varepsilon)}{P(D+1)} + \sqrt{\frac{8\log(4/\eta)}{P(D+1)}}$, then $\mathbb{P}_{G}(\mathcal{A}(G) \neq \varnothing) \geq 1 - \eta$,
    \end{enumerate}
    where
    $$
    N_\mf(\varepsilon) = \mathbb{E}_\mathbf{t}\left[\min_{\bw^+ \in \mathcal{A}^+} \|\bm{t} - \bw^+\|^2\right]
    $$
    and
    \[
    \cA(M) = \left\{\bw\in\Real^{P(D+1)}\, :\, |\bw\cdot(M\overline{\bx}^\mu)-y^\mu|\leq\varepsilon ,\ \forall\overline{\bx}^\mu\in\overline{\cM}^\mu,\ \forall\mu\right\} \, .
    \]
    Finally, notice that $\cA_\mf=\cA(I_{P(D+1)})$, $\alpha_\mf^{-1}(\varepsilon)=N_\mf(\varepsilon)/P$, and $p_{N,P}(\varepsilon,\by)=\mathbb{P}_{G}(\mathcal{A}(G))$, this completes the proof for~\autoref{corrcap}.
\end{proof}

\subsection{Formula for random-label mean-field manifolds}\label{sec:SI random label}
In~\autoref{sec:SI formula mean-field}, we derived a formula (\autoref{eq:capacity formula mean-field}) for the manifold capacity of our mean-field model with \emph{fixed} labels. However, the toy models we study in later sections also include randomness in the labels. To incorporate this extra disorder, we consider the probability
\[
p_{N,P}(\varepsilon, \Sigma) = \mathbb{P}_{Q, \by}\left(\exists \bw \in \mathbb{R}^N, \forall \mu, \forall \bx^\mu \in \cM^\mu, |\bw \cdot \bx^\mu - y^\mu| \leq \varepsilon\right) \, ,
\]
of an admissible readout $\bm w$ existing for our model, where this probability is now also evaluated over some distribution of labels. We can write $p_{N,P}(\varepsilon, \Sigma) = \mathbb{E}_\by p_{N,P}(\varepsilon, \Sigma, \by)$ in terms of the probability of an admissible readout existing for our fixed-label model. To show that $p_{N,P}(\varepsilon, \Sigma)$ exhibits a sharp phase transition at some capacity $\alpha = P/N$ in the proportional limit $P,N \to \infty$, it is sufficient to show that the label-conditioned model $p_{N,P}(\varepsilon, \Sigma, \by)$ exhibits a phase transition at the same $\alpha$ for almost all labels $\by$.

To prove this, it in turn suffices to show that the label-conditioned capacity estimator 
\[
\alpha_\mf(\varepsilon, P, \by)^{-1} = \frac{1}{P} \mathbb{E}_\bt \left[\min_{\bw^+ \in \mathcal{A}^+} \|\bm{t} - \bw^+\|^2\right]
\]
from \autoref{corrcap} concentrates around its mean $\mathbb{E}_\by \alpha_\mf(\varepsilon, P, \by)^{-1}$ as $P \to \infty$. If this concentration holds, the capacity of the mean-field model \emph{with random labels} is then well-defined (i.e., a sharp phase transition exists) and is given by
\begin{equation}\label{eq:capacity formula random label mean-field}
\alpha^{-1}_\mf(\varepsilon) = \lim_{P \to \infty} \mathbb{E}_\by [\alpha_\mf(\varepsilon, P, \bm y)^{-1}] \, .
\end{equation}
We will show that this concentration holds for the toy models studied in the following sections.

\section{Regression Capacity of Random Point-Like Manifolds}\label{sec:si point-like manifolds}
Our mean-field theory for regression capacity (\autoref{sec:si mean-field}) enables analytical study of synthetic manifolds. In this section, we focus on deriving closed-form formulas for the regression capacity of random point-like manifolds with different correlation structures. Later in~\autoref{sec:SI corrsphere}, we will study the regression capacity of a data model with spherical manifolds which are correlated in multiple ways.

\subsection{Capacity of Uncorrelated Points}\label{sec:SI uncor points}
We start with studying the capacity of uncorrelated random \emph{points} with uncorrelated target labels.

\subsubsection{Our toy model for uncorrelated points and its regression capacity}
To sample $P$ points $\{\bm{x}^\mu \in \mathbb{R}^N\}_{\mu=1}^P$ embedded in $\mathbb{R}^N$ with expected square-norm $\mathbb{E} \|\bm{x}^\mu\|^2 = r^2$ set by some $r > 0$, we define
\[
x^{\mu}_{k} \overset{i.i.d.}\sim \mathcal{N}\left(0, \frac{r^2}{N}\right) \textrm{ for all } \mu \in \{1, \dots, P\} \textrm{ and } k \in \{1, \dots, N\} \, .
\]
This scaling ensures that $\mathbb{E} \bm{x}^{\mu} \cdot \bm{x}^\nu = r^2\delta_{\mu \nu}$. We sample the target labels $\{y_\mu\}_{\mu=1}^P$ for each point i.i.d. from $\mathcal{N}(0, \sigma^2)$ with some label variance $\sigma^2 > 0$. In the notation of~\autoref{sec:si mean-field model}, we have manifold shape $\mathcal{S} = \{1\} \subseteq \mathbb{R}^{D+1}$ with dimension $D=0$, a correlation tensor $\Sigma \in \mathbb{R}^{P(D+1) \times P(D+1)}$ with entries $\Sigma_{\mu 0}^{\nu 0} = r^2\delta_{\mu \nu}$, and label correlation $H \in \mathbb{R}^{P \times P}$ with entries $H_{\mu \nu} = \sigma^2\delta_{\mu \nu}$.

Here, we use the formula for random-label mean-field manifolds (\autoref{eq:capacity formula random label mean-field}) to derive a simple capacity formula for the toy model described above:
\begin{theorem}[Formula for uncorrelated points]\label{thm:uncorelated points}
The uncorrelated points model described above has capacity given by
\begin{equation}\label{eq:SI uncorr-points-cap}
\alpha^{-1}(\varepsilon, r, \sigma) = \min_{\tau \geq 0} 2\int_{b / a}^\infty \mathcal{D}z (a z - b)^2 \, ,
\end{equation}
where $\mathcal{D}z$ denotes the standard Gaussian measure and
$a = \sqrt{1+\frac{\sigma^2\tau^2}{r^2}}$, $b = \frac{\varepsilon\tau}{r}$.
\end{theorem}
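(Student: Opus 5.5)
The plan is to evaluate the random-label mean-field formula Eq.~\eqref{eq:capacity formula random label mean-field} directly, exploiting the fact that for point-like manifolds the admissible set is a box. Here $D=0$, $\mathcal{S}=\{1\}$ and $\Sigma = r^2 I_P$, so $(\Sigma^{1/2})^\mu\bs = r\mathbf{e}_\mu$. The constraint $|\bw\cdot r\mathbf{e}_\mu - y^\mu|\le\varepsilon$ therefore reads $w_\mu\in I_\mu:=[(y^\mu-\varepsilon)/r,(y^\mu+\varepsilon)/r]$, and $\cA_\mf(\varepsilon,\by)=\prod_\mu I_\mu$. Writing the cone as $\cA_\mf^+=\{\tau\bw:\bw\in\cA_\mf,\tau\ge0\}$, the field becomes
\[
F(\bt,\by)=\min_{\tau\ge0}\,\min_{\bw\in\cA_\mf}\|\bt-\tau\bw\|^2=\min_{\tau\ge0}\sum_{\mu=1}^P d\bigl(t_\mu,\tau I_\mu\bigr)^2 .
\]
The inner minimization decouples across coordinates because the box is a product. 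The one-dimensional distance is explicit:
\[
d(t_\mu,\tau I_\mu)=\bigl(|t_\mu-\tau y^\mu/r|-\tau\varepsilon/r\bigr)_+ .
\]

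The next step is to exchange the minimum over $\tau$ with the normalized sum. Set $f_\mu(\tau)=d(t_\mu,\tau I_\mu)^2$; these are i.i.d.\ in $\mu$, since the pairs $(t_\mu,y^\mu)$ are i.i.d. By the law of large numbers, $\frac1P\sum_\mu f_\mu(\tau)\to \phi(\tau):=\Exp f_1(\tau)$ for each $\tau$. I would upgrade this to uniform convergence on a compact interval $[0,T]$ using convexity in $\tau$. Each $f_\mu$ is a convex function of $\tau$, because it is the squared distance from the point $(t_\mu,0)$ to a convex cone, restricted to a line. Pointwise convergence of convex functions is automatically uniform on compacts. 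To reduce to a compact interval, note the following:
\begin{itemize}
\item at $\tau=0$ the normalized sum is $\frac1P\sum_\mu t_\mu^2\approx 1$;
\item $\phi(\tau)\ge c\tau^2$ for large $\tau$, since $\varepsilon>0$ is finite, so the infimum is attained in a bounded range w.h.p.
\end{itemize}
This gives $\frac1P F(\bt,\by)\to\min_{\tau\ge0}\phi(\tau)$ in probability. The quantity is bounded by $\frac1P\|\bt\|^2$, so the convergence also holds in expectation. The same concentration shows that $\alpha_\mf(\varepsilon,P,\by)^{-1}=\frac1P\Exp_\bt F$ concentrates in $\by$, which is exactly the hypothesis required in Sec.~\ref{sec:SI random label}.

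Finally, I would compute $\phi$. Since $t_1\sim\mathcal N(0,1)$ and $y^1\sim\mathcal N(0,\sigma^2)$ are independent, $t_1-\tau y^1/r\overset{d}{=}az$ with $z\sim\mathcal N(0,1)$ and $a=\sqrt{1+\sigma^2\tau^2/r^2}$. Setting $b=\varepsilon\tau/r$, we get $\phi(\tau)=\Exp(a|z|-b)_+^2=2\int_{b/a}^\infty\mathcal Dz\,(az-b)^2$ by symmetry, which is exactly Eq.~\eqref{eq:SI uncorr-points-cap}.

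The main obstacle is rigorously justifying $\frac1P\Exp\min_\tau=\min_\tau\Exp$, i.e.\ the uniform-in-$\tau$ law of large numbers. My first attempt would be the convexity argument above together with the coercivity bound on $\phi$ to confine $\tau$ to a compact set.
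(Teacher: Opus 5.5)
Your proposal is correct and follows the paper's own route: you use the coordinatewise factorization of the box constraint, the identification $t_\mu-\tau y^\mu/r\overset{d}{=}az$, and the law of large numbers giving $2\int_{b/a}^\infty\mathcal{D}z\,(az-b)^2$ for each fixed $\tau$. Where you go beyond the paper is in justifying the exchange of $\min_{\tau\ge0}$ with the $P\to\infty$ limit and the expectations, using convexity of each $f_\mu$ in $\tau$ together with coercivity and domination by $\frac1P\|\bt\|^2$; the paper simply asserts this step with ``it follows that.'' The convexity is seen most cleanly from the explicit form $(|t_\mu-\tau y^\mu/r|-\tau\varepsilon/r)_+^2$, since this is the square of the nonnegative part of a convex function of $\tau$.
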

We interpret this formula and numerically validate it in the main text. The derivation details for~\autoref{thm:uncorelated points} is provided in the following subsubsection for the completeness of presentation.

\subsubsection{Derivation details for~\autoref{thm:uncorelated points}}

As described in \autoref{sec:SI random label}, we will first study the label-conditioned capacity estimator
\[
\alpha_\mf(\varepsilon, P, \bm y)^{-1} = \frac{1}{P} \mathbb{E}_\bt \left[\min_{\bw^+ \in \cA^+} \|\bt - \bw^+\|^2\right] \, ,
\]
which we can rewrite as $\alpha_\mf(\varepsilon, P, \bm y)^{-1} = \mathbb{E}_\bt \min_{\tau \geq 0} f(\bt, \by, \tau)$ in terms of the field $f(\bt, \by, \tau) = \frac{1}{P}\min_{\bw \in \cA(\by)} \|\bt - \tau \bw\|^2$ and constraint set $\mathcal{A}(\by) = \{\bw \in \mathbb{R}^P : |r w^\mu - y^\mu| \leq \varepsilon\}$.
For the uncorrelated case, this field factorizes into $P$ terms:
\[
f(\bt, \by, \tau) = \frac{1}{P} \sum_{\mu=1}^P \min_{\substack{w^\mu \in \mathbb{R} \\ |r w^\mu - y^\mu| \leq \varepsilon}} (t^\mu - \tau w^\mu)^2 \, .
\]
We can make a change of variables to $h^\mu = rw^\mu - y^\mu$ and write
\[
f(\bt, \by, \tau) = \frac{1}{P} \sum_{\mu=1}^P \min_{\substack{h^\mu \in \mathbb{R} \\ |h^\mu| \leq \varepsilon}} \left(\left(t^\mu -\frac{\tau}{r} y^\mu\right) - \frac{\tau}{r} h^\mu\right)^2 \, .
\]
Observe that $f(\bt, \by, \tau)$ is identically distributed to
\[
g(\bz, \tau) = \frac{1}{P} \sum_{\mu=1}^P \min_{\substack{h^\mu \in \mathbb{R} \\ |h^\mu| \leq \varepsilon}} \left(a z^\mu - \frac{\tau}{r} h^\mu\right)^2
\]
with $\bz \sim \mathcal{N}(0, I_P)$ and $a = \sqrt{1 + \frac{\sigma^2\tau^2}{r^2}}$. Each term of the above sum is now i.i.d., so by the law of large numbers $g(\bz, \tau)$ concentrates to the deterministic limit
\[
\tilde g(\tau) = \mathbb{E}_{z \sim \mathcal{N}(0, 1)} \min_{|h| \leq \varepsilon} \left(az - \frac{\tau}{r} h\right)^2 = 2 \int_{b/a}^\infty \mathcal{D} z (a z - b)^2 \, ,
\]
as $P \to \infty$, where $b = \frac{\varepsilon \tau}{r}$. Thus, $f(\bt, \by, \tau)$ also concentrates around $\tilde g(\tau)$ as $P \to \infty$. It follows that the label-conditioned capacity $\alpha_\mf(\varepsilon, P, \bm y)^{-1} = \mathbb{E}_\bt \min_{\tau \geq 0} f(\bt, \by, \tau)$ concentrates around $\min_{\tau \geq 0} \tilde g(\tau)$ as $P \to \infty$, which is therefore the capacity of this uncorrelated points model as argued in~\autoref{sec:SI random label}, which is what we wanted to show in~\autoref{thm:uncorelated points}.

\subsection{Capacity of Correlated Points}\label{sec:SI correlated points}
Next, we introduce uniform correlations between the points $\{\bm{x}^\mu \in \mathbb{R}^N\}_{\mu=1}^P$, controlled by a parameter $\psi \in [0, 1)$, and the labels $\{y^\mu \in \mathbb{R}\}_{\mu=1}^P$, controlled by a parameter $\rho \in [0, 1)$. We show in this subsection that the capacity of this model reduces to the capacity of the uncorrelated points model with rescaled parameters. 

\subsubsection{Our toy model for correlated points and its regression capacity}
In the notation of~\autoref{sec:si mean-field}, we define the data correlation tensor by $\Sigma_{\mu 0}^{\nu 0} = r^2[(1-\psi)\delta_{\mu \nu} + \psi]$ and the label correlation matrix by $H_{\mu\nu} = \sigma^2 [(1-\rho)\delta_{\mu\nu}+\rho]$.

Here, we derive a simple capacity formula for the toy model described above by reducing the capacity formula for a correlated point model to an equivalent uncorrelated point model.
\begin{theorem}[Formula for correlated points]\label{thm:corelated points}
Given the model for correlated points described above. Let $\alpha^{-1}(\varepsilon, r,\sigma,\rho,\psi,P)$ be the capacity formula from~\autoref{corrcap} for this model, we have that $\alpha^{-1}(\varepsilon,r, \sigma,\rho,\psi, P)\xrightarrow{P \to \infty}\alpha^{-1}(\varepsilon,r, \sigma,\rho,\psi)$ with
\begin{equation}\label{eq:SI corr-points-cap}
\alpha^{-1}(\varepsilon,r, \sigma,\rho,\psi) = \alpha^{-1}(\varepsilon,r\sqrt{1-\psi},\sigma\sqrt{1-\rho}) = 2\int_{b / a}^\infty \mathcal{D}z (a z - b)^2
\end{equation}
where $\mathcal{D}z$ denotes the standard Gaussian measure and
$a = \sqrt{1+\frac{(1-\rho)\sigma^2\tau^2}{(1-\psi)r^2}}$, $b = \frac{\varepsilon\tau}{(1-\psi)r}$.
\end{theorem}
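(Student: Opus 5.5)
The plan is to reduce the correlated model to the uncorrelated one of \autoref{thm:uncorelated points} by showing that the uniform correlations only add a single \enquote{common-mode} direction, which changes $N_\mf(\varepsilon)=\mathbb{E}_\bt\min_{\bw^+\in\cA^+}\|\bt-\bw^+\|^2$ by $O(1)$ and hence does not affect $\alpha^{-1}=\lim_P N_\mf/P$.

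\textbf{Step 1: explicit square root.} Since $D=0$, the shape is $\cS=\{1\}$. The covariance is $\Sigma=r^2[(1-\psi)I_P+\psi\mathbf{1}\mathbf{1}^\top]$, so
\[
\Sigma^{1/2}=r\sqrt{1-\psi}\,I_P+\frac{c_P}{P}\mathbf{1}\mathbf{1}^\top,\qquad c_P=r\left(\sqrt{1-\psi+P\psi}-\sqrt{1-\psi}\right).
\]
Write the labels as $\by=\sigma\sqrt{1-\rho}\,\bz+\sigma\sqrt{\rho}\,g\mathbf{1}$, with $\bz\sim\mathcal{N}(0,I_P)$ and $g\sim\mathcal{N}(0,1)$ independent. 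The admissible set of \autoref{corrcap} then becomes
\[
\cA=\Big\{\bw:\ \big|r\sqrt{1-\psi}\,w^\mu+c_P\bar w-\sigma\sqrt{\rho}\,g-\sigma\sqrt{1-\rho}\,z^\mu\big|\le\varepsilon\ \ \forall\mu\Big\},\qquad \bar w=\tfrac1P\mathbf{1}\cdot\bw .
\]

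\textbf{Step 2: isolate the common mode.} All coordinates feel the same scalar shift $s(\bw)=c_P\bar w-\sigma\sqrt\rho\,g$. Compare $\cA$ with the uncorrelated set
\[
\cA_0=\big\{\bw:\ |r'w^\mu-\sigma' z^\mu|\le\varepsilon\ \forall\mu\big\},\qquad r'=r\sqrt{1-\psi},\quad \sigma'=\sigma\sqrt{1-\rho}.
\]
The two sets coincide on the orthogonal complement $\mathbf{1}^\perp$ up to an affine reparametrization along $\mathbf{1}$. Concretely, $\cA$ is the image of $\cA_0$ under an invertible map that acts as the identity on $\mathbf{1}^\perp$ and shifts and rescales along $\mathbf{1}$. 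It follows that the cones $\cA^+$ and $\cA_0^+$ agree after intersecting with a common codimension-one subspace, and both are contained in their sums with $\mathrm{span}(\mathbf{1})$. I then use standard statistical-dimension facts from \cite{Amelunxen2014}:
\begin{itemize}
\item monotonicity under inclusion;
\item adding a line raises $\delta$ by at most $1$, i.e.\ $\delta(C+L)\le\delta(C)+1$;
\item intersecting with a hyperplane lowers $\delta$ by at most $1$.
\end{itemize}
Together these should give $|N_\mf(\cA)-N_\mf(\cA_0)|\le 2$, uniformly in $\bz$ and $g$.

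\textbf{Step 3: pass to the limit.} Dividing by $P$ and letting $P\to\infty$, the label-conditioned estimator for the correlated model has the same limit as the one for uncorrelated points with parameters $(r',\sigma')$. That limit concentrates by the law-of-large-numbers argument already used for \autoref{thm:uncorelated points}, so the averaging over random labels in \autoref{sec:SI random label} applies. Substituting $r'$ and $\sigma'$ into \eqref{eq:SI uncorr-points-cap} gives the claimed formula, with the minimum over $\tau\ge0$ understood. In particular $a=\sqrt{1+\sigma'^2\tau^2/r'^2}$, and $b=\varepsilon\tau/r'$ up to the reparametrization of $\tau$.

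\textbf{Main obstacle.} I expect Step 2 to be the hardest. The sets $\cA$ and $\cA_0$ are not cones, and the shift $c_P\bar w$ mixes the $\mathbf{1}$ direction with the offset $\sigma\sqrt\rho\, g$. The map relating them is affine, not linear, so the relation between the generated cones is not immediate. My first attempt is to work in the lifted space $\Real^{P}\times\Real$ with homogenizing variable $\tau$, where both constraints become linear: $|r'w^\mu+(\text{common term})-\tau y'^\mu|\le\tau\varepsilon$. In that form the two cones differ by a linear change of coordinates acting only on the two-dimensional span of $\mathbf{1}$ and the $\tau$ axis. The Gaussian-width comparison is then a rank-two perturbation, contributing $O(1)$ to the expected squared distance. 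This needs care, since Gaussian measure is not invariant under non-orthogonal maps, and I would control it through the monotonicity and sum bounds rather than by a change of variables.
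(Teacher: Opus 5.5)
\textbf{There is a genuine gap, in Step~2.} Your Step~1 is correct and is also the paper's starting point. Your homogenized lifting is the same device the paper uses when it writes the distance to $\cA^+$ as $\min_{\tau\ge0}\min_{\bw\in\cA}\|\bt-\tau\bw\|^2$. The problem is that two of the three ``standard facts'' you invoke are false, and they fail precisely in the direction $\mathbf{1}$. The nonnegative orthant has $\delta(\Real^P_{\geq0})=P/2$, yet $\Real^P_{\geq0}\cap\mathbf{1}^\perp=\{0\}$ and $\Real^P_{\geq0}+\mathrm{span}(\mathbf{1})=\Real^P$. So a hyperplane section, or the addition of a line, can move $\delta$ by $P/2$; only monotonicity survives. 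There is also no principle that a rank-two linear change of coordinates costs $O(1)$: rescaling a circular cone along its axis changes $\delta$ by order $P$. Your rescaling factor along $\mathbf{1}$, $1+c_P/r'=\sqrt{(1-\psi+P\psi)/(1-\psi)}$, is not even bounded. Finally, the claim that $\cA^+$ and $\cA_0^+$ agree on a common linear hyperplane is unjustified. On $\mathbf{1}^\perp$ the constraints defining $\cA$ and $\cA_0$ still differ by the shift $\sigma\sqrt{\rho}\,g$ in every coordinate. What is true is that $\cA$ and $\cA_0$ coincide on the affine slice $\{\bar w=\sigma\sqrt{\rho}\,g/c_P\}$, and the cone over an affine slice is not a linear section of either cone.

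\textbf{The strategy proves too much.} Step~2 never uses $\psi>0$. At $\psi=0$ you have $c_P=0$ and $\cA=\cA_0+(\sigma\sqrt{\rho}\,g/r)\mathbf{1}$, a pure translate along $\mathbf{1}$, so your argument would give $|N_\mf(\cA)-N_\mf(\cA_0)|\le 2$. But at $\psi=0$ the coordinates decouple. Conditionally on $g$, with $a=\sqrt{1+\sigma'^2\tau^2/r^2}$ and $b=\varepsilon\tau/r$, one gets $N_\mf(\cA)/P\to\min_{\tau\ge0}\mathbb{E}_{\zeta}\operatorname{dist}^2(a\zeta-\tau\sigma\sqrt{\rho}\,g/r,[-b,b])$. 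Since $c\mapsto\mathbb{E}_\zeta\operatorname{dist}^2(a\zeta-c,[-b,b])$ is even and strictly convex, this exceeds the $g=0$ value by an order-one amount. So the two critical dimensions differ by order $P$, and in fact the stated formula fails for $\psi=0<\rho$.

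\textbf{What the reduction actually needs.} It rests on two things a dimension count cannot see, and both are in the paper's proof.
\begin{enumerate}[label=(\roman*)]
\item $\psi>0$ turns the common data direction into an essentially free bias that absorbs the label common mode. At fixed $\tau$, the $\mathbf{1}$-component of the field is $(\tilde a_0\bar z-\tilde b_0\bar v)^2$ with $\tilde a_0=O(1)$ and $\tilde b_0=\tau/(r\sqrt{1-\psi+P\psi})=O(P^{-1/2})$, so it vanishes.
\item You must then show this free bias is not used, i.e.\ that the box-constrained minimizer of $\|\Pi(a\bz-\bv)\|$ has $\bar v^*\to0$. This is Proposition~\ref{prop:vbar}. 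It is proved from the KKT clipping formula for $v^\mu$, Lipschitz concentration in $\bz$ of the right-hand side of the self-consistent equation $\bar v=f(\bar v,\bz)$, and the contraction $|\mathbb{E}f(\bar v,\bz)|\le\operatorname{erf}(\varepsilon/|a|)\,|\bar v|$. That contraction uses that the labels are centered.
\end{enumerate}
Note that $\cA_0^{+}+\mathrm{span}(\mathbf{1})$ is the reduced bias-enabled cone of Section~\ref{sec:SI bias}. So your ``adding a line costs one dimension'' is, here, the nontrivial claim that an adaptive bias does not change capacity, and that claim fails at order $P$ once the labels have a nonzero mean.

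\textbf{Minor.} $b=\varepsilon\tau/r'$ is the correct substitution. The $(1-\psi)$ in the stated $b$ is a typo, not a reparametrization of $\tau$, since $a$ is written with the same $\tau$.
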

We interpret this formula and numerically validate it in the main text. The derivation details for~\autoref{thm:corelated points} is provided in the following subsubsection for the completeness of presentation.

\subsubsection{Derivation details for~\autoref{thm:corelated points}}
For a given label tolerance $\varepsilon > 0$,~\autoref{corrcap} gives us a capacity estimator
\[
\alpha^{-1}(\varepsilon) = \min_{\tau \geq 0} \frac{1}{P} \mathbb{E}_{\bm{t},\bm{y}} \min_{\bm{u} \in \mathcal{B}(\bm{y})} \left\Vert \bm{t} - \tau \Sigma^{-\frac{1}{2}} \bm{u} \right\Vert^2 \, ,
\]
where
\[
\mathcal{B}(\bm{y}) = \left\{\bm{u} \in \mathbb{R}^P : |u_\mu - y_\mu| \leq \varepsilon \textrm{ for all } \mu \in \{1, \dots, P\}\right\} \, .
\]
As in the previous section, our goal is to compute
$\lim_{P \to \infty} \alpha^{-1}(\varepsilon, P).$
Similarly to the previous section, we can define a shifted variable $\bm{v} = \bm{u} - \bm{y}$ so that our estimator becomes
\begin{align*}
    \alpha^{-1}(\varepsilon, P) &= \frac{1}{P} \min_{\tau \geq 0} \mathbb{E}_{\bm{t},\bm{y}} \min_{\bm{v} \in \mathcal{B}(0)} \left\Vert \bm{t} - \tau\Sigma^{-\frac{1}{2}} (\bm{v}+\bm{y}) \right\Vert^2 \\
    &= \frac{1}{P} \min_{\tau \geq 0} \mathbb{E}_{\bm{t},\bm{y}} \min_{\bm{v} \in \mathcal{B}(0)} \left\Vert \left(\bm{t}-\tau\Sigma^{-\frac{1}{2}} \bm{y}\right) - \tau\Sigma^{-\frac{1}{2}} \bm{v} \right\Vert^2 \\
    &= \frac{1}{P} \min_{\tau \geq 0} \mathbb{E}_{\bm{z}} \min_{\bm{v} \in \mathcal{B}(0)} \left\Vert \Omega^{1/2}\bm{z} - \tau\Sigma^{-\frac{1}{2}} \bm{v} \right\Vert^2, \\
\end{align*}
where $\Omega = I + \tau^2 \Sigma^{-\frac{1}{2}}H\Sigma^{-\frac{1}{2}}$, so that we have combined $\bm{t}$ and $\bm{y}$ into a single random vector.

The key to analyzing this model is to decompose the correlation matrices into their eigencomponents, with one eigenspace spanned by the vector $\mathbf{1}/\sqrt{P}$, where $\bm 1 = (1, 1, \dots, 1)$, and another $(P-1)$-dimensional eigenspace orthogonal to this vector. In particular, we have:
\begin{align*}
    \Sigma &= r^2\left[(1-\psi)I + \psi \mathbf{1}\mathbf{1}^\top\right] \\
    &= r^2(1-\psi) \Pi + r^2(1-\psi+P\psi)\left[\frac{1}{P}\mathbf{1}\mathbf{1}^\top\right],
\end{align*}
and
\begin{align*}
    H &= \sigma^2\left[(1-\rho)I + \rho \mathbf{1}\mathbf{1}^\top\right] \\
    &= \sigma^2(1-\rho) \Pi + \sigma^2(1-\rho+P\rho)\left[\frac{1}{P}\mathbf{1}\mathbf{1}^\top\right],
\end{align*}
where $\Pi = I - \frac{1}{P} \bm 1 \bm 1^\top$. It follows that
\begin{align*}
    \Sigma^{-1/2} &= \frac{1}{r\sqrt{1-\psi}} \Pi + \frac{1}{r\sqrt{1-\psi+P\psi}}\left[\frac{1}{P}\mathbf{1}\mathbf{1}^\top\right],
\end{align*}
and
\begin{align*}
    \Omega^{1/2} &= \sqrt{1 + \tau^2 \frac{\sigma^2(1-\rho)}{r^2(1-\psi)}} \Pi + \sqrt{1 + \tau^2\frac{\sigma^2(1-\rho+P\rho)}{r^2(1-\psi+P\psi)}}\left[\frac{1}{P}\mathbf{1}\mathbf{1}^\top\right].
\end{align*}
We can write our capacity estimator as:
$$\alpha^{-1}(\varepsilon, P) = \min_{\tau \geq 0} \mathbb{E}_{\bm{z}} f_{\tau}(\bm{z}),$$
where we split the field $f_{\tau}(\bm{z})$ into two terms corresponding to the two eigenspaces of the correlation matrices:
\begin{align*}
    f_{\tau}(\bm{z}) = \min_{\bm{v} \in \mathcal{B}(0)} &\frac{1}{P}\left\Vert\sqrt{1+\tau^2\frac{\sigma^2(1-\rho)}{r^2(1-\psi)}} \Pi \bm{z} - \frac{\tau}{r\sqrt{1-\psi}} \Pi \bm{v}\right\Vert^2 \\
    +\frac{1}{P}&\left\Vert\sqrt{1+\tau^2\frac{\sigma^2(1-\rho+P\rho)}{r^2(1-\psi+P\psi)}} \left[\frac{1}{P}\mathbf{1}\mathbf{1}^\top\right]\bm{z} - \frac{\tau}{r\sqrt{1-\psi+P\psi}} \left[\frac{1}{P}\mathbf{1}\mathbf{1}^\top\right]\bm{v}\right\Vert^2.
\end{align*}

Notice that we can rewrite the second term above as
\begin{equation}\label{eq:uncorr_pt_vanishing_term}
    \left(\sqrt{1+\tau^2\frac{\sigma^2(1-\rho+P\rho)}{r^2(1-\psi+P\psi)}} \frac{1}{P} \mathbf{1}^\top \bm{z} - \frac{\tau}{Pr\sqrt{1-\psi+P\psi}} \mathbf{1}^\top \bm{v}^*\right)^2,
\end{equation}
where $\bm{v}^*$ is a minimizing $\bm{v}$. Since $\bm{z}$ is just a standard Gaussian vector, the law of large numbers gives
\[
\frac{1}{P} \mathbf{1}^\top \bm{z} \overset{P}\to 0 \, ,
\]
where $\overset{P}\to$ denotes convergence in probability. Since
\[
\sqrt{1+\tau^2\frac{\sigma^2(1-\rho+P\rho)}{r^2(1-\psi+P\psi)}} = O(1) \, ,
\]
we have
\[
\sqrt{1+\tau^2\frac{\sigma^2(1-\rho+P\rho)}{r^2(1-\psi+P\psi)}} \frac{1}{P}\mathbf{1}^\top \bm{z} \overset{P}\to 0 \, .
\]
As for the second term, we must always have $|\mathbf{1}^\top \bm{v}^*/P| \leq \varepsilon$ because $\bm{v}^* \in \mathcal{B}(0)$. Thus,
\[
\left|\frac{\tau}{Pr\sqrt{1-\psi+P\psi}}\mathbf{1}^\top \bm{v}^*\right| \leq \frac{\varepsilon \tau}{r\sqrt{1-\psi+P\psi}} \to 0 \, ,
\]
so the second term in Eq.~\eqref{eq:uncorr_pt_vanishing_term} and therefore all of Eq.~\eqref{eq:uncorr_pt_vanishing_term} converges in probability to zero, \emph{regardless of what the optimal $\bm{v}^*$ actually is}. For the purpose of computing the asymptotic capacity of our model, we can therefore replace $f_\tau(\bm{z})$ with the simpler field
\[
\tilde f_{\tau}(\bm{z}) = \min_{\bm{v} \in \mathcal{B}(0)} \frac{1}{P}\left\Vert\sqrt{1+\tau^2\frac{\sigma^2(1-\rho)}{r^2(1-\psi)}} \Pi \bm{z} - \frac{\tau}{r\sqrt{1-\psi}} \Pi \bm{v}\right\Vert^2 \, .
\]
Except for the interspersed projection matrices $\Pi$, this looks a lot like the capacity estimator for the \emph{uncorrelated} points model studied in the previous section with rescaled parameters $r_\mathrm{equiv} = r\sqrt{1-\psi}$ and $\sigma_\mathrm{equiv} = \sigma\sqrt{1-\rho}$. Indeed, we will show that the asymptotic capacity of our correlated points model is equal to the capacity of this uncorrelated model. The key step is to show that $\bar{v} = \frac{1}{P}\mathbf{1}^\top \bm{v}^*$ converges to zero as $P \to \infty$. This is true for the uncorrelated case ($\psi, \rho = 0$) by the law of large numbers since all the components $\{v_\mu^*\}$ are i.i.d., but it takes a little work to show that this also holds for the correlated case. Since we will reuse the same argument later in \autoref{sec:SI corrsphere}, we abstract it into a proposition that we prove at the end of this section:

\begin{proposition}\label{prop:vbar}
   Let $\bm v^*(\bm z)$ be a minimizer of $\|\Pi (a \bm z - \bm v)\|$ subject to $\bm v \in B$, where $\bm z \in \mathbb{R}^P$ is a standard Gaussian vector and $B = \prod_{\mu=1}^P [-\varepsilon^\mu, \varepsilon^\mu]$ for some sequence $\{\varepsilon^\mu\}_{\mu=1}^P$ satisfying $0 \leq \varepsilon^\mu \leq \varepsilon$ for all $\mu$ and some fixed $\varepsilon \geq 0$. Then
   \[
   \mathbb{P}_z(|\bar v^*(z)| > \eta) \leq \exp(-C\eta^2(1 - \erf(\varepsilon/|a|))^2 P/a^2)
   \]
   where $\bar x := \bm{1}^\top \bm x / P$ and $C > 0$ is a universal constant. In particular, for any fixed $\varepsilon$ and $a$, $\bar v$ converges in probability to $0$ as $P \to \infty$.
\end{proposition}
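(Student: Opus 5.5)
The plan is to reduce the minimization over the box $B$ to a one-dimensional fixed-point equation for a scalar shift, and then control that scalar by a monotonicity argument combined with Gaussian concentration. The first observation is that $\|\Pi(a\bm z-\bm v)\|^2=\min_{c\in\Real}\|a\bm z-c\bm 1-\bm v\|^2$, because $\Pi$ removes the mean. The problem is therefore equivalent to the joint convex minimization of $\sum_\mu (a z^\mu - c - v^\mu)^2$ over $c\in\Real$ and $\bm v\in B$.

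\textbf{Step 1: optimality conditions.} For fixed $c$ the problem separates across coordinates, and the minimizer is $v^\mu=\mathrm{clip}_{\varepsilon^\mu}(a z^\mu-c)$. Writing $S_{\varepsilon^\mu}(x)=x-\mathrm{clip}_{\varepsilon^\mu}(x)$ for the soft-threshold, optimality in $c$ reads $c^*=\frac1P\sum_\mu(a z^\mu-v^{*\mu})$. Equivalently, $c^*$ is a root of
\[
\phi(c):=\frac1P\sum_{\mu=1}^P S_{\varepsilon^\mu}(a z^\mu-c).
\]
The same identity gives $\bar v^*=a\bar z-c^*$. Here $\phi$ is nonincreasing in $c$. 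If the minimizer is not unique, we pick a measurable selection, or note that any root works by the argument below.

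\textbf{Step 2: reduction of the event.} From $\bar v^* = a\bar z - c^*$, the event $|\bar v^*|>\eta$ is contained in $\{|a\bar z|>\eta/2\}\cup\{|c^*|>\eta/2\}$. The first event has probability at most $2\exp(-P\eta^2/(8a^2))$. Since $\phi$ is monotone, $c^*>\eta/2$ forces $\phi(\eta/2)\ge 0$, and $c^*<-\eta/2$ forces $\phi(-\eta/2)\le0$. By the symmetry $z\mapsto -z$ it suffices to handle the first case.

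\textbf{Step 3: expectation of $\phi$.} Since $z^\mu$ is symmetric and $S$ is odd, $\Exp\phi(0)=0$. Moreover,
\[
\frac{d}{dc}\Exp S_{\varepsilon^\mu}(az-c)=-\mathbb P(|az-c|>\varepsilon^\mu)\le -\mathbb P(az>\varepsilon+|c|).
\]
For $|c|\le\eta/2$, which we may take to be at most $\varepsilon$ or at most a constant times $|a|$, this probability is at least a constant multiple of $1-\erf(\varepsilon/|a|)$, up to the usual $\sqrt2$ rescaling absorbed into $C$. Integrating from $0$ to $\eta/2$ gives $\Exp\phi(\eta/2)\le -c_0\,\eta\,(1-\erf(\varepsilon/|a|))$. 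For large $\eta$ we have the trivial bound $|\bar v^*|\le\varepsilon$, which handles the remaining range.

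\textbf{Step 4: concentration.} $\phi(\eta/2)$ is an average of independent terms, and each term is a $|a|$-Lipschitz function of a single standard Gaussian $z^\mu$. Gaussian concentration for Lipschitz functions (or Hoeffding's inequality for sub-Gaussian summands) gives
\[
\mathbb P\bigl(\phi(\eta/2)-\Exp\phi(\eta/2)\ge t\bigr)\le\exp(-Pt^2/(2a^2)).
\]
Taking $t=c_0\eta(1-\erf(\varepsilon/|a|))$ yields the claimed bound. The Gaussian tail from Step 2 is dominated by this one because $1-\erf\le1$, and all constants are absorbed into $C$. Convergence in probability of $\bar v^*$ to $0$ then follows immediately.

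The main obstacle is Step 3: obtaining a lower bound on the slope of $\Exp\phi$ that is uniform over $c\in[0,\eta/2]$ and over the heterogeneous thresholds $\varepsilon^\mu\le\varepsilon$, so that the factor $1-\erf(\varepsilon/|a|)$ appears cleanly. If the direct bound is awkward, the first thing to try is the cruder estimate $\mathbb P(|az-c|>\varepsilon^\mu)\ge\mathbb P(az>\varepsilon+\eta/2)$, restricting to $\eta\lesssim|a|$, since for larger $\eta$ the event is empty or trivially controlled.
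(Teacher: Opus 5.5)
Your architecture is sound, but Step 3 is not established by the argument you give. You correctly flag it as the crux, since it is the only place the factor $1-\operatorname{erf}(\varepsilon/|a|)$ can enter.

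\textbf{The gap in Step 3.} Your pointwise estimate is $\mathbb{P}(|az-c|>\varepsilon^\mu)\ge\mathbb{P}(az>\varepsilon+|c|)=\tfrac12\operatorname{erfc}\bigl((\varepsilon+|c|)/(\sqrt2|a|)\bigr)$. For $|c|\le(\sqrt2-1)\varepsilon$ this is indeed at least $\tfrac12\operatorname{erfc}(\varepsilon/|a|)$, because the $\sqrt2$ is slack in your favour.

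Your integration range $|c|\le\eta/2$ goes further than that. Even after discarding the trivial case $\eta\ge\varepsilon$, it reaches almost $\varepsilon/2>(\sqrt2-1)\varepsilon$. Near $|c|=\varepsilon/2$, the ratio of your bound to $\operatorname{erfc}(\varepsilon/|a|)$ decays like $\exp(-\varepsilon^2/(8a^2))$ as $\varepsilon/|a|\to\infty$, because $3/(2\sqrt2)>1$. A rescaling inside the argument of $\operatorname{erf}$ cannot be absorbed into the constant $C$ in front. Allowing $|c|\le\varepsilon$, as you suggest, is worse still.

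Your fallback does not close this either. When $\varepsilon\gg|a|$, the range $|a|\lesssim\eta<\varepsilon$ is nonempty, and there the event $|\bar v^*|>\eta$ is not empty. The target bound $\exp\bigl(-C\eta^2(1-\operatorname{erf}(\varepsilon/|a|))^2P/a^2\bigr)$ also becomes \emph{stronger} as $\eta$ grows, so it cannot be inherited from smaller $\eta$.

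\textbf{The missing ingredient.} You need a slope bound that is uniform in $c$. The Gaussian mass of a window of fixed length is largest when the window is centred at $0$, so for every $c$
\[
\mathbb{P}(|az-c|>\varepsilon^\mu)\ \ge\ \mathbb{P}(|az-c|>\varepsilon)\ \ge\ \mathbb{P}(|az|>\varepsilon)\ =\ 1-\operatorname{erf}\bigl(\varepsilon/(\sqrt2|a|)\bigr)\ \ge\ 1-\operatorname{erf}(\varepsilon/|a|).
\]
This gives $\mathbb{E}\phi(\eta/2)\le-\tfrac{\eta}{2}\bigl(1-\operatorname{erf}(\varepsilon/|a|)\bigr)$ for every $\eta>0$, with no case analysis. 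Step 4 with $t=\tfrac{\eta}{2}\bigl(1-\operatorname{erf}(\varepsilon/|a|)\bigr)$ then finishes. This is exactly the content of the paper's bound $\partial_{\bar v}\mathbb{E}\tilde f\le\operatorname{erf}(\varepsilon/(|\tilde a|\sqrt2))$.

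Alternatively, you can keep your one-sided estimate but integrate only over $0\le c\le\min\{\eta/2,(\sqrt2-1)\varepsilon\}$, using that $\mathbb{E}\phi$ is nonincreasing. For $\eta\le\varepsilon$ this interval has length at least $(\sqrt2-1)\eta$.

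\textbf{Comparison with the paper.} With Step 3 repaired, the rest of your proof is correct. It is a tidier version of the paper's self-consistency, contraction and concentration argument. Parametrizing by the shift $c$ makes the summands independent: there is no $\Pi\bm z$, no $\tilde a=a\sqrt{1-1/P}$, and the heterogeneous $\varepsilon^\mu$ are handled in one line. The cost is a separate tail bound for $a\bar z$.

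More substantively, your Step 2 reduces everything to concentration of $\phi$ at the \emph{deterministic} point $\eta/2$. The paper instead applies a concentration inequality proved for fixed $\bar v$ at the data-dependent $\bar v^*(\bm z)$. Your monotonicity argument is precisely what is needed to make that step rigorous.
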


This proposition can be applied to the minimizer $\bm v = \bm v^*(\bm z)$ of the field $\tilde f_\tau(\bm z)$ to show that $\bar v \to 0$ as $P \to \infty$.
The capacity for the \emph{uncorrelated} points model with $r_\mathrm{equiv} = r\sqrt{1-\psi}$ and $\sigma_\mathrm{equiv} = \sigma\sqrt{1-\rho}$ can be written as
\[
\alpha^{-1}_\mathrm{uncorr}(\varepsilon, P) = \min_{\tau \geq 0} \mathbb{E}_{\bm{z}} g_\tau(\bm{z}) \, ,
\]
where
\[
g_\tau(\bm{z}) = \min_{\bm{v} \in \mathcal{B}(0)} \frac{1}{P} \|a \bm{z} - b \bm{v}\|^2 \, .
\]
We can split this into eigencomponents like before:
\[
g_\tau(\bm{z}) = \min_{\bm{v} \in \mathcal{B}(0)} \frac{1}{P} \|\Pi (a \bm{z} - b \bm{v})\|^2 + (a \bar z - b \bar v)^2 \, .
\]
Since $\bar z \overset{P}\to 0$ as $P \to \infty$, we can equivalently consider the field
\[
\min_{\bm{v} \in \mathcal{B}(0)} \frac{1}{P} \|a \Pi \bm{z} - b \Pi \bm{v}\|^2 + (b\bar v)^2 \, .
\]
The second term is minimized when $\bar v = 0$. As remarked above, we know that minimizing the first term alone (as in the field $\tilde f_\tau(\bm{z})$) leads to $\bar v \overset{P}\to 0$ as $P \to \infty$ anyway. Thus, $g_\tau(\bm{z})$ and $\tilde f_\tau(\bm{z})$ are asymptotically equivalent, so the capacity of our correlated model is equal to the capacity of the uncorrelated points model with rescaled centroid norm $r_\mathrm{equiv} = r\sqrt{1-\psi}$ and label scale $\sigma_\mathrm{equiv} = \sigma\sqrt{1-\rho}$.

We use the following lemma to prove~\autoref{prop:vbar}:

\begin{lemma}[Concentration for Lipschitz functions]\label{lem:concentration Lipschitz}
    Let $f : \mathbb{R}^P \to \mathbb{R}$. Suppose $f(\cdot)$ is $L$-Lipschitz. Then there exists a constant $C>0$ such that
    \[
    \mathbb{P}_{\bt\sim\cN(0,I_P)}(|f(\bt) - \mathbb{E}_{\bt'} f(\bt')| > \Delta) \leq \exp\left(-C\cdot\frac{\Delta^2}{L^2}\right)
    \]
    for every $\Delta > 0$.
\end{lemma}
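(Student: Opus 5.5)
This is the standard Gaussian concentration inequality for Lipschitz functions (Tsirelson--Ibragimov--Sudakov), so I would not try to rederive it from scratch. My plan is to cite it, with $C=1/2$ and a factor $2$ in front for the two-sided bound, and add a short justification reducing to the one-sided case. The plan is to prove the one-sided bound $\mathbb{P}(f(\bt)-\mathbb{E}f>\Delta)\le \exp(-\Delta^2/(2L^2))$ and apply it to $f$ and $-f$. The prefactor $2$ that appears when the two tails are combined can be absorbed into the statement's form $\exp(-C\Delta^2/L^2)$ only after two reductions. First, for $\Delta\le L$ the right-hand side with a small enough $C$ is at least $1$, so the bound is trivial there. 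Second, for $\Delta>L$ we have $2e^{-\Delta^2/(2L^2)}\le e^{-\Delta^2/(4L^2)}$ once $\Delta^2/L^2\ge 4\log 2$; the intermediate range $L<\Delta<2\sqrt{\log 2}\,L$ is handled by shrinking $C$ further. So $C=1/(4\cdot 4\log 2)$, or any similar small constant, works.

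For the one-sided bound itself I would first reduce, by mollification, to $f$ smooth with $\|\nabla f\|\le L$. Convolving with a narrow Gaussian preserves the Lipschitz constant and converges uniformly, and the tail probabilities pass to the limit. I would then use the Maurey--Pisier interpolation argument. Let $\bt'$ be an independent copy of $\bt$ and set $\bt_\theta=\bt\sin\theta+\bt'\cos\theta$. Then
\[
f(\bt)-f(\bt')=\int_0^{\pi/2}\nabla f(\bt_\theta)\cdot \dot{\bt}_\theta\,d\theta ,
\]
where $(\bt_\theta,\dot{\bt}_\theta)$ is a pair of independent standard Gaussians for each $\theta$. Next I apply Jensen twice: first to $\mathbb{E}_{\bt'}$ inside the exponential, which gives $\mathbb{E}e^{\lambda(f-\mathbb{E}f)}\le \mathbb{E}e^{\lambda(f(\bt)-f(\bt'))}$, and then to the $\theta$-average. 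This yields
\[
\mathbb{E}e^{\lambda(f-\mathbb{E}f)}\le \exp\!\left(\frac{\lambda^2\pi^2L^2}{8}\right).
\]
Chernoff's bound then gives $\exp(-2\Delta^2/(\pi^2L^2))$, which is already of the required form. The sharper constant $1/2$, obtainable via the Gaussian log-Sobolev inequality and Herbst's argument, is therefore unnecessary.

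The main obstacle is only bookkeeping. The delicate steps are the Jensen step on the conditional expectation and computing the Gaussian moment generating function of $\nabla f(\bt_\theta)\cdot\dot{\bt}_\theta$ conditional on $\bt_\theta$. That conditional variable is Gaussian with variance $\|\nabla f(\bt_\theta)\|^2\le L^2$, and the bound follows from this. Because only the existence of a universal $C$ is claimed, none of the constants need to be optimized.
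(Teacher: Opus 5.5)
The core of your argument is correct and follows the same route the paper takes by citation (it simply invokes the Maurey--Pisier inequality). The interpolation $\bt_\theta=\bt\sin\theta+\bt'\cos\theta$, the two Jensen steps and the Chernoff optimization do give $\mathbb{P}(f-\mathbb{E}f>\Delta)\le e^{-2\Delta^2/(\pi^2L^2)}$, and hence $\mathbb{P}(|f-\mathbb{E}f|>\Delta)\le 2e^{-2\Delta^2/(\pi^2L^2)}$.

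The gap is in the step that removes the prefactor $2$. You claim that for $\Delta\le L$ the right-hand side $\exp(-C\Delta^2/L^2)$ is at least $1$ once $C$ is small. This is false: for every $C>0$ and every $\Delta>0$ it is strictly less than $1$. Shrinking $C$ also cannot rescue the intermediate range. For $\Delta\le L\sqrt{2\log 2}$ with the sharp constant (or $\Delta\le \pi L\sqrt{(\log 2)/2}$ with yours), the two-sided tail bound is $\ge 1$ and carries no information, while the target is $<1$. So the regime you call trivial is exactly where the prefactor-free statement has content. It asserts $\mathbb{P}(|f-\mathbb{E}f|\le\Delta)\ge 1-e^{-C\Delta^2/L^2}$, which is of order $\Delta^2/L^2$ as $\Delta\to0$. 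This is an anti-concentration (small-deviation) statement, and no upper-tail estimate, however sharp, can produce it.

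The statement is in fact true, but it needs an extra argument. Write $\gamma$ for the standard Gaussian measure and $X=f-\mathbb{E}f$. The open sets $A=\{X>\Delta\}$ and $B=\{X<-\Delta\}$ are at distance $\ge 2\Delta/L$, so Gaussian isoperimetry gives $\mathbb{P}(|X|\le\Delta)\ge\Phi(\Phi^{-1}(\gamma(A))+2\Delta/L)-\gamma(A)$, where you take $A$ to be the smaller of the two sets. Two further facts then control $\gamma(A)$:
\begin{enumerate}
\item The condition $\mathbb{E}X=0$ forces $\mathbb{E}[X\mathbf{1}_A]\ge\Delta\,(\gamma(B)-\mathbb{P}(|X|\le\Delta))$.
\item The isoperimetric tail bound $\mathbb{P}(X>u)\le 1-\Phi(\Phi^{-1}(1-\gamma(A))+(u-\Delta)/L)$ caps $\mathbb{E}[X\mathbf{1}_A]$ from above.
\end{enumerate}
Together these prevent $\gamma(A)$ from being so small that the band mass falls below order $\Delta^2/L^2$. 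Since $\mathbb{P}(|X|\le\Delta)$ is monotone in $\Delta$, you then get a constant lower bound for all $\Delta\ge L$. That bridges to the range where $2e^{-2\Delta^2/(\pi^2L^2)}\le e^{-\Delta^2/(\pi^2L^2)}$.

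The cleaner fix is to state the lemma with a prefactor, i.e.\ the standard Maurey--Pisier form $2e^{-2\Delta^2/(\pi^2L^2)}$, which your argument does prove. The lemma's only use in the paper is the proof of Proposition~\ref{prop:vbar}, which needs just exponential decay in $P$ and is unaffected by a constant prefactor.
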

The proof for the above lemma can be found in standard textbooks such as Theorem V.1 in \cite{Milman2001-yh} and is often attributed to Maurey and Pisier.

\subsubsection{Proof of \autoref{prop:vbar}}

We can study $\bm v^*(\bm z)$ using the Lagrangian
$$\mathcal{L}(\bm v, \bm{\lambda}_+, \bm{\lambda}_-) = \frac{1}{2} \|\Pi (a \bm z - \bm v)\|^2 + \sum_{\mu=1}^P \lambda^\mu_+ (x^\mu - \varepsilon^\mu) + \sum_{\mu=1}^P \lambda^\mu_- (-x^\mu - \varepsilon^\mu).$$
The KKT conditions characterizing $\bm v = \bm v^*(\bm z)$ are:
\begin{itemize}
    \item Stationarity: $\Pi (\bm v - a \bm z) + \bm \lambda = 0$;
    \item Feasibility: $\bm v \in B$;
    \item Complementary slackness: $\lambda^\mu \neq 0 \Rightarrow v^\mu = \operatorname{sgn}(\lambda^\mu) \varepsilon^\mu$,
\end{itemize}
where we define $\bm \lambda = \bm \lambda_+ - \bm \lambda_-$. Recalling that $\Pi \bm x = \bm x - \bar x \bf{1}$, these conditions imply
\[
v^\mu = \max(-\varepsilon^\mu, \min(\varepsilon^\mu, \bar v + a [\Pi \bm z]^\mu)) \textrm{ for all } \mu \, .
\]
This leads to a self-consistent equation for $\bar v$:
\[
\bar v = \frac{1}{P} \sum_{\mu=1}^P \max(-\varepsilon^\mu, \min(\varepsilon^\mu, \bar v + a[\Pi \bz]^\mu)) = \frac{\bm 1^\top}{P} \operatorname{proj}_B(\bar v \bm 1 + a \Pi \bm z) \, .
\]
Define
\[
f(\bar v, \bm z) = \frac{1}{P} \sum_{\mu=1}^P \max(-\varepsilon^\mu, \min(\varepsilon^\mu, \bar v + a[\Pi \bz]^\mu)) = \frac{\bm 1^\top}{P} \operatorname{proj}_B(\bar v \bm 1 + a \Pi \bm z)\]
to be the right-hand size of the self-consistent equation, so that $\bar v = f(\bar v, \bz)$ for the optimal $\bv$. Since $\Pi$ is 1-Lipschitz, the map $\bz \mapsto \bar v \bm 1 + a \Pi \bz$ is $a$-Lipschitz. Projection onto any closed, convex set (here $B$) is 1-Lipschitz, so the composition of these maps $\bz \mapsto \proj_B(\bar v + a \Pi \bz)$ is still $a$-Lipschitz. Since $\|\bm 1/P\| = 1/\sqrt{P}$, it follows that $f(\bar v, \bz)$ is $(a/\sqrt{P})$-Lipschitz in $\bz$. We can now invoke our concentration inequality \autoref{lem:concentration Lipschitz} for Lipschitz functions to obtain
\[
\mathbb{P}\left(|f(\bar v, \bm z) - \mathbb{E}_\bz f(\bar v, \bm z)| > \Delta\right) \leq \exp\left(-C \Delta^2 P/a^2\right)
\]
for some universal constant $C > 0$. This essentially gives us a \emph{deterministic} self-consistent equation $\bar v \approx \mathbb{E}_\bz f(\bar v, \bm z)$ that $\bar v$ must satisfy for the optimal $\bv$, and we will show below that this implies $\bar v$ concentrates at 0 as $P \to \infty$. Specifically, we can write
\[
\mathbb{E}_\bz f(\bar v, \bm z) = \frac{1}{P} \sum_{\mu=1}^P \mathbb{E}_\bz \max(-\varepsilon^\mu, \min(\varepsilon^\mu, \bar v + a[\Pi \bm z]^\mu)) \, .
\]
Noting that $a [\Pi z]^\mu$ is Gaussian with standard deviation $\tilde a = a \sqrt{1-1/P}$, we can write
\[
\mathbb{E}_\bz \max(-\varepsilon^\mu, \min(\varepsilon^\mu, \bar v + a [\Pi z]^\mu)) = \mathbb{E}_{g \sim \mathcal{N}(0, 1)} \max(-\varepsilon^\mu, \min(\varepsilon^\mu, \bar v + \tilde a g)) \, ,
\]
which can be explicitly computed. In particular, one can show that
\[
\frac{\partial}{\partial \varepsilon^\mu} \mathbb{E}_{g \sim \mathcal{N}(0, 1)} \max(-\varepsilon^\mu, \min(\varepsilon^\mu, \bar v + \tilde a g)) = \frac{1}{2} \erf\left(\frac{\bar v + \varepsilon^\mu}{\tilde a\sqrt{2}}\right) + \frac{1}{2} \erf\left(\frac{\bar v - \varepsilon^\mu}{\tilde a \sqrt{2}}\right) \, .
\]
First, consider the case where $\bar v \geq 0$. The above derivative w.r.t. $\varepsilon^\mu$ is positive, and since $\varepsilon^\mu \leq \varepsilon$, we have
\[
0 \leq \mathbb{E} \max(-\varepsilon^\mu, \min(\varepsilon^\mu, \bar v + a [\Pi \bm z]^\mu)) \leq \mathbb{E} \max(-\varepsilon, \min(\varepsilon, \bar v + a [\Pi \bm z]^\mu)) \, .
\]
for all $\mu$. Similarly, when $\bar v \leq 0$, the above derivative w.r.t. $\varepsilon^\mu$ is negative, and we get
\[
\mathbb{E} \max(-\varepsilon, \min(\varepsilon, \bar v + a [\Pi \bm z]^\mu)) \leq \mathbb{E} \max(-\varepsilon^\mu, \min(\varepsilon^\mu, \bar v + a [\Pi \bm z]^\mu)) \leq 0 \, .
\]
It follows that, for any $\bar v$,
$|\mathbb{E} f(\bar v, \bm z)| \leq |\mathbb{E} \tilde f(\bar v, \bm z)|$, where we define
\[
\tilde f(\bar v, \bm z) = \frac{1}{P} \sum_{\mu=1}^P \max(-\varepsilon, \min(\varepsilon, \bar v + a [\Pi \bm z]^\mu)) \, .
\]
Since the terms of this sum are now identically distributed, we get
\[
\mathbb{E} \tilde f(\bar v, \bm z) = \mathbb{E}_{g \sim \mathcal{N}(0, 1)} \max(-\varepsilon, \min(\varepsilon, \bar v + \tilde a g)) \, .
\]
One can compute
\[
\frac{\partial}{\partial \bar v} \mathbb{E} \tilde f(\bar v, \bm z) = -\frac{1}{2} \operatorname{erf}\left(\frac{\bar v-\varepsilon}{|\tilde a|\sqrt{2}}\right) + \frac{1}{2}\operatorname{erf}\left(\frac{\bar v + \varepsilon}{|\tilde a| \sqrt{2}}\right) \, ,
\]
(where the derivative is now w.r.t. $\bar v$), which satisfies
\[
0 \leq \frac{\partial}{\partial \bar v} \mathbb{E} \tilde f(\bar v, \bm z) \leq \erf\left(\frac{\varepsilon}{|\tilde a| \sqrt{2}}\right) < 1 \textrm{ for all } \bar v \, .
\]
Noting also that $\mathbb{E} \tilde f(0, \bz) = 0$ by symmetry, it follows that $|\mathbb{E} f(\bar v, \bm z)| \leq |\mathbb{E} \tilde f(\bar v, \bm z)| \leq |\bar v| \cdot \erf(\varepsilon/(|\tilde a| \sqrt{2}))$. WLOG, we may assume $P \geq 2$ so that $|\tilde a| \geq |a| / \sqrt{2}$, and we get $|\mathbb{E} f(\bar v, \bm z)| \leq |\bar v| \cdot \operatorname{erf}(\varepsilon/|a|) < 1$. Define $\Delta = \eta (1 - \erf(\varepsilon/|a|)) > 0$ for the previous concentration inequality and suppose $|f(\bar v, \bm z) - \mathbb{E} f(\bar v, \bm z)| \leq \Delta$. Then for the optimal $\bv$,
\[
|\bar v| = |f(\bar v, \bm z)| \leq |f(\bar v, \bm z) - \mathbb{E} f(\bar v, \bm z)| + |\mathbb{E} f(\bar v, \bm z)| \leq \Delta + |\bar v| \cdot \erf(\varepsilon/|a|) \, ,
\]
where the first equality is from our original self-consistent equation $\bar v = f(\bar v, \bz)$ for the optimal $\bv$. This in turn implies $|\bar v| \leq \eta$. Finally, from our concentration inequality,
\[
\mathbb{P}(|\bar v^*(\bm z)| > \eta) \leq \exp(-C \eta^2 (1 - \erf(\varepsilon/|a|))^2 P/a^2) \, .
\]

\section{Regression Capacity of Random Sphere-Like Manifolds}\label{sec:si sphere-like manifolds}
To investigate how the geometric properties of manifolds connect to their regression capacity, we extend the random point-like manifold model in~\autoref{sec:si point-like manifolds} to random sphere-like manifolds with various latent geometric parameters including manifold dimension $D$, manifold radius $R$, correlation $\psi$ between manifold centers, correlation $\rho$ between manifold label, and correlation $\gamma$ between manifold internal axes. We derive closed-form formula (\autoref{thm:uncorelated spheres} and~\autoref{thm:corelated spheres}) for the regression capacity in terms of these latent parameters.

\subsection{Capacity of Uncorrelated Spheres}\label{sec:SI uncor spheres}
We introduce structured variability to our data. In particular, we consider $D$-dimensional spherical manifolds with radius $R$, where $D$ is fixed and $N, P \to \infty$. For simplicity, we will consider the uncorrelated case first. In the following section, we will consider a model of correlated spheres and reduce it to the uncorrelated spheres model, much like we reduce the correlated points model to uncorrelated points.

\subsubsection{Our toy model for uncorrelated spheres and its regression capacity}
To sample $P$ manifolds $\{\mathcal{M}^\mu \subseteq \mathbb{R}^N\}_{\mu=1}^P$ embedded in $\mathbb{R}^N$, we first sample their centroids $\{\mathbf{x}^\mu_0 \in \mathbb{R}^N\}_{\mu=1}^P$ with mean square-norm $\mathbb{E} \|\mathbf{x}^\mu_0\|^2 = r^2$ set by some $r > 0$. For each manifold $\mu \in \{1, 2, \dots, P\}$, we also sample $D$ i.i.d. axis vectors $\{\mathbf{x}^\mu_{i} \in \mathbb{R}^N\}_{i=1}^D$ with mean square-norm $\mathbb{E} \|\mathbf{x}_i^\mu\|^2 = R^2$ set by some $R > 0$. More concretely, the centroids are sampled from $x^{\mu}_{0k} \overset{i.i.d.}\sim \mathcal{N}\left(0, \frac{r^2}{N}\right)$ for all $\mu \in \{1, \dots, P\}$ and $k \in \{1, \dots, N\}$ and the axes are sampled from $x^{\mu}_{ik} \overset{i.i.d.}\sim \mathcal{N}\left(0, R^2/N\right)$ for all  $\mu \in \{1, \dots, P\}, i \in \{1, \dots, D\}$ and $k \in \{1, \dots, N\}$.
This scaling ensures that $\mathbb{E} \mathbf{x}_0^{\mu} \cdot \mathbf{x}_0
^\nu = r^2\delta_{\mu \nu}$, and $\mathbb{E} \mathbf{x}_i^{\mu} \cdot \mathbf{x}_j
^\nu = R^2\delta_{\mu \nu}\delta_{ij}$.
We sample the target labels $\{y^\mu\}_{\mu=1}^P$ for each point i.i.d. from $\mathcal{N}(0, \sigma^2)$ with some label variance $\sigma^2 > 0$. In the notation of~\autoref{sec:si mean-field}, we have manifold shape
$$\mathcal{S} = \left\{(1, s_1, \dots, s_D) : \sum_{i=1}^D s_i^2 = 1\right\} \subseteq 
\mathbb{R}^{D+1}$$
with dimension $D$, a correlation tensor $\Sigma \in \mathbb{R}^{P(D+1) \times P(D+1)}$ with entries
$$\Sigma_{\mu i}^{\nu j} = \begin{cases}
r^2\delta_{\mu \nu} & \textrm{if i = j = 0}, \\
R^2\delta_{\mu \nu} & \textrm{if i = j > 0},
\end{cases}$$
and label correlation $H \in \mathbb{R}^{P \times P}$ with entries $H_{\mu \nu} = \sigma^2\delta_{\mu \nu}$.

Here, we derive a simple capacity formula for the toy model described above by reducing the capacity formula for a correlated point model to an equivalent uncorrelated point model.
\begin{theorem}[Formula for uncorrelated spheres]\label{thm:uncorelated spheres}
Given the model for uncorrelated spheres described above. Let $\alpha^{-1}(\varepsilon, r,R,D,\sigma,P)$ be the capacity formula from~\autoref{corrcap} for this model, we have that $\alpha^{-1}(\varepsilon,r,R,D,\sigma,P)\xrightarrow{P \to \infty}\alpha^{-1}(\varepsilon,r,R,D,\sigma)$ with
\begin{equation}\label{eq:SI uncorr-spheres-cap}
\alpha^{-1}(\varepsilon,r,R,D,\sigma) = 2 \min_{\tau \geq 0} \int_{0}^\infty dx \; \chi_D(x) g(x, \tau, \varepsilon,r,R,\sigma)
\end{equation}
where $\chi_D(x) = \frac{x^{D-1}e^{-x^2/2}}{2^{D/2-1}\Gamma(D/2)} \mathbb{I}(x \geq 0)$ denotes the PDF of the $\chi$ distribution with $D$ degrees of freedom, and
\begin{align*}
    g(x, \tau, \varepsilon,r,R,\sigma) &= h\left(\sqrt{1+\frac{\sigma^2\tau^2}{r^2}}, \frac{\varepsilon\tau}{r}, \frac{r^2x/R+\varepsilon\tau}{\sqrt{r^2+\sigma^2\tau^2}}\right) \\
    &+ \frac{1}{2} x^2 \operatorname{erfc}\left(\frac{r^2x/R+\varepsilon\tau}{\sqrt{2(r^2+\sigma^2\tau^2)}}\right) \\
    &+ h\left(\frac{\sqrt{r^2+\sigma^2\tau^2}}{\sqrt{R^2+r^2}}, \frac{\varepsilon\tau - R x}{\sqrt{R^2+r^2}}, \max\left\{\frac{\varepsilon\tau - R x}{\sqrt{r^2+\sigma^2\tau^2}}, \frac{r^2 x/R - r^2 \varepsilon \tau / R^2}{\sqrt{r^2+\sigma^2\tau^2}}\right\}\right) \\
    &- h\left(\frac{\sqrt{r^2+\sigma^2\tau^2}}{\sqrt{R^2+r^2}}, \frac{\varepsilon\tau - R x}{\sqrt{R^2+r^2}}, \frac{r^2x/R+\varepsilon\tau}{\sqrt{r^2+\sigma^2\tau^2}}\right) \\
    &+ h\left(\sqrt{1+\frac{\sigma^2\tau^2}{r^2}}, 0, 0\right) \\
    &- h\left(\sqrt{1+\frac{\sigma^2\tau^2}{r^2}}, 0, \max\left\{ 0, \frac{r^2x/R-r^2\varepsilon\tau/R^2}{\sqrt{r^2+\sigma^2\tau^2}} \right\}\right) \\
    &+ \frac{1}{2} \left(x - \frac{\varepsilon\tau}{R}\right)^2 \operatorname{erf}\left(\max\left\{0, \frac{r^2x/R-r^2\varepsilon\tau/R^2}{\sqrt{2(r^2+\sigma^2\tau^2)}}\right\}\right)
\end{align*}
where
\begin{align*}
    h(a, b, c) &= \int_{c}^\infty \mathcal{D} z (az - b)^2 \\
    &= \frac{a(ac - 2b)}{\sqrt{2\pi}} \exp\left(-\frac{c^2}{2}\right) + \frac{a^2+b^2}{2} \operatorname{erfc}\left(\frac{c}{\sqrt{2}}\right) \, .
\end{align*}
\end{theorem}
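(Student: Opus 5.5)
We follow the same route as for \autoref{thm:uncorelated points}: start from the label-conditioned estimator of \autoref{corrcap} and reduce the field to a sum of $P$ i.i.d.\ single-manifold terms.

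\textbf{Step 1: set up and factorize the field.} Since $\Sigma$ is block diagonal with blocks $\mathrm{diag}(r^2,R^2,\dots,R^2)$, we have $(\Sigma^{1/2})^\mu\bs=(r s_0, R s_1,\dots,R s_D)$ placed in the $\mu$-th block. The admissible set therefore factorizes: $\bw=(w_0^\mu,\bw_\perp^\mu)_\mu$ is admissible iff for every $\mu$ we have $|r w_0^\mu + R\,\bw_\perp^\mu\cdot\bs - y^\mu|\le\varepsilon$ for all unit $\bs\in\Real^D$. Maximizing over the sphere, this is equivalent to
\[
|r w_0^\mu - y^\mu| + R\|\bw_\perp^\mu\|\le\varepsilon .
\]
Writing the cone projection as $\min_{\tau\ge0}\min_{\bw\in\cA}\|\bt-\tau\bw\|^2$, the field becomes $\frac1P\sum_\mu \phi(t_0^\mu,\bt_\perp^\mu,y^\mu,\tau)$. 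Each $\phi$ is a per-manifold minimization over $(w_0,\bw_\perp)$ under the constraint above.

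\textbf{Step 2: use the law of large numbers.} The summands are i.i.d.\ in $\mu$ and $\phi\le\|\bt^\mu\|^2$, so they have finite variance. Hence, for each fixed $\tau$, the field concentrates on $\mathbb{E}\phi$. A Lipschitz or monotonicity argument in $\tau$, as in the point case, lets us exchange $\min_\tau$ and the limit. This also gives the concentration over labels required by \autoref{sec:SI random label}.

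\textbf{Step 3: reduce to two scalars.} Substitute $h=rw_0-y$. As before, $t_0-\tfrac{\tau}{r}y$ is distributed as $a z$ with $a=\sqrt{1+\sigma^2\tau^2/r^2}$ and $z\sim\mathcal N(0,1)$. By rotational invariance, $\bw_\perp$ may be taken parallel to $\bt_\perp$, leaving the scalar $x=\|\bt_\perp\|\sim\chi_D$. The per-manifold problem is then
\[
\min_{|h|+R\rho\le\varepsilon,\ \rho\ge0}\Big(az-\tfrac{\tau}{r}h\Big)^2+(x-\tau\rho)^2 .
\]
This is the Euclidean projection of the point $(az,x)$ onto a rescaled rhombus-like region: a triangle with apex on the $h$-axis, after the factor $2$ from the symmetry $z\mapsto -z$ is pulled out.

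\textbf{Step 4: evaluate the Gaussian integral.} The projection has several KKT regimes:
\begin{itemize}
\item the point lies inside the region, contributing zero;
\item projection onto the $\rho=0$ edge, i.e.\ the interval $|h|\le\varepsilon$;
\item projection onto the slanted face $h+R\rho=\varepsilon$;
\item projection onto the vertex $(h,\rho)=(\varepsilon,0)$;
\item projection onto the apex $(0,\varepsilon/R)$.
\end{itemize}
Each regime is a range of $z$ for fixed $x$, whose endpoints are the thresholds appearing in the stated $g$ (such as $(r^2x/R+\varepsilon\tau)/\sqrt{r^2+\sigma^2\tau^2}$). Within each range the squared distance is a quadratic in $z$. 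Its integral against $\mathcal Dz$ over a half-line gives the function $h(a,b,c)$, and differences of two $h$'s give finite intervals. The constant-in-$z$ pieces give the $\tfrac12 x^2\operatorname{erfc}$ and $\tfrac12(x-\varepsilon\tau/R)^2\operatorname{erf}$ terms. Integrating over $x$ against $\chi_D$ yields \eqref{eq:SI uncorr-spheres-cap}.

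\textbf{Main obstacle.} The hard part is Step 4's case bookkeeping. One must correctly identify the normal cones at the vertices after the anisotropic rescaling by $\tau/r$ versus $\tau$, which produces the $\sqrt{R^2+r^2}$ normalizations on the slanted face. One must also handle the $x$-dependent ordering of thresholds, which is where the $\max\{\cdot\}$ terms arise. I would first solve the 2D projection geometrically in the coordinates $(\tfrac{\tau}{r}h,\tau\rho)$, tabulate the regions, and then integrate.
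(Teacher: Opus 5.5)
Your proposal is correct and is essentially the paper's argument. Both factorize the field across manifolds, absorb the label into the Gaussian $az$ via $h=rw_0-y$, and use rotational symmetry to reduce to $(z,x)$ with $x=\|\bt_\perp\|\sim\chi_D$; both then split the projection onto $\{|h|+R\rho\le\varepsilon\}$ into interior, vertex (``embedded''), slanted-face (``touching'') and apex (``sandwiched'') regimes, with the $\pm$ symmetry supplying the factor $2$. The paper runs this KKT analysis directly in $\mathbb{R}^{D+1}$ with the subgradient of the norm, and reduces straight to $P=1$ without your law-of-large-numbers and $\min_\tau$-exchange step, which is the more careful version of that reduction. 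Two small fixes: your ``$\rho=0$ edge'' regime is empty almost surely, since $x\ge 0$ and the points $(h,0)$ with $|h|<\varepsilon$ are interior to the full bicone. Also, the bound $\phi\le\|\bt^\mu\|^2$ fails for $\tau>0$ when $|y^\mu|>\varepsilon$, because then $0\notin\cA_\mu$; use $\phi\le(t_0-\tau y/r)^2+\|\bt_\perp\|^2$ instead to get finite variance.
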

We interpret this formula and numerically validate it in the main text. The derivation details for~\autoref{thm:uncorelated spheres} is provided in the following subsubsection for the completeness of presentation.

\subsubsection{Derivation details for~\autoref{thm:uncorelated spheres}}
For a given error tolerance $\varepsilon > 0$ and for each $P$,~\autoref{corrcap} gives us a capacity estimator
\[
\alpha^{-1}(\varepsilon, P) = \frac{1}{P} \min_{\tau \geq 0} \mathbb{E}_{\bm{t},\bm{y}} \min_{\bm{u} \in \mathcal{B}(\bm{y})} \left\Vert \bm{t} - \tau\Sigma^{-\frac{1}{2}}\bm{u}\right\Vert^2 \, .
\]
Our goal in this section will be to show that the limit
\[
\alpha^{-1}(\varepsilon) = \lim_{P \to \infty} \alpha^{-1}(\varepsilon, P)
\]
exists and compute it. Then we use~\autoref{corrcap} to conclude that $\alpha(\varepsilon)$ is the asymptotic capacity of our uncorrelated points model. For our specific model, we can simplify the above capacity estimator to:
$$\alpha^{-1}(\varepsilon, P) = \min_{\tau \geq 0} \frac{1}{P} \mathbb{E}_{\bm{t},\bm{y}} \min_{\bm{u} \in \mathcal{B}(\bm{y})} \left\Vert \bm{t}_0 - \frac{\tau}{r} \bm{u}_0 \right\Vert^2 + \sum_{i=1}^D \left\Vert \bm{t}_i - \frac{\tau}{R} \bm{u}_i \right\Vert^2,$$
where
$$\mathcal{B}(\bm{y}) = \left\{\bm{u} \in \mathbb{R}^{P(D+1)} : |u^{\mu}_{0} - y_\mu| \leq \varepsilon - \sqrt{\sum_{i=1}^D (u^{\mu}_{i})^2} \textrm{ for all } \mu \in \{1, \dots, P\}\right\}.$$

As in the uncorrelated points case, since the distributions of $\bm{t}$ and $\bm{y}$ as well as the constraint set $\mathcal{B}(\bm{y})$ all factorize across the manifolds, we can immediately reduce our capacity estimator to the $P=1$ case:
$$\alpha^{-1}(\varepsilon, P) = \alpha^{-1}(\varepsilon, 1) = \min_{\tau \geq 0} \mathbb{E}_{\bm{t},y} \min_{\bm{u} \in \mathcal{B}(y)} \left(t_0 - \frac{\tau}{r} u_0 \right)^2 + \sum_{i=1}^D \left\Vert t_i - \frac{\tau}{R} u_i\right\Vert^2,$$
where now the components $\{t_i \in \mathbb{R}\}_{i=0}^D$ and $\{u_i \in \mathbb{R}\}_{i=0}^D$ are scalars and the constraint set is just
$$\mathcal{B}(y) = \left\{\bm{u} \in \mathbb{R}^P : |u_{0} - y| \leq \varepsilon - \sqrt{\sum_{i=1}^D u_{i}^2}\right\}.$$
As in previous sections, we make a change of variables
$$v_i = \begin{cases}
    u_0 - y & \textrm{if } i = 0, \\
    u_i & \textrm{if } i > 0,
\end{cases}$$
so that capacity becomes
$$\alpha^{-1}(\varepsilon) = \min_{\tau \geq 0} \mathbb{E}_{\bm{t},y} \min_{\bm{v} \in \mathcal{B}(0)} \left(t_0 - \frac{\tau}{r}(v_0 + y)\right)^2 + \sum_{i=1}^D \left\Vert t_i - \frac{\tau}{R} v_i \right\Vert^2.$$
As before, we can now combine $\bm{t}$ and $\bm{y}$ so that our estimator becomes
$$\alpha^{-1}(\varepsilon) = \min_{\tau \geq 0} \mathbb{E}_{\bm{g}} f_\tau(\bm{g}),$$
where we define the field $f_\tau(\bm{g})$ by
$$f_\tau(\bm{g}) = \min_{\bm{v} \in \mathcal{B}(0)} \left(\tau g_0 - \frac{\tau}{r}v_0\right)^2 + \sum_{i=1}^D \left\Vert \tau g_i - \frac{\tau}{R} v_i \right\Vert^2,$$
and where $\bm{g} \in \mathbb{R}^{D+1}$ is a mean-zero Gaussian vector with covariance
$$\mathbb{E} g_i g_j = \begin{cases}
    \frac{1}{\tau^2} + \frac{\sigma^2}{r^2}& \textrm{if } i = j = 0, \\
    \frac{1}{\tau^2} & \textrm{if } i = j > 0.
\end{cases}$$
Let us make the further change of variables
$$w_i = \begin{cases}
    \frac{v_0}{r} & \textrm{if } i = 0, \\
    \frac{v_i}{R} & \textrm{if } i > 0,
\end{cases}$$
so that
$$f_\tau(\bm{g}) = \tau^2 \min_{\bm{w} \in \mathcal{A}} \|\bm{g} - \bm{w}\|^2,$$
where
$$\mathcal{A} = \left\{\bm{w} \in \mathbb{R}^{D+1} : r|w_0| \leq \varepsilon - R \sqrt{\sum_{i=1}^D w_i^2}\right\}.$$
The minimization problem in $f_\tau(\bm{g})$ is just a projection of $\bm{g}$ onto the set $\mathcal{A}$. For $D = 1$, we can visualize this solution set $\mathcal{A}$ to get some intuition for how projection behaves:

\begin{figure}[H]
    \centering
    \includegraphics[width=0.5\linewidth]{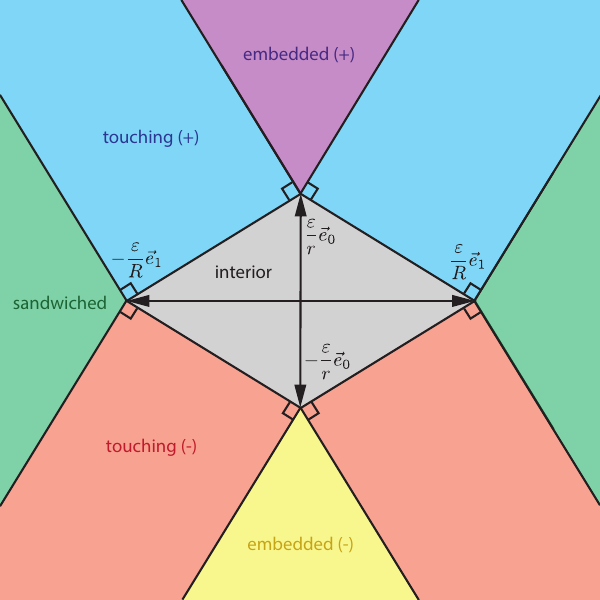}
    \caption{Visualization of the solution set $\mathcal{A}$.}
\end{figure}

\noindent
The diamond-shaped gray convex body in the center is the solution set $\mathcal{A}$. We have split the space $\mathbb{R}^{D+1}$ into six different regimes based on how the projection behaves when $\bm{g}$ is in each region of space. For example, if $\bm{g}$ is in the cone-shaped \enquote{embedded (+)} regime, then the nearest solution $\bm{w} \in \mathcal{A}$ is always the point $\frac{\varepsilon}{r} \vec e_0$ (where $\{\vec e_i\}$ is the standard basis for $\mathbb{R}^{D+1}$). To compute $\mathbb{E}_{\bm{g}} f_\tau(\bm{g})$, we split into these six cases, which we now formally define: Going back to the general case of arbitrary $D$, we can write down a Lagrangian characterizing the optimal $W$ in the definition of $f_\tau(\bm{g})$:
$$\mathcal{L}(\bm{w}, \lambda^+, \lambda^-) = \frac{1}{2} \|\bm{g} - \bm{w}\|^2 + \lambda^+ (rw_0 - \varepsilon + R \|\bm{\tilde w}\|) + \lambda^- (-rw_0 - \varepsilon + R \|\bm{\tilde w}\|),$$
setting $\bm{\tilde w} = (w_1, w_2, \dots, w_D)$ and $\bm{\tilde g} = (g_1, g_2, \dots, g_D)$ for convenience. Recall that the subgradient of the function $\bm{x} \mapsto \|\bm{x}\|$ at the point $\bm{x} = 0$ consists of all vectors $\bm{v}$ with $\|\bm{v}\| \leq 1$. The KKT conditions for this Lagrangian are then:

\begin{itemize}
    \item $\frac{\partial \mathcal{L}}{\partial w_0} = w_0 - g_0 + (\lambda^+ - \lambda^-) r = 0$ (stationarity w.r.t $w_0$), \\
    \item Either $\bm{\tilde w} \neq 0$ and $\nabla_{\bm{\tilde w}}\mathcal{L} = \bm{\tilde w} - \bm{\tilde g} + (\lambda^+ + \lambda^-) R \frac{\bm{\tilde w}}{\|\bm{\tilde w}\|} = 0$, or $\bm{\tilde w} = 0$ and $\|\bm{\tilde g}\| \leq (\lambda^+ + \lambda^-) R$ (stationarity w.r.t. $\bm{\tilde W}$), \\
    \item If $\lambda^+ > 0$, then $rw_0 = \varepsilon - R \|\bm{\tilde w}\|$ (complementary slackness of upper bound), \\
    \item If $\lambda^- > 0$, then $-rw_0 = \varepsilon - R \|\bm{\tilde w}\|$ (complementary slackness of lower bound),
\end{itemize}
plus primal feasibility $\bm{w} \in \mathcal{A}$ and dual feasibility $\lambda^+, \lambda^- \geq 0$. We now split into six cases based on which of the variables $\lambda^+, \lambda^-, \bm{\tilde w}$ are zero, characterize the values of $\bm{g}$ corresponding to each case (Equations C1-C6), and obtain closed-form expressions for $f_\tau(\bm{g})$ in each case (Equations F1-F6):

\vspace{1em}
\noindent \emph{Case 1: \enquote{Interior regime} when $\lambda^+ = \lambda^- = 0$}
\vspace{1em}

\noindent
In this regime, the stationarity conditions immediately yields $\bm{w} = \bm{g}$, so 
\begin{equation}\tag{F1}
    f_\tau(\bm{g}) = 0
\end{equation}
Since $\bm{g} = \bm{w} \in \mathcal{A}$, we have
\begin{equation}\tag{C1}
    |g_0| \leq \frac{\varepsilon}{r} - \frac{R\|\bm{\tilde g}\|}{r}.
\end{equation}

\vspace{1em}
\noindent \emph{Case 2: \enquote{Embedded (+) regime} when $\lambda^+ > 0, \lambda^- = 0, \bm{\tilde w} = 0$}
\vspace{1em}

\noindent
In this regime, complementary slackness tells us $rw_0 = \varepsilon - R \|\bm{\tilde w}\|$, and since $\bm{\tilde w} = 0$, $w_0 = \varepsilon/r$. The stationarity conditions yield $g_0 = w_0 + \lambda^+ r$ and $\|\bm{\tilde g}\| \leq \lambda^+ R$. Combining these, we arrive at
\begin{equation}\tag{C2}
    g_0 \geq \frac{\varepsilon}{r} + \frac{r\|\bm{\tilde g}\|}{R}.
\end{equation}
Also, we can write
\begin{equation}\tag{F2}
f_\tau(\bm{g}) = \tau^2\left(g_0 - \frac{\varepsilon}{r}\right)^2 + \tau^2\|\bm{\tilde g}\|^2.
\end{equation}

\vspace{1em}
\noindent \emph{Case 3: \enquote{Embedded (-) regime} when $\lambda^+ = 0, \lambda^- > 0, \bm{\tilde w} = 0$}
\vspace{1em}

\noindent
In this regime, complementary slackness tells us $-rw_0 = \varepsilon - R \|\bm{\tilde w}\|$, and since $\bm{\tilde w} = 0$, $w_0 = -\varepsilon/r$. The stationarity conditions yield $g_0 = w_0 - \lambda^- r$ and $\|\bm{\tilde g}\| \leq \lambda^- R$. Combining these, we arrive at
\begin{equation}\tag{C3}
    g_0 \leq -\frac{\varepsilon}{r} - \frac{r\|\bm{\tilde g}\|}{R}.
\end{equation}
Also, we can write
\begin{equation}\tag{F3}
f_\tau(\bm{g}) = \tau^2\left(g_0 + \frac{\varepsilon}{r}\right)^2 + \tau^2\|\bm{\tilde g}\|^2.
\end{equation}

\vspace{1em}
\noindent \emph{Case 4: \enquote{Touching (+) regime} when $\lambda^+ > 0, \lambda^- = 0, \bm{\tilde w} \neq 0$}
\vspace{1em}

\noindent
In this regime, stationarity gives $g_0 = w_0 + \lambda^+ r \geq w_0$. Complementary slackness gives $w_0 = \frac{\varepsilon}{r} - \frac{R \|\bm{\tilde w}\|}{r}$. From stationarity, $\|\bm{\tilde w}\| = \|\bm{\tilde g}\| - \lambda^+ R \leq \|\bm{\tilde g}\|$. Combining these, we get
$$g_0 \geq \frac{\varepsilon}{r} - \frac{R \|\bm{\tilde g}\|}{r}.$$

Stationarity also gives $R \|\bm{\tilde g}\| = R \|\bm{\tilde w}\| + \lambda^+ R^2$. From complementary slackness, we also know $R \|\bm{\tilde w}\| = \varepsilon - r w_0$. Combining these, $R \|\bm{\tilde g}\| = \varepsilon - r w_0 + \lambda^+ R^2$. From stationarity, we know $rw_0 = r g_0 - \lambda^+ r^2$. Together this yields $R \|\bm{\tilde g}\| = \varepsilon - r g_0 + \lambda^+ [r^2 + R^2]$. We can then solve for $\lambda^+$:
$$\lambda^+ = \frac{R \|\bm{\tilde g}\| - \varepsilon + r g_0}{R^2 + r^2}.$$
From stationarity, we know $\lambda^+ = \frac{\|\bm{\tilde g}\|}{R} - \frac{\|\bm{\tilde w}\|}{R} \leq \frac{\|\bm{\tilde g}\|}{R}$. Thus,
$$\lambda^+ = \frac{R \|\bm{\tilde g}\| - \varepsilon + r g_0}{R^2 + r^2} \leq \frac{\|\bm{\tilde g}\|}{R}.$$
This inequality rearranges to give
$$g_0 \leq \frac{\varepsilon}{r} + \frac{r \|\bm{\tilde g}\|}{R}.$$

From stationarity, we have $\|\bm{\tilde g}\| = \|\bm{\tilde w}\| + \lambda^+ R$. From complementary slackness, $\|\bm{\tilde w}\| = \frac{\varepsilon}{R} - \frac{r w_0}{R}$. Combining these, $\|\bm{\tilde g}\| = \frac{\varepsilon}{R} - \frac{r w_0}{R} + \lambda^+ R$. Since we have $\lambda^+ = \frac{g_0 - w_0}{r}$ from stationarity, it follows that
$$\|\bm{\tilde g}\| = \frac{\varepsilon}{R} - \frac{r w_0}{R} + \frac{Rg_0}{r} - \frac{Rw_0}{r}.$$
Since $|r w_0| \leq \varepsilon - R \|\bm{\tilde w}\|$ from primal feasibility, certainly $\varepsilon - R \|\bm{\tilde w}\| \geq 0$, so from slackness we have $w_0 \geq 0$. Combining this with the above inequality,
$$\|\bm{\tilde g}\| \leq \frac{\varepsilon}{R} + \frac{R g_0}{r},$$
which rearranges to
$$g_0 \geq \frac{r\|\bm{\tilde g}\|}{R} - \frac{r\varepsilon}{R^2}.$$
Combining all these results, we arrive at
\begin{equation}\tag{C4}
    \max\left\{\frac{\varepsilon}{r} - \frac{R \|\bm{\tilde g}\|}{r}, \frac{r \|\bm{\tilde g}\|}{R} - \frac{r\varepsilon}{R^2}\right\} \leq g_0 \leq \frac{\varepsilon}{r} + \frac{r\|\bm{\tilde g}\|}{R}.
\end{equation}

From stationarity, $\bm{\tilde g} - \bm{\tilde w} = \lambda^+ R \frac{\bm{\tilde w}}{\|\bm{\tilde w}\|}$, so $\|\bm{\tilde g} - \bm{\tilde w}\|^2 = (\lambda^+)^2 R^2$. Also from stationarity, $g_0 - w_0 = \lambda^+ r$, so $(g_0 - w_0)^2 = (\lambda^+)^2 r^2$. Combining this with our equation for $\lambda^+$ above, we get
\begin{equation}\tag{F4}
f_\tau(\bm{g}) = \tau^2 \frac{(R\|\bm{\tilde g}\| - \varepsilon + r g_0)^2}{R^2+r^2}.
\end{equation}

\vspace{1em}
\noindent \emph{Case 5: \enquote{Touching (-) regime} when $\lambda^+ = 0, \lambda^- > 0, \bm{\tilde w} \neq 0$}
\vspace{1em}

\noindent
In this regime, stationarity gives $g_0 = w_0 - \lambda^- r \leq w_0$. Complementary slackness gives $w_0 = -\frac{\varepsilon}{r} + \frac{R \|\bm{\tilde w}\|}{r}$. From stationarity, $\|\bm{\tilde w}\| = \|\bm{\tilde g}\| - \lambda^- R \leq \|\bm{\tilde g}\|$. Combining these, we get
$$g_0 \leq -\frac{\varepsilon}{r} + \frac{R \|\bm{\tilde g}\|}{r}.$$

Stationarity also gives $R \|\bm{\tilde g}\| = R \|\bm{\tilde w}\| + \lambda^- R^2$. From complementary slackness, we also know $R \|\bm{\tilde w}\| = \varepsilon + r w_0$. Combining these, $R \|\bm{\tilde g}\| = \varepsilon + r w_0 + \lambda^+ R^2$. From stationarity, we know $rw_0 = r g_0 + \lambda^- r^2$. Together this yields $R \|\bm{\tilde g}\| = \varepsilon + r g_0 + \lambda^- [r^2 + R^2]$. We can then solve for $\lambda^-$:
$$\lambda^- = \frac{R \|\bm{\tilde g}\| - \varepsilon - r g_0}{R^2 + r^2}.$$
From stationarity, we know $\lambda^- = \frac{\|\bm{\tilde g}\|}{R} - \frac{\|\bm{\tilde w}\|}{R} \leq \frac{\|\bm{\tilde g}\|}{R}$. Thus,
$$\lambda^- = \frac{R \|\bm{\tilde g}\| - \varepsilon - r g_0}{R^2 + r^2} \leq \frac{\|\bm{\tilde g}\|}{R}.$$
This inequality rearranges to give
$$g_0 \geq -\frac{\varepsilon}{r} - \frac{r \|\bm{\tilde g}\|}{R}.$$

From stationarity, we have $\|\bm{\tilde g}\| = \|\bm{\tilde w}\| + \lambda^- R$. From complementary slackness, $\|\bm{\tilde w}\| = \frac{\varepsilon}{R} + \frac{r w_0}{R}$. Combining these, $\|\bm{\tilde g}\| = \frac{\varepsilon}{R} + \frac{r w_0}{R} + \lambda^- R$. Since we have $\lambda^- = \frac{-g_0 + w_0}{r}$ from stationarity, it follows that
$$\|\bm{\tilde g}\| = \frac{\varepsilon}{R} + \frac{r w_0}{R} - \frac{Rg_0}{r} + \frac{Rw_0}{r}.$$
Since $|r w_0| \leq \varepsilon - R \|\bm{\tilde w}\|$ from primal feasibility, certainly $\varepsilon - R \|\bm{\tilde w}\| \geq 0$, so from slackness we have $w_0 \leq 0$. Combining this with the above inequality,
$$\|\bm{\tilde g}\| \leq \frac{\varepsilon}{R} - \frac{R g_0}{r},$$
which rearranges to
$$g_0 \leq -\frac{r\|\bm{\tilde g}\|}{R} + \frac{r\varepsilon}{R^2}.$$
Combining all these results, we arrive at
\begin{equation}\tag{C5}
    -\frac{\varepsilon}{r} - \frac{r\|\bm{\tilde g}\|}{R} \leq g_0 \leq \min\left\{-\frac{\varepsilon}{r} + \frac{R \|\bm{\tilde g}\|}{r}, -\frac{r \|\bm{\tilde g}\|}{R} + \frac{r\varepsilon}{R^2}\right\}.
\end{equation}

From stationarity, $\bm{\tilde g} - \bm{\tilde w} = \lambda^- R \frac{\bm{\tilde w}}{\|\bm{\tilde w}\|}$, so $\|\bm{\tilde g} - \bm{\tilde w}\|^2 = (\lambda^-)^2 R^2$. Also from stationarity, $g_0 - w_0 = -\lambda^- r$, so $(g_0 - w_0)^2 = (\lambda^-)^2 r^2$. Combining this with our equation for $\lambda^-$ above, we get
\begin{equation}\tag{F5}
f_\tau(\bm{g}) = \tau^2 \frac{(R\|\bm{\tilde g}\| - \varepsilon - r g_0)^2}{R^2+r^2}.
\end{equation}

\vspace{1em}
\noindent \emph{Case 6: \enquote{Sandwiched regime} when $\lambda^+ > 0, \lambda^- > 0$}
\vspace{1em}

\noindent
In this regime, complementary slackness implies \emph{both} $rw_0 = \varepsilon - R \|\bm{\tilde w}\|$ and $-rw_0 = \varepsilon - R \|\bm{\tilde w}\|$. Thus, $w_0 = 0$ and $\|\bm{\tilde w}\| = \varepsilon/R$. Since $\bm{\tilde w} \neq 0$, stationarity gives $\|\bm{\tilde g}\| = \|\bm{\tilde w}\| + (\lambda^+ + \lambda^-) R$. Since $\|\bm{\tilde w}\| = \varepsilon/r$, this implies $(\lambda^+ + \lambda^-)R = \|\bm{\tilde g}\| - \varepsilon/R$. From stationarity again, we have $g_0 = w_0 + (\lambda^+ - \lambda^-) r$, and since $w_0 = 0$, we have $|g_0| = |\lambda^+ - \lambda^-| r \leq (\lambda^+ + \lambda^-) r$. Combining these,

\begin{equation}\tag{C6}
    |g_0|  \leq \frac{r \|\bm{\tilde g}\|}{R} - \frac{r\varepsilon}{R^2}.
\end{equation}

Since $w_0 = 0$, $(g_0 - w_0)^2 = g_0^2$. From stationarity, we have $\|\bm{\tilde g} - \bm{\tilde w}\|^2 = (\lambda^+ + \lambda^-)^2 R^2$. We showed above that $(\lambda^+ + \lambda^-)R = \|\bm{\tilde g}\| - \varepsilon/R$, so
\begin{equation}\tag{F6}
    f_\tau(\bm{g}) = \tau^2 g_0^2 + \tau^2 \left(\|\bm{\tilde g}\| - \frac{\varepsilon}{R}\right)^2.
\end{equation}

\noindent
Finally, observe that equations C1-C6 are mutually exclusive (ignoring overlap of probability zero). Thus, conditions C1-C6 completely characterize the six regimes we identified. Using the explicit formulae for $f_\tau(\bm{g})$ in terms of $g_0$ and $\|\bm{\tilde g}\|$ we derived above for each regime, we can write the average $\mathbb{E}_{\bm{g}} f_\tau(\bm{g})$ as:
\begin{align*}
    \mathbb{E}_G f_\tau(\bm{g}) &= \mathbb{E} \left[ 0 \middle\vert |g_0| \leq \frac{\varepsilon}{r} - \frac{R \|\bm{\tilde G}\|}{r} \right] \\
    &+ \mathbb{E} \left[ \tau^2\left(g_0 - \frac{\varepsilon}{r}\right)^2 + \tau^2\|\bm{\tilde g}\|^2 \middle\vert g_0 \geq \frac{\varepsilon}{r} + \frac{r \|\bm{\tilde g}\|}{R} \right] \\
    &+ \mathbb{E} \left[ \tau^2\left(g_0 + \frac{\varepsilon}{r}\right)^2 + \tau^2\|\bm{\tilde G}\|^2 \middle\vert g_0 \leq -\frac{\varepsilon}{r} - \frac{r \|\bm{\tilde g}\|}{R} \right] \\
    &+ \mathbb{E} \left[ \tau^2 \frac{(R\|\bm{\tilde g}\| - \varepsilon + r g_0)^2}{R^2+r^2} \middle\vert \max\left\{\frac{\varepsilon}{r} - \frac{R \|\bm{\tilde g}\|}{r}, \frac{r \|\bm{\tilde g}\|}{R} - \frac{r\varepsilon}{R^2}\right\} \leq g_0 \leq \frac{\varepsilon}{r} + \frac{r\|\bm{\tilde g}\|}{R} \right] \\
    &+ \mathbb{E} \left[ \tau^2 \frac{(R\|\bm{\tilde g}\| - \varepsilon - r g_0)^2}{R^2+r^2} \middle\vert -\frac{\varepsilon}{r} - \frac{r\|\bm{\tilde g}\|}{R} \leq g_0 \leq \min\left\{-\frac{\varepsilon}{r} + \frac{R \|\bm{\tilde g}\|}{r}, -\frac{r \|\bm{\tilde g}\|}{R} + \frac{r\varepsilon}{R^2}\right\} \right] \\
    &+ \mathbb{E} \left[ \tau^2 g_0^2 + \tau^2 \left(\|\bm{\tilde g}\| - \frac{\varepsilon}{R}\right)^2 \middle\vert |g_0|  \leq \frac{r \|\bm{\tilde g}\|}{R} - \frac{r\varepsilon}{R^2} \right]. \\
\end{align*}
Noting that $\tau\|\bm{\tilde g}\| \sim \chi(D)$ and $g_0/\sqrt{\frac{1}{\tau^2}+\frac{\sigma^2}{r^2}} \sim \mathcal{N}(0, 1)$, we can rewrite these expectations as integrals:
\begin{align*}
    \mathbb{E}_{\bm{g}} f_\tau(\bm{g}) &= 2\int_0^\infty dx \chi_D(x) \int_{\frac{r^2 x / R + \varepsilon \tau}{\sqrt{r^2+\sigma^2\tau^2}}}^\infty \mathcal{D}z \; \left(\sqrt{1+\frac{\sigma^2\tau^2}{r^2}} z - \frac{\varepsilon\tau}{r}\right)^2 + x^2 \\
    &+ 2 \int_0^\infty dx \chi_D(x) \int_{\max\left\{\frac{\varepsilon\tau - R x}{\sqrt{r^2+\sigma^2\tau^2}}, \frac{r^2 x/R - r^2 \varepsilon \tau / R^2}{\sqrt{r^2+\sigma^2\tau^2}}\right\}}^{\frac{r^2 x/R + \varepsilon\tau}{\sqrt{r^2+\sigma^2\tau^2}}} \mathcal{D}z\;  \frac{(R x - \varepsilon \tau + \sqrt{r^2+\sigma^2\tau^2} z)^2}{R^2+r^2} \\
    &+ 2\int_0^\infty dx \chi_D(x) \int_{0}^{\max\left\{0,\frac{r^2 x / R - r^2 \varepsilon \tau/R^2}{\sqrt{r^2+\sigma^2\tau^2}}\right\}} \mathcal{D}z \; \sqrt{1+\frac{\sigma^2\tau^2}{r^2}} z^2 + \left(x - \frac{\varepsilon\tau}{R}\right)^2
\end{align*}
where
$$\chi_D(x) = \frac{x^{D-1}e^{-x^2/2}}{2^{D/2-1}\Gamma(D/2)} \mathbb{I}(x \geq 0)$$
is the PDF of the $\chi$ distribution with $D$ degrees of freedom and $\mathcal{D}z$ denotes integration with respect to the standard Gaussian measure on $z$. We have also made use of some symmetry in the problem to combine the embedded (+) and embedded (-), as well as the touching (+) and touching (-) regimes. Finally, we can compute the capacity of our uncorrelated spheres model via:
$$\alpha^{-1}(\varepsilon) = \min_{\tau \geq 0} \mathbb{E}_{\bm{g}} f_\tau(\bm{g}).$$
The inner integrals have closed forms in terms of standard functions, but the outer integrals must be evaluated numerically. In particular, we can write
$$\alpha^{-1}(\varepsilon) = 2 \min_{\tau \geq 0} \int_{0}^\infty dx \; \chi_D(x) g(x, \tau, \varepsilon),$$
where
\begin{equation}\label{eq:uncorr-sphere-cap}
\begin{aligned}
    g(x, \tau, \varepsilon) &= h\left(\sqrt{1+\frac{\sigma^2\tau^2}{r^2}}, \frac{\varepsilon\tau}{r}, \frac{r^2x/R+\varepsilon\tau}{\sqrt{r^2+\sigma^2\tau^2}}\right) \\
    &+ \frac{1}{2} x^2 \operatorname{erfc}\left(\frac{r^2x/R+\varepsilon\tau}{\sqrt{2(r^2+\sigma^2\tau^2)}}\right) \\
    &+ h\left(\frac{\sqrt{r^2+\sigma^2\tau^2}}{\sqrt{R^2+r^2}}, \frac{\varepsilon\tau - R x}{\sqrt{R^2+r^2}}, \max\left\{\frac{\varepsilon\tau - R x}{\sqrt{r^2+\sigma^2\tau^2}}, \frac{r^2 x/R - r^2 \varepsilon \tau / R^2}{\sqrt{r^2+\sigma^2\tau^2}}\right\}\right) \\
    &- h\left(\frac{\sqrt{r^2+\sigma^2\tau^2}}{\sqrt{R^2+r^2}}, \frac{\varepsilon\tau - R x}{\sqrt{R^2+r^2}}, \frac{r^2x/R+\varepsilon\tau}{\sqrt{r^2+\sigma^2\tau^2}}\right) \\
    &+ h\left(\sqrt{1+\frac{\sigma^2\tau^2}{r^2}}, 0, 0\right) \\
    &- h\left(\sqrt{1+\frac{\sigma^2\tau^2}{r^2}}, 0, \max\left\{ 0, \frac{r^2x/R-r^2\varepsilon\tau/R^2}{\sqrt{r^2+\sigma^2\tau^2}} \right\}\right) \\
    &+ \frac{1}{2} \left(x - \frac{\varepsilon\tau}{R}\right)^2 \operatorname{erf}\left(\max\left\{0, \frac{r^2x/R-r^2\varepsilon\tau/R^2}{\sqrt{2(r^2+\sigma^2\tau^2)}}\right\}\right).
\end{aligned}
\end{equation}
and we define
\begin{align*}
    h(a, b, c) &= \int_{c}^\infty \mathcal{D} z (az - b)^2 \\
    &= \frac{a(ac - 2b)}{\sqrt{2\pi}} \exp\left(-\frac{c^2}{2}\right) + \frac{a^2+b^2}{2} \operatorname{erfc}\left(\frac{c}{\sqrt{2}}\right).
\end{align*}
We interpret this formula and numerically validate it in the main text.

\subsection{Correlated Spheres Reduce to Uncorrelated Spheres}\label{sec:SI corrsphere}
Next, we introduce uniform correlations to our spheres model. As in our correlated points model, we have parameters $\psi \in [0, 1)$ and $\rho \in [0, 1)$ that control the correlations among the sphere centers and labels, respectively. Additionally, we introduce another correlation $\gamma \in [0, 1)$ among the \emph{axes} along which the spheres are embedded into $\mathbb{R}^N$. 

\subsubsection{Our toy model for correlated spheres and its regression capacity}
In the notation of Theorem 5, we define the data correlation tensor by
$$\Sigma_{\mu i}^{\nu j} = \begin{cases}
    r^2[(1-\psi)\delta_{\mu\nu}+\psi] & \textrm{if } i = j = 0, \\
    R^2[(1-\gamma)\delta_{\mu\nu}+\gamma] & \textrm{if } i = j > 0
\end{cases}$$
and the label correlation matrix by
$$M_{\mu\nu} = \sigma^2[(1-\rho)\delta_{\mu\nu}+\rho].$$

Here, we derive a simple capacity formula for the toy model described above by reducing the capacity formula for a correlated point model to an equivalent uncorrelated point model.
\begin{theorem}[Formula for correlated spheres]\label{thm:corelated spheres}
Given the model for correlated spheres described above. Let $\alpha^{-1}(\varepsilon, r,R,D,\sigma,\rho,\psi,\gamma,P)$ be the capacity formula from~\autoref{corrcap} for this model, we have that $\alpha^{-1}(\varepsilon,r,R,D,\sigma,\rho,\psi,\gamma,P)\xrightarrow{P \to \infty}\alpha^{-1}(\varepsilon,r,R,D,\sigma,\rho,\psi,\gamma)$ with
\begin{equation}\label{eq:SI corr-spheres-cap}
\alpha^{-1}(\varepsilon,r,R,D,\sigma,\rho,\psi,\gamma) = \alpha^{-1}(\varepsilon,r\sqrt{1-\psi},R\sqrt{\gamma},D,\sigma\sqrt{1-\rho}) \, .
\end{equation}
\end{theorem}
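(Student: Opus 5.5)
The plan is to repeat the reduction used for \autoref{thm:corelated points}, now applied blockwise to the $(D+1)$ coordinates of each manifold, and then to invoke the factorized computation of \autoref{thm:uncorelated spheres}. Starting from \autoref{corrcap} with random labels (as in \autoref{sec:SI random label}), I write the label-conditioned estimator as
\[
\alpha^{-1}(\varepsilon,P)=\min_{\tau\geq 0}\frac{1}{P}\mathbb{E}_{\bm t,\bm y}\min_{\bm u\in\mathcal{B}(\bm y)}\big\|\bm t-\tau\Sigma^{-1/2}\bm u\big\|^2 ,
\]
where $\mathcal{B}(\bm y)$ is the sphere constraint set from the uncorrelated-spheres derivation. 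Since $\Sigma$ is block diagonal in the intrinsic index $i$, the objective splits into a center block ($i=0$) and $D$ axis blocks. I shift $\bm v_0=\bm u_0-\bm y$ and absorb $\bm y$ into the Gaussian, replacing $\bm t_0$ by $\Omega^{1/2}\bm z_0$ with $\Omega=I+\tau^2\Sigma_{00}^{-1/2}M\Sigma_{00}^{-1/2}$, exactly as in the correlated-points case.

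Next I diagonalize each block in the basis $\{\bm 1/\sqrt P\}\oplus\bm 1^\perp$ using $\Pi=I-\bm 1\bm 1^\top/P$. The center block gives the scale factors $r\sqrt{1-\psi}$ on $\Pi$ and $r\sqrt{1-\psi+P\psi}$ on $\bm 1$, and the labels give $\sigma\sqrt{1-\rho}$ and $\sigma\sqrt{1-\rho+P\rho}$. Each axis block $\Sigma_{ii}=R^2[(1-\gamma)I+\gamma\bm 1\bm 1^\top]$ has eigenvalue $R^2(1-\gamma)$ on $\bm 1^\perp$ and $R^2(1-\gamma+P\gamma)$ on $\bm 1$. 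I then bound the $\bm 1$-direction contributions, one per block, exactly as in Eq.~\eqref{eq:uncorr_pt_vanishing_term}. Their Gaussian part is $O(1)\cdot\bar z\to 0$. Their constraint part is controlled by $|\bar v_i|\leq\varepsilon/R$ (from feasibility), divided by $\sqrt{P}$. So these terms vanish in probability regardless of the minimizer.

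What remains is a field of the form
\[
\frac1P\Big\|\Pi(a\bm z_0-b_0\bm v_0)\Big\|^2+\sum_{i=1}^D\frac1P\Big\|\Pi(\bm z_i-b_i\bm v_i)\Big\|^2 .
\]
I must then show it is asymptotically equal to the fully factorized (no $\Pi$) field of the uncorrelated-spheres model with effective parameters. This requires the analogue of \autoref{prop:vbar}: the block averages $\bar v_i$ of the minimizer concentrate at $0$. I would rerun its proof with the KKT system of the cone constraint $r|w_0|\le\varepsilon-R\|\tilde{\bm w}\|$ per manifold. The self-consistent map $\bar{\bm v}\mapsto\frac1P\sum_\mu\operatorname{proj}_{\mathcal{A}_\mu}(\bar{\bm v}+\text{Gaussian})$ is $O(1/\sqrt P)$-Lipschitz in $\bm z$, so \autoref{lem:concentration Lipschitz} applies. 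The contraction bound at $\bar{\bm v}=0$ comes from the sign symmetry $\bm w\mapsto-\bm w$ of each per-manifold set together with projection being $1$-Lipschitz and non-expansive strictly inside. Once $\bar{\bm v}\to 0$, dropping $\Pi$ costs $o(1)$, and the per-manifold terms become i.i.d. The remaining computation is then literally \autoref{thm:uncorelated spheres}, with $r\to r\sqrt{1-\psi}$, $\sigma\to\sigma\sqrt{1-\rho}$, and the axis scale replaced by the effective axis scale, which the statement asserts is $R\sqrt{\gamma}$.

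The main obstacle, and the step I would scrutinize first, is this identification of the effective axis scale. The bulk ($\bm 1^\perp$) eigenvalue of the axis block is $R^2(1-\gamma)$, which naturally yields $R\sqrt{1-\gamma}$. To land on $R\sqrt{\gamma}$ as written, I would need to check how the paper's parametrization of $\Sigma_{\mu i}^{\nu i}$ enters $\Sigma^{-1/2}$ in the constraint (the $(\Sigma^{1/2})^\mu$ column block), and verify which eigenvalue multiplies $\Pi\bm v_i$ after the change of variables $w_i=v_i/R$. If the computation consistently produces $\sqrt{1-\gamma}$, as the main-text definition of $R_{\mathrm{equiv}}=R\sqrt{1-\gamma}/(r\sqrt{1-\psi})$ suggests, then the stated $R\sqrt{\gamma}$ is presumably a transcription slip. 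In that case this argument proves the formula with $\sqrt{1-\gamma}$, not the one literally stated. The secondary difficulty is the vector-valued version of \autoref{prop:vbar}: the constraint couples $v_0$ and $\tilde{\bm v}$, so the contraction constant must be obtained from the joint projection onto the per-manifold double cone rather than coordinatewise clipping.
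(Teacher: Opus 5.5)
Your outline matches the paper's derivation step for step: split $\Sigma$ into the center block and $D$ identical axis blocks, shift $\bv_0=\bm u_0-\bm y$ and absorb the labels into $(\Omega^{(0)})^{1/2}$, diagonalize using $\Pi$ and $\mathbf 1\mathbf 1^\top/P$, discard the $\mathbf 1$-direction terms, show that the block means $\bar v_i^*$ of the $\Pi$-field minimizer vanish, then drop $\Pi$ and read off the uncorrelated-spheres formula. Your suspicion about the axis scale is right. The coefficient on $\Pi\bv_i$ is $b_1=\tau/(R\sqrt{1-\gamma})$, and the paper's own proof ends with $R_{\mathrm{equiv}}=R\sqrt{1-\gamma}$, as in the main text. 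So $R\sqrt{\gamma}$ in the statement is a typo.

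The one real divergence is the $\bar v\to0$ step. The paper does not build a vector-valued Proposition~\ref{prop:vbar}. Instead it freezes all coordinate blocks but one at the optimum. Then $\bv_0^*$ minimizes $\|\Pi(a_0\bz_0-b_0\bv_0)\|$ over the box $|v_0^\mu|\le\varepsilon-\|\tilde{\bv}^{*\mu}\|$. Likewise $\bv_i^*$ minimizes $\|\Pi(\bz_i-b_1\bv_i)\|$ over $|v_i^\mu|\le\sqrt{(\varepsilon-|v_0^{*\mu}|)^2-\sum_{j\neq0,i}(v_j^{*\mu})^2}$. These are boxes with per-manifold bounds $\varepsilon^\mu\le\varepsilon$, which is why the proposition allows variable $\varepsilon^\mu$. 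The scalar result, with its explicit $\operatorname{erf}$ contraction factor, then applies block by block.

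This is much cheaper than your route, but it has a gap. The frozen bounds depend on $\bz$ through the other blocks, whereas the proposition's concentration and monotonicity-in-$\varepsilon^\mu$ steps assume fixed bounds. Your joint fixed-point equation for $\bar{\bm v}\in\mathbb{R}^{D+1}$, built from projections onto one $\bz$-independent double cone, avoids this. If you take that route, you need a contraction constant $c<1$, and it does not come from the projection being non-expansive inside the set, where it is the identity. It comes from two facts:
\begin{enumerate}
\item the Jacobian of $\mathrm{proj}_{\mathcal A}$ is symmetric with spectrum in $[0,1]$;
\item this Jacobian vanishes on the two vertex (``embedded'') cones, whose Gaussian mass is bounded below uniformly over the bounded range of $\bar{\bm v}$.
\end{enumerate}
Central symmetry of the double cone supplies the fixed point at $0$. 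As with the paper's proposition, the concentration must also hold uniformly in $\bar{\bm v}$ (e.g.\ by a net argument), since $\bar{\bm v}$ is itself random.

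One caveat applies equally to your argument and to the paper's. Discarding the $\mathbf 1$-direction terms ``regardless of the minimizer'' uses $\tilde a_0=O(1)$ and $\tilde b_0,\tilde b_1=O(P^{-1/2})$. These require $\psi>0$ (resp.\ $\gamma>0$).
\begin{itemize}
\item When $\gamma=0$, or when $\psi=\rho=0$, the leftover term equals the corresponding term of the uncorrelated model and is harmless.
\item When $\psi=0<\rho$, one has $\tilde a_0\sim\tau\sigma\sqrt{\rho P}/r$. Then $\tilde a_0\bar z_0$ converges in law to $\mathcal N(0,\tau^2\sigma^2\rho/r^2)$, and since $|\tilde b_0\bar v_0|\le\tau\varepsilon/r$, the term does not vanish.
\end{itemize}
Intuitively, the shared label component is a random nonzero label mean, and a bias-free readout on uncorrelated centers cannot absorb it. So the claimed equivalence with $\sigma\sqrt{1-\rho}$ breaks down there, and the reduction needs $\psi>0$ whenever $\rho>0$.
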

We interpret this formula and numerically validate it in the main text. The derivation details for~\autoref{thm:corelated spheres} is provided in the following subsubsection for the completeness of presentation.

\subsubsection{Derivation details for~\autoref{thm:corelated spheres}}
For a given label tolerance $\varepsilon > 0$, \autoref{corrcap} (subject to the regularity conditions we prove below) gives us a capacity estimator
$$\alpha^{-1}(\varepsilon, P) = \min_{\tau \geq 0} \mathbb{E}_{\bm t, \bm y} \min_{\bm u \in \mathcal{B}(\bm y)} \frac{1}{P} \|\bm t - \tau\Sigma^{-\frac{1}{2}} \bm u\|^2,$$
where
$$\mathcal{B}(\bm y) = \left\{\bm u \in \mathbb{R}^{P(D+1)} : |u^\mu_0 - \bm y^\mu| \leq \varepsilon - \sqrt{\sum_{i=1}^D \bm (u^\mu_i)^2} \textrm{ for all } \mu \in \{1, \dots, P\}\right\},$$
where we have simplified $\mathcal{B}(\bm y)$ for the case of spherical manifold shape $\mathcal{S}$. Given the block structure of the correlation tensor $\Sigma$, we can write
$$\alpha^{-1}(\varepsilon, P) = \min_{\tau \geq 0} \mathbb{E}_{\bm t, \bm y} \min_{\bm u \in \mathcal{B}(\bm y)} \frac{1}{P} \left[\left\Vert \bm t_0 - \tau (\Sigma^{(0)})^{-\frac{1}{2}} \bm u_0 \right\Vert^2 + \sum_{i=1}^D \left\Vert \bm t_i - \tau (\Sigma^{(1)})^{-\frac{1}{2}} \bm u_i \right\Vert^2\right],$$
where we define $\bm t_i = (\bm t^{\mu}_i)_{\mu=1}^P \in \mathbb{R}^P$ for each $i$ (and similarly for $\bm u$) and $\Sigma^{(0)}_{\mu \nu} = r^2[(1-\psi)\delta_{\mu\nu}+\psi]$ and $\Sigma^{(1)}_{\mu\nu} = R^2[(1-\gamma)\delta_{\mu\nu}+\gamma]$ are the center and axis blocks of the data correlation tensor. As in previous sections, we can define a shifted variable $\bm v \in \mathbb{R}^{P(D+1)}$ by
$$\bm v^\mu_i = \begin{cases}
    \bm u^\mu_0 - \bm y_\mu & \textrm{if } i = 0, \\
    \bm u^\mu_i & \textrm{if } i > 0,
\end{cases}$$
and combine the randomness of $\bm t, \bm y$ into a single standard Gaussian vector $\bm z \in \mathbb{R}^{P(D+1)}$ so that our estimator becomes
$$\alpha^{-1}(\varepsilon, P) = \min_{\tau \geq 0} \mathbb{E}_{\bm t, \bm y} f_\tau(\bm z),$$
where
$$f_\tau(\bm z) = \frac{1}{P}\left[\left\Vert  (\Omega^{(0)})^{1/2} \bm z_0 - \tau (\Sigma^{(0)})^{-1/2} \bm v_0\right\Vert^2 + \sum_{i=1}^D \left\Vert \bm z_i - \tau (\Sigma^{(1)})^{-1/2} \bm v_i\right\Vert^2\right]$$
and $\Omega^{(0)} = I + \tau^2 (\Sigma^{(0)})^{-\frac{1}{2}}H(\Sigma^{(0)})^{-\frac{1}{2}}$.

As in our analysis of the correlated points model, we decompose the matrices $\Sigma^{(0)}, \Omega^{(0)}$, and $\Sigma^{(1)}$ into their eigencomponents, with one eigenspace spanned by $\mathbf{1}/\sqrt{P}$ and the other $(P-1)$-dimensional eigenspace orthogonal to it. Just as in the correlated points case, we will show that in some sense only the $(P-1)$-dimensional eigenspaces matter for computing capacity. Thus, we will be able to replace these correlation matrices with rescaled identity matrices and reduce the asymptotic capacity of this model to that of the uncorrelated spheres model with suitably rescaled parameters. Letting $\Pi = I - \frac{1}{P} \mathbbm{1}\mathbbm{1}^\top \in \mathbb{R}^{P \times P}$, the eigendecompositions of $(\Omega^{(0)})^{1/2}$, $(\Sigma^{(0)})^{-1/2}$, and $(\Sigma^{(1)})^{-1/2}$ are
\begin{align*}
    (\Omega^{(0)})^{1/2} &= \sqrt{1 + \tau^2 \frac{\sigma^2(1-\rho)}{r^2(1-\psi)}} \Pi + \sqrt{1+\tau^2 \frac{\sigma^2(1-\rho + P\rho)}{r^2(1-\psi + P\psi)}} \frac{1}{P} \mathbbm{1}\mathbbm{1}^\top \\
    (\Sigma^{(0)})^{-1/2} &= \frac{1}{r\sqrt{1-\psi}} \Pi + \frac{1}{r\sqrt{1-\psi+P\psi}} \frac{1}{P}\mathbbm{1}\mathbbm{1}^\top \\
    (\Sigma^{(1)})^{-1/2} &= \frac{1}{R\sqrt{1-\gamma}} \Pi + \frac{1}{R\sqrt{1-\gamma+P\gamma}} \frac{1}{P}\mathbbm{1}\mathbbm{1}^\top.
\end{align*}
we can decompose the field $f_\tau(\bm z)$ as
\begin{align*}
    f_\tau(\bm z) &= \min_{\bm v \in \mathcal{B}(0)}\bigg[ \frac{1}{P}\left\Vert \Pi (a_0 \bm z_0 - b_0 \bm v_0)\right\Vert^2 + \frac{1}{P} \left\Vert  \left[\frac{1}{P} \mathbbm{1}\mathbbm{1}^\top\right]( \tilde a_0 \bm z_0 - \tilde b_0 \bm v_0)\right\Vert^2 \\
    &+\sum_{i=1}^D  \frac{1}{P}\left\Vert \Pi (\bm z_i - b_1 \bm v_i)\right\Vert^2 + \sum_{i=1}^D \frac{1}{P} \left\Vert \left[\frac{1}{P} \mathbbm{1}\mathbbm{1}^\top\right]( \bm z_i - \tilde b_1 \bm v_i)\right\Vert^2 \bigg],
\end{align*}
where the coefficients above are defined as
\begin{align*}
    a_0 &= \sqrt{1 + \tau^2 \frac{\sigma^2(1-\rho)}{r^2(1-\psi)}}, \\
    b_0 &= \frac{\tau}{r\sqrt{1-\psi}} \\
    \tilde a_0 &= \sqrt{1+\tau^2 \frac{\sigma^2(1-\rho + P\rho)}{r^2(1-\psi + P\psi)}} \\
    \tilde b_0 &= \frac{\tau}{r\sqrt{1-\psi+P\psi}} \\
    b_1 &= \frac{\tau}{R\sqrt{1-\gamma}} \\
    \tilde b_1 &= \frac{\tau}{R\sqrt{1-\gamma+P\gamma}}.
\end{align*}
We can write
\begin{align*}
     \frac{1}{P} \left\Vert  \left[\frac{1}{P} \mathbbm{1}\mathbbm{1}^\top\right]( \tilde a_0 \bm z_0 - \tilde b_0 \bm v_0)\right\Vert^2  &= \left(\frac{\tilde a_0}{P} \mathbbm{1}^\top \bm z_0 - \frac{\tilde b_0}{P} \mathbbm{1}^\top \bm v_0\right)^2.
\end{align*}
Since $\bm z_0 \in \mathbb{R}^P$ is a standard Gaussian vector and $\tilde a_0 = O(1)$, we have $\tilde a_0 \mathbbm{1}^\top \bm z_0 / P \to 0$ as $P \to \infty$. Because $|v_0^\mu| \leq \varepsilon$ for all $\mu$, $|\mathbbm{1}^\top \bm v_0 / P| \leq \varepsilon$. Since $\tilde b_0 = O(1/\sqrt{P})$, we have $\tilde b_0 \mathbbm{1}^\top \bm v_0 / P \to 0$ also, so this whole term of $f_\tau(\bm z)$ can be eliminated for the purpose of computing the asymptotic capacity of our model. Similarly, for each $i \in \{1, 2, \dots, D\}$, 
\begin{align*}
     \frac{1}{P} \left\Vert \left[\frac{1}{P} \mathbbm{1}\mathbbm{1}^\top\right]( \bm z_i - \tilde b_1 \bm v_i)\right\Vert^2  &= \left(\frac{1}{P} \mathbbm{1}^\top \bm z_i - \frac{\tilde b_1}{P} \mathbbm{1}^\top \bm v_i\right)^2.
\end{align*}
Since $\bm z_i \in \mathbb{R}^P$ is a standard Gaussian vector, we have $\mathbbm{1}^\top \bm z_0 / P \to 0$. Since $|v_i^\mu| \leq \varepsilon$ for all $\mu$ and $\tilde b_1 = O(1/\sqrt{P})$, we also have $\tilde b_0 \mathbbm{1}^\top \bm v_i / P \to 0$, so we can eliminate this term from $f_\tau(\bm z)$ also. For the purpose of computing the asymptotic capacity of our model, we can therefore replace the field $f_\tau(\bm z)$ with a simplified field
$$\tilde f_\tau(\bm z) = \min_{\bm v \in \mathcal{B}(0)}\left[ \frac{1}{P}\left\Vert \Pi (a_0 \bm z_0 - b_0 \bm v_0)\right\Vert^2 + \sum_{i=1}^D  \frac{1}{P}\left\Vert \Pi (\bm z_i - b_1 \bm v_i)\right\Vert^2 \right],$$
just as in our analysis of correlated points. We will show that $\tilde f_\tau(\bm z)$ is asymptotically equivalent to the field $g_\tau(\bm z)$ 
$$g_\tau(\bm z) = \min_{\bm v \in \mathcal{B}(0)}\left[ \frac{1}{P}\left\Vert a_0 \bm z_0 - b_0 \bm v_0\right\Vert^2 + \sum_{i=1}^D  \frac{1}{P}\left\Vert \bm z_i - b_1 \bm v_i\right\Vert^2 \right]$$
that yields the capacity $\alpha_\mathrm{uncorr}^{-1}(\varepsilon, P) = \min_{\tau \geq 0} \mathbb{E}_{\bm z} g_\tau(\bm z)$
for the uncorrelated spheres model from the previous section with rescaled parameters
$$r_\mathrm{equiv} = r\sqrt{1-\psi}, \; \sigma_\mathrm{equiv} = \sigma\sqrt{1-\rho}, \textrm{ and } R_\mathrm{equiv} = R\sqrt{1-\gamma}.$$
The key will be to prove that, if $\bm v^*$ is the minimizer for the optimization defining $\tilde f_\tau(\bm z)$, then $\bar v_i^* = \mathbbm{1}^\top \bm v_i / P \to 0$ as $P \to \infty$ for all $i \in \{0, 1, \dots, D\}$.

Let us first show that $\bar v_0^* \to 0$ as $P \to \infty$: We know $\bm v_0^*$ is a minimizer for the optimization in the conditional field
$$\tilde f_\tau(\bm v_0 | \{\bm v_i^*\}_{i=1}^D) = \min_{(\bm v_0, \bm v_1^*, \dots, \bm v_D^*) \in \mathcal{B}(0)}\left[ \frac{1}{P}\left\Vert \Pi (a_0 \bm z_0 - b_0 \bm v_0)\right\Vert^2 \right] + \sum_{i=1}^D \frac{1}{P}\left\Vert \Pi (\bm z_i - b_1 \bm v_i^*)\right\Vert^2 .$$
Equivalently, $\bm v_0$ is a minimizer of $\|\Pi (a_0 \bm z_0 - b_0 \bm v_0)\|$ subject to
$$|v_0^\mu| \leq \varepsilon - \sqrt{\sum_{i=1}^D ((v^*_i)^\mu)^2} \leq \varepsilon.$$
\autoref{prop:vbar} then yields $\bar v_0^* \to 0$ as $P \to \infty$. To show that $\bar v_i^* \to 0$ with $i > 0$, we can similarly note that $\bm v_i^*$ is a minimizer of $\|\Pi(\bm z_i - b_1 \bm v_i)\|$ subject to
$$|v_i^\mu| \leq \sqrt{(\varepsilon - |(v_0^*)^\mu|)^2 - \sum_{j \neq 0, i} ((v^*_j)^\mu)^2} \leq \varepsilon.$$
Once again, this is in the form required by \autoref{prop:vbar}, which yields $\bar v_i^* \to 0$ as $P \to \infty$. Just as for the correlated points model, we can write
$$g_\tau(\bm z) =  \min_{\bm v \in \mathcal{B}(0)}\left[ \frac{1}{P}\left\Vert \Pi (a_0 \bm z_0 - b_0 \bm v_0)\right\Vert^2 + \sum_{i=1}^D  \frac{1}{P}\left\Vert \Pi (\bm z_i - b_1 \bm v_i)\right\Vert^2 + \left(\frac{\bm 1^\top}{P} (a_0 \bm z_0 - b_0 \bm v_0)\right)^2 + \sum_{i=1}^D \left(\frac{\bm 1^\top}{P}(\bm z_i - b_1 \bm v_i)\right)^2\right],$$
which asymptotically simplifies to
$$g_\tau(\bm z) =  \min_{\bm v \in \mathcal{B}(0)}\left[ \frac{1}{P}\left\Vert \Pi (a_0 \bm z_0 - b_0 \bm v_0)\right\Vert^2 + \sum_{i=1}^D  \frac{1}{P}\left\Vert \Pi (\bm z_i - b_1 \bm v_i)\right\Vert^2 + \left(b_0 \bar v_0\right)^2 + \sum_{i=1}^D \left(b_1 \bar v_i\right)^2\right].$$
Since minimizing the first two terms above automatically implies $\bar v_i \to 0$ for all $i$ as $P \to \infty$, we can drop the last two terms in this limit without affecting the value of $g_\tau(\bm z)$. Dropping the last two terms simply yields $\tilde f_\tau(\bm z)$, proving their asymptotic equivalence.

\subsection{A list of symmetries in the regression capacity for correlated spheres}
\begin{itemize}[itemsep=1mm, parsep=0pt]
    \item $(\varepsilon,\sigma,\tau)\leftrightarrow(a\varepsilon, a\sigma, \tau/a)$: Invariance to multiplying tolerance, label scale, and readout scale (equivalent to \emph{dividing} $\tau$) by some constant. In particular, setting $a = 1/\sigma$ reduces to the $\sigma = 1$ case. 
    This is exact even for finite $P$.
    \item $(r,R,\tau)\leftrightarrow(ar,aR,a\tau)$: Invariance to multiplying center norm and manifold radius and dividing readout scale (equivalent to \emph{multiplying} $\tau$) by some constant. In particular, setting $a = 1/r$ reduces to the $r = 1$ case. This is exact even for finite $P$.
    \item $(r,\psi)\leftrightarrow(r\sqrt{1-\psi},0)$: Adding uniform correlations among manifold centers is asymptotically equivalent to shrinking manifold center norms in the $P \to \infty$ limit.
    \item $(R,\gamma)\leftrightarrow(R\sqrt{1-\gamma},0)$: Adding uniform correlations among manifold axes is asymptotically equivalent to shrinking manifold radii in the $P \to \infty$ limit.
    \item $(\sigma, \rho)\leftrightarrow(\sigma\sqrt{1-\rho},0)$: Adding uniform correlations among labels is asymptotically equivalent to shrinking the label scale in the $P \to \infty$ limit
\end{itemize}

\section{Addition of an adaptive readout bias}\label{sec:SI bias}

When performing regression on real data, we typically want to allow a bias term $b$ in the linear readout so that $\hat y^\mu = w \cdot x^\mu + b$. For simplicity, we have so far ignored this. We will now sketch how to incorporate a bias into the definition \autoref{def:ddcap} and estimator \autoref{def:capformula} for instance-based capacity. We will study the bias-enabled regression capacity for our application the neural data (presented in \autoref{fig:hvm}), since this obviates the need for any data centering pre-processing step and simplifies the analysis.

Let us start by defining bias-enabled simulation capacity (analogous to \autoref{def:ddcap}):

\begin{definition}[Simulation capacity; bias-enabled]\label{def:ddcap-bias}
    Let $P$ manifolds $\{\mathcal{M}^\mu \subseteq \mathbb{R}^N\}_{\mu=1}^P$ and labels $\{y^\mu \in \mathbb{R}\}_{\mu=1}^P$ be given as above. For any tolerance $\varepsilon \geq 0$ and any matrix $M \in \mathbb{R}^{N_\mathrm{proj} \times N}$, we define the following concepts:
    \begin{itemize}
    \item (Admissible regression parameters) Define the set $\mathcal{A}_\mathrm{bias}(M)$ of admissible regression parameters (pairs of weights $
    \bm{w}$ and biases $b$) for the data transformed by $M$:
    $$\mathcal{A}_\mathrm{bias}(M) = \left\{(\bm{w}, b) \in \mathbb{R}^{N_\mathrm{proj}} \times \mathbb{R} : |\bm{w} \cdot (M\bm{x}^\mu) + b - y^\mu| \leq \varepsilon \textrm{ for all } \mu \in \{1, \dots, P\}, \bm{x}^\mu \in \mathcal{M}^\mu\right\}.$$
    \item (Regressible probability) Assume that the original data is regressible within the specified tolerance (i.e., that $\mathcal{A}_\mathrm{bias}(I_N) \neq \varnothing$) and that no constant readout is admissible (i.e., that $(0, b) \not\in \mathcal{A}_\mathrm{bias}(I_N)$ for all $b \in \mathbb{R}$). Then given any integer $N_\mathrm{proj}$ with $0 \leq N_\mathrm{proj} \leq N$, define
    $$p_\mathrm{bias}(N_\mathrm{proj}) = \mathbb{P}_{\Pi^{(N_\mathrm{proj})}}\left( \mathcal{A}_\mathrm{bias}(\Pi^{(N_\mathrm{proj})}) \neq \varnothing \right),$$
    where $\Pi^{(N_\mathrm{proj})} \in \mathbb{R}^{N_\mathrm{proj} \times N}$ is a uniformly random projection matrix.\footnote{By \enquote{uniformly random projection matrix}, we mean that $\Pi^{(N_\mathrm{proj})}$ is the first $N_\mathrm{proj}$ columns of an orthogonal matrix sampled from the normalized Haar measure on $O(N)$.}
    Note that $p_\mathrm{bias}(N_\mathrm{proj})$ is a non-decreasing function with $p_\mathrm{bias}(0) = 0$ and $p_\mathrm{bias}(N) = 1$. 
    \item (Critical dimension) Define the instance-based critical dimension of the dataset $\{\mathcal{M}^\mu\},\{y^\mu\}$ to be
    \[
    N^*_{\simcap,\mathrm{bias}}(\varepsilon) := \min_{p_\mathrm{bias}(N_\proj)\geq0.5}\{N_\proj\}
    \]
    \item (Simulation capacity) Finally, we define the instance-based regression capacity of the dataset $\{\mathcal{M}^\mu\},\{y^\mu\}$ to be 
    \[
    \alpha_{\mathrm{simulation},\mathrm{bias}}(\varepsilon) = \frac{P}{N_{\simcap,\mathrm{bias}}^*(\varepsilon)} \, .
    \]
    \end{itemize}
\end{definition}

This definition is conceptually no different than the bias-free case. We ask how many dimensions $N_\mathrm{proj}$ of our data we need to retain under a random projection so that an admissible linear readout \emph{with some adaptive bias that may differ for each random projection} exists with high probability. Like in our derivation of the bias-free instanced-based capacity estimator, we start by rewriting $p_\mathrm{bias}(N_\mathrm{proj})$ in terms of the probability that a random subspace $D \subseteq \mathbb{R}^N$ of dimension $N_\mathrm{proj}$ intersects some closed, convex cone (in particular, this is analogous to \autoref{lem:prob cone}):

Rather than working directly with $\mathcal{A}_\mathrm{bias}(M)$, the key here is to study the set of weights $\bm{w}$ for which some bias exists to make it into an admissible readout. In particular, we define $\tilde{\mathcal{A}}_\mathrm{bias}(M) = \{\bm{w} \in \mathbb{R}^{N_\mathrm{proj}} : \exists b \in \mathbb{R}, (\bm{w}, b) \in \mathcal{A}_\mathrm{bias}(M)\}$. Now $\mathcal{A}_\mathrm{bias}(M) \neq \varnothing$ is equivalent to $\tilde{\mathcal{A}}_\mathrm{bias}(M) \neq\varnothing$, so we can write
$$p_\mathrm{bias}(N_\mathrm{proj}) = \mathbb{P}_{\Pi^{(N_\mathrm{proj})}}\left(\tilde{\mathcal{A}}_\mathrm{bias}(\Pi^{(N_\mathrm{proj})}) \neq \varnothing\right).$$
In turn, $\tilde{\mathcal{A}}_\mathrm{bias}(\Pi^{(N_\mathrm{proj})}) \neq \varnothing$ is equivalent to $\operatorname{im}(\Pi^{(N_\mathrm{proj})}) \cap \tilde{\mathcal{A}}_\mathrm{bias} \neq \{0\}$, where like before we define $\tilde{\mathcal{A}}_\mathrm{bias} = \tilde{\mathcal{A}}_\mathrm{bias}(I_N)$. Since the subspace $\operatorname{im}(\Pi^{(N_\mathrm{proj})})$ is a uniformly random $N_\mathrm{proj}$-dimensional subspace of $\mathbb{R}^N$, we can replace it with the identically distributed subspace $QD$, where $Q$ is a uniformly random member of $O(N)$ and $D$ is some fixed $N_\mathrm{proj}$-dimensional subspace of $\mathbb{R}^N$. Finally, the condition $\tilde{\mathcal{A}}_\mathrm{bias} \cap QD \neq \{0\}$ is equivalent to $\tilde{\mathcal{A}}_\mathrm{bias}^+ \cap QD \neq \{0\}$, where $\tilde{\mathcal{A}}_\mathrm{bias}^+$ is the cone generated by $\tilde{\mathcal{A}}_\mathrm{bias}$. Thus,
$$p_\mathrm{bias}(N_\mathrm{proj}) = \mathbb{P}_Q\left(\tilde{\mathcal{A}}^+_\mathrm{bias} \cap QD \neq \{0\}\right),$$
and just like in the bias-free case, we have succeeded in rewriting $p_\mathrm{bias}$ as the probability of a closed, convex cone $\tilde{\mathcal{A}}_\mathrm{bias}^+$ having non-trivial intersection with a randomly-oriented subspace. We can therefore invoke the approximate conic kinematic formula \autoref{kinematic} to derive the following bias-enabled capacity estimator (analogous to \autoref{def:capformula}):

\begin{definition}[Capacity formula; bias-enabled]\label{def:capformula-bias}
    Let $P$ manifolds $\{\mathcal{M}^\mu \subseteq \mathbb{R}^N\}_{\mu=1}^P$ and labels $\{y^\mu \in \mathbb{R}\}_{\mu=1}^P$ be given as above. For any tolerance $\varepsilon \geq 0$, define an estimator $N^*_\mathrm{bias}(\varepsilon)$ for the critical dimension $N^*_\mathrm{bias}$ as
    \[
    N^*_\mathrm{bias}(\varepsilon) := \Exp_\bt\left[\min_{\bw^+\in\cA^+_\mathrm{bias}}\|\bt-\bw^+\|_2^2\right] \, ,
    \]
    where $\mathcal{A}_\mathrm{bias} = \mathcal{A}_\mathrm{bias}(I_N)$ is the set of regression weights for the original data \emph{such that some bias exists to make it into an admissible readout} and $S^+ := \{\lambda \mathbf{s} : \mathbf{s} \in S, \lambda \geq 0\}$ denotes the cone generated by a set $S \subseteq \mathbb{R}^N$. Define an estimator $\alpha_\mathrm{bias}(\varepsilon)$ for simulation capacity $\alpha_\simcap(\varepsilon)$ as
    \[
    \alpha_\mathrm{bias}(\varepsilon) := \frac{P}{N_\mathrm{bias}^*(\varepsilon)} \, .
    \]
\end{definition}

One can similarly incorporate an adaptive bias to our mean-field notion of regression capacity (originally defined in \autoref{sec:si mean-field}). For all of the synthetic data models we studied in the previous sections, one can show that the adaptive bias $b$ concentrates to a value of zero and therefore capacity is unaffected by the addition of a bias. This is essentially a consequence of the label distributions we studied being symmetric around 0.